\documentclass[journal,onecolumn]{IEEEtran}
\usepackage{cite}
\usepackage{amsmath,amssymb,amsthm}
\usepackage{mathtools}
\usepackage{mathrsfs}
\usepackage{amsfonts}
\usepackage{graphicx}
\usepackage{url}
\usepackage{color}
\usepackage{setspace}
\usepackage{makecell}
\usepackage{array}
\newcolumntype{L}[1]{>{\raggedright\arraybackslash}p{#1}}
\usepackage[utf8]{inputenc}
\usepackage{booktabs}
\usepackage{algorithm,algorithmic}
\usepackage{tikz}
\usetikzlibrary{arrows.meta,positioning}
\usepackage{hyperref}

\theoremstyle{plain}
\newtheorem{theorem}{Theorem}[section]
\newtheorem{lemma}[theorem]{Lemma}
\newtheorem{corollary}[theorem]{Corollary}
\newtheorem{proposition}[theorem]{Proposition}
\newtheorem{conjecture}[theorem]{Conjecture}
\newtheorem{comptheorem}[theorem]{Computational Theorem}
\theoremstyle{definition}
\newtheorem{definition}[theorem]{Definition}
\theoremstyle{remark}
\newtheorem{example}[theorem]{Example}
\newtheorem{remark}[theorem]{Remark}

\newcommand{\F}{\mathbb{F}}
\newcommand{\Ftwo}{\mathbb{F}_2}
\newcommand{\Ffour}{\mathbb{F}_4}
\DeclareMathOperator{\tr}{tr}
\DeclareMathOperator{\wt}{wt}
\DeclareMathOperator{\swt}{swt}
\DeclareMathOperator{\rank}{rank}
\DeclareMathOperator{\supp}{supp}
\DeclareMathOperator{\rev}{rev}
\DeclareMathOperator{\Sp}{Sp}

\title{Ancestor Lifting of Quasi-Cyclic Additive Codes and Certified Quantum Codes}
\author{Liangdong~Lu, Ran~Xu, Ruipan~Yang, Qiang~Fu%
\thanks{L. Lu is with the Department of Basic Science, Air Force Engineering University, Xi'an, Shaanxi, P. R. China (e-mail: kelinglv@163.com). R. Xu is with the Department of Basic Science, Air Force Engineering University, Xi'an, Shaanxi, P. R. China (e-mail: 2551193818@qq.com). R. Yang is with the Department of Basic Science, Air Force Engineering University, Xi'an, Shaanxi, P. R. China (e-mail: yangruipan@aliyun.com).}}
\date{}

\begin{document}

\pagestyle{plain}
\markboth{}{}

\maketitle

\begin{abstract}
Additive codes over $\Ffour$ that are self-orthogonal under the trace inner product correspond to binary quantum stabilizer codes. We study 1-generator $(1,2)$-quasi-cyclic (QC) additive codes and their exact quantum distances. For composite lengths $n=\ell m$, an ancestor-lifting theorem transports dual words of the folded ancestor to the length-$n$ code. A two-layer decomposition for $\ell=2$ and a CRT splitting for odd $\ell$ yield dimension laws and explain equality, escape, and mixed-branch behavior in the certified examples. A palindromic criterion ensures symplectic self-orthogonality, while reversal pairing and a symplectic Gram rank formula describe broader construction families. Exact distances are established by information-set decoding screening, Brouwer--Zimmermann computation on the additive dual, and independent witness verification. We certify twenty-one record-breaking binary quantum codes across twenty entries of Grassl's code table, with twenty-eight certified members in total. Propagation improves thirty-four further entries, and explicit constructions are supplied for three table-tying GV entries marked Code Missing Construction. Six additional candidates are reported only with rigorous upper bounds and heuristic screening evidence. The results connect divisor-lattice structure in additive quasi-cyclic codes with reproducible certification of exact quantum distances.
\end{abstract}

\begin{IEEEkeywords}
Quantum error-correcting codes, additive codes over GF(4), quasi-cyclic codes, ancestor lifting, minimum-distance certification.
\end{IEEEkeywords}


\section{Introduction}

\IEEEPARstart{Q}{uantum} stabilizer codes $[[n,k,d]]_2$ are additive codes over $\Ffour$ that are self-orthogonal under the trace inner product~\cite{crss}. Improving the minimum distances in Grassl's code table~\cite{grassl} is a longstanding goal of quantum coding theory. Two difficulties dominate this goal. At small lengths, the search space has been thoroughly explored. At large lengths, exact minimum-distance computation becomes expensive. Certification is equally important: explicit logical operators give upper bounds, whereas lower bounds require exhaustive or exact computation. Uncertified candidates can fail under scrutiny, and we analyze two archived examples (Example~\ref{ex:60} in Section~\ref{sec:palin} and Remark~\ref{rem:72candidate}).

The systematic construction of quantum codes from classical cyclic codes began with the quantum BCH codes of Grassl and Beth~\cite{grasslbeth,grasslbeth00} and the quaternary cyclic and constacyclic constructions of Lin~\cite{lin04}. Li and Li gave elementary binary constructions of quantum codes of minimum distance three and four~\cite{lili04tit}. They also showed that binary cyclic codes, including repeated-root codes of even lengths, yield quantum codes through the Calderbank--Shor--Steane (CSS) and Steane constructions~\cite{lili04ijqi}. Aly, Klappenecker, and Sarvepalli determined when a BCH code contains its dual and derived quantum BCH parameters directly from the design parameters~\cite{aly}. These ideas grew into a broad literature on quantum codes from constacyclic and negacyclic codes, represented by the quantum negacyclic families of Kai, Zhu, and Tang~\cite{kai13} and the nonbinary quantum BCH constructions of La Guardia~\cite{laguardia14}, among many others.

A second line of research uses the stabilizer correspondence itself. Additive codes over $\Ffour$ and binary stabilizer codes are two views of the same objects~\cite{crss}, so advances in additive quaternary codes transfer directly to quantum codes. Gulliver and Kim constructed extremal additive self-dual codes over $\Ffour$ from circulant matrices~\cite{gulliverkim04}. Huffman developed the algebraic theory of additive cyclic codes over $\Ffour$~\cite{huffman}, and Bierbrauer established the general theory of additive cyclic codes together with their quantum stabilizer specializations~\cite{bierbrauer07,bierbrauer12}. Shi, Wu, and Sol\'e proved that asymptotically good additive cyclic codes exist~\cite{shiwusole18}. Dastbasteh and Shivji gave a polynomial representation of additive cyclic codes whose nearly self-orthogonal members over $\Ffour$ yield record-breaking binary quantum codes~\cite{dastbasteh25}.

Quasi-cyclic codes are especially fertile for this correspondence. Ezerman, Ling, \"Ozkaya, and Sol\'e applied quantum Construction~X to quasi-cyclic codes over $\Ffour$ and $\F_9$~\cite{ezerman-qc}. Ezerman, Grassl, Ling, \"Ozbudak, and \"Ozkaya characterized nearly self-orthogonal quasi-twisted codes and reported numerous record-breaking quantum codes~\cite{ezerman-nso}. Guan and his collaborators constructed binary quantum codes from quasi-cyclic codes~\cite{guan} and established necessary and sufficient symplectic self-orthogonality conditions for quasi-cyclic codes of even index~\cite{guan-symp}. Related constructions include nearly self-orthogonal codes~\cite{lisonek14,degwekar15}, Construction~X4~\cite{liux4}, multi-twisted codes~\cite{aydin23}, and extended quasi-cyclic codes~\cite{lv21}. On the algebraic side, Yu and Zhu construct quantum codes from group rings~\cite{yuzhu24}, and Benjwal and Bhaintwal analyze duals of 1-generator quasi-cyclic codes~\cite{benjwal24}. Most of these works consider linear codes or larger fields, and exact distance certification is rarely the main objective. Our additive 1-generator quasi-cyclic family connects additive theory with quasi-cyclic construction practice, and our pipeline certifies exact distances.

We report the following contributions, theory first and codes as its application.

1) \emph{Ancestor lifting and the divisor-lattice distance law} (Section~\ref{sec:lift}). For $n = \ell m$, the lift map $L: R_m \to R_n$ transports dual words of the folded ancestor code $\langle(\varphi a, \varphi b)\rangle$ to the length-$n$ code, with symplectic weight multiplied by $\ell$; it yields a logical operator whenever the lifted word lies outside $\mathcal C$ (Theorem~\ref{thm:lift}). This lift repeats ancestor blocks along the divisor lattice of $n$ and is distinct from the homological lifts of CSS codes~\cite{guemard}. As a corollary, divisibility $(x^m-1) \mid b$ forces the dual bound $d(\mathcal C^{\perp_{ts}}) \le n/m$, and yields $d_Q \le n/m$ when the corresponding lift lies outside $\mathcal C$ (Corollary~\ref{cor:div}). For $\ell=2$, a two-layer decomposition isolates an obstruction map $\widehat{R}$ coupling the layers (Theorem~\ref{thm:twolayer}) and yields an exact dimension law and distance formula (Corollary~\ref{cor:dimlaw}); for odd $\ell$, the layers split by CRT without a compatibility condition (Theorem~\ref{thm:crt}). The ancestor relation is verified exactly for the certified codes with a nontrivial fold listed in Table~\ref{tab:ancestor}: nine satisfy $d_Q(n) = (n/m)\, d_Q(m)$ at every divisor level, seven have minimum distance exactly one below the corresponding ancestor-lifting bound, and $[[63,7,14]]_2$ is at least seven below every divisor-lattice lift bound. The certified $[[91,15,17]]_2$ and $[[93,15,17]]_2$ codes attain their distances in the mixed branch below dual-distance lift bounds of $28$ and $21$, respectively (Computational Theorem~\ref{thm:equal}). The same mechanism explains the low-weight cluster observed in the ISD census at $n=153$ (Section~\ref{sec:bimodal}). These results yield a practical \emph{ancestor sieve} (Algorithm~\ref{alg:sieve}).

2) \emph{A certification pipeline} (Section~\ref{sec:pipeline}). This is a three-stage method: ISD screening with early exit, exact Brouwer--Zimmermann computation on the additive dual, and independently re-verified witnesses. It records CPU time and provides scripts, logs, and witnesses for reproducibility. The same workflow also rejects archived candidates that fail the certification checks, illustrating the distinction between heuristic screening and exact certification.

3) \emph{A palindromic quasi-cyclic family} (Section~\ref{sec:palin}). For $(1,2)$-QC additive codes $\langle(g v_1, g v_2)\rangle$ over $R=\Ftwo[x]/(x^n-1)$, palindromic multipliers force the defect symbol $P(x)$, and hence the symplectic Gram matrix of all cyclic shifts, to vanish identically (Theorem~\ref{thm:pal}). The resulting additive code is therefore self-orthogonal in every length. A reversal-pairing analysis (Proposition~\ref{prop:rev}) shows that 1-generator QC codes are self-orthogonal under a conjugate form. An exact symplectic Gram rank formula and subset-sum constraints describe the possible gcd degrees, with complete realization verified for the small lengths examined (Theorem~\ref{thm:spectrum}).

4) {\sloppy\emph{Twenty-one certified record-breaking codes in twenty entries} (Section~\ref{sec:record}), as applications of the theory and the pipeline. The certified set includes $[[73,10,15]]_2$, $[[91,15,17]]_2$, $[[93,15,17]]_2$, $[[93,16,17]]_2$, and $[[60,13,12]]_2$, together with sixteen further codes listed in Table~\ref{tab:certified}. Eighteen are palindromic QC codes; the other three have non-palindromic multipliers but identically vanishing defect symbols, and two of these three come from the directed construction of Proposition~\ref{prop:directed}. Nineteen entries are improved by exactly one and the $[[60,13]]_2$ entry by two, with a second certified member at $d=11$. Propagation improves thirty-four further entries, including the exact-distance chains $[[n,15,17]]_2$ for $92\le n\le100$ and $[[n,14,17]]_2$ for $91\le n\le99$ (the \hyperref[rem:derived]{propagation} and \hyperref[rem:subcode]{subcode constructions}). We also give explicit constructions for the table-tying GV entries $[[101,15,17]]_2$, $[[100,14,17]]_2$, and $[[101,14,17]]_2$, all marked \texttt{Code Missing Construction}. Evidence-level candidates appear in Section~\ref{sec:candidates}.\par}

This paper is organized as follows. Section~\ref{sec:prelim} fixes notation. Sections~\ref{sec:palin}--\ref{sec:lift} develop the family theory, and Sections~\ref{sec:spectrum}--\ref{sec:assembly} present the rank and degree structure of the defect symbol, the directed construction, and the extension/assembly machinery. Section~\ref{sec:pipeline} describes certification, Sections~\ref{sec:record}--\ref{sec:candidates} present the codes, and Section~\ref{sec:concl} concludes with conjectures. Appendices give all construction data, witnesses, and artifacts.

\section{Preliminaries}\label{sec:prelim}

Let $\Ffour = \{0,1,w,w^2\}$, $w^2+w+1=0$. An additive code $\mathcal{C} \subseteq \Ffour^n$ is an $\Ftwo$-linear subspace; a generator $G = A + wB$ has binary symplectic representation $(A|B)$. The trace inner product $\omega(u,v) = \tr(u\bar v)$ is the trace-symplectic form on the binary pair representation. On binary pairs $(a|b)$ and $(c|d)$, it reads $a\cdot d + b\cdot c$. A self-orthogonal $\mathcal{C}$ with $t$ generators yields $[[n, n-t, d_Q]]_2$ with $d_Q = \min\swt(\mathcal{C}^{\perp_{ts}}\setminus\mathcal{C})$, where $\swt(a|b) = |\supp(a)\cup\supp(b)|$. The symplectic Gram matrix $S = AB^{\mathsf T} + BA^{\mathsf T}$ of the generator shifts is alternating of even rank $r$; we call $r$ the \emph{symplectic Gram rank}. Darboux standardization over $\Ftwo$ puts it in the form $\mathrm{diag}(J_2^{r/2},0)$ by pure row operations. We write $\mathcal{C}^{\perp_{ts}}$ for the dual with respect to $\omega$, and call a word of $\mathcal{C}^{\perp_{ts}}$ \emph{pure-$X$} if it has the form $(p \mid 0)$ and \emph{pure-$Z$} if it has the form $(0 \mid q)$. This is unrelated to the pure/impure dichotomy of \cite{crss}, which classifies a code by the minimum weight of its dual.

We work in $R = \Ftwo[x]/(x^n-1)$, with reversal involution $\rho(p)(x) = p(x^{-1})$; $p$ is \emph{palindromic} if $\rho(p) = p$. A $(1,2)$-QC additive code is the cyclic closure $\langle(a,b)\rangle = \{(s a, s b) : s \in R\}$. Best-known distances are taken from Grassl's table \cite{grassl}.

\section{The Palindromic Quasi-Cyclic Family}\label{sec:palin}

\begin{definition}
The \emph{palindromic QC family} at length $n$ consists of the codes $\mathcal{C} = \langle(g v_1, g v_2)\rangle \subseteq R^2$ with $g \mid x^n-1$ and $v_1, v_2 \in R$ palindromic.
\end{definition}

For a $(1,2)$-QC code with generator $(a,b)$, the Gram matrix of the cyclic shifts under $\omega$ is circulant with \emph{defect symbol}
\begin{equation}
P(x) = a(x)\,\rho(b)(x) + \rho(a)(x)\,b(x) \in R.
\end{equation}

\begin{theorem}\label{thm:pal}
If $v_1, v_2$ are palindromic then $P \equiv 0$; hence every member of the palindromic QC family is symplectic self-orthogonal, for every $n$ and every $g \mid x^n-1$.
\end{theorem}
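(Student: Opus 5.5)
We argue directly from the definition of the defect symbol, after first making precise why $P$ governs self-orthogonality. The code $\langle(a,b)\rangle$ is spanned over $\Ftwo$ by the cyclic shifts $(x^i a, x^i b)$, $0\le i<n$. The pairing of shifts $i$ and $j$ is $\omega\bigl((x^ia,x^ib),(x^ja,x^jb)\bigr)=\langle x^i a, x^j b\rangle + \langle x^i b, x^j a\rangle$, where $\langle\cdot,\cdot\rangle$ is the binary dot product on coefficient vectors.

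The first step is the standard identity that, for $p,q\in R$, $\langle x^i p, x^j q\rangle$ is the coefficient of $x^{i-j}$ in $p\,\rho(q)$. Indeed $p\rho(q)=\sum_k\bigl(\sum_t p_t q_{t-k}\bigr)x^k$, and shifting both vectors only moves the index. Applying this to the two terms shows that the Gram entry $(i,j)$ is the coefficient of $x^{i-j}$ in $a\rho(b)+b\rho(a)=P(x)$. This confirms that the Gram matrix is circulant with symbol $P$. It follows that $P\equiv 0$ is equivalent to the Gram matrix of all shifts vanishing, and hence, by bilinearity, to $\mathcal C\subseteq\mathcal C^{\perp_{ts}}$.

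The second step is the computation of $P$ for $a=gv_1$, $b=gv_2$. Since $\rho$ is a ring automorphism of $R$ (reversal $x\mapsto x^{-1}=x^{n-1}$ preserves $x^n-1$), we have $\rho(a)=\rho(g)v_1$ and $\rho(b)=\rho(g)v_2$ when $v_1,v_2$ are palindromic. Therefore
\[
P = g v_1\,\rho(g) v_2 + \rho(g) v_1\, g v_2 = 2\,g\rho(g)v_1v_2 = 0
\]
in characteristic $2$, by commutativity of $R$. This holds for any $g$, so the hypothesis $g\mid x^n-1$ is not even needed, and the conclusion is uniform in $n$.

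The only point requiring care is the index convention in the identity of the first step, namely whether the entry is the coefficient of $x^{i-j}$ or of $x^{j-i}$. This choice is immaterial, because $\rho(P)=P$ always holds, so $P$ is itself palindromic. I therefore expect no real obstacle beyond stating this identity cleanly.
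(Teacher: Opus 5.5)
Your proof is correct and follows the paper's route: the decisive computation $P = g\rho(g)\bigl(v_1\rho(v_2)+\rho(v_1)v_2\bigr) = 2\,g\rho(g)\,v_1v_2 = 0$ is exactly the paper's argument, and your first step re-derives the circulant Gram identity that the paper proves separately as Lemma~\ref{lem:circ}. There is one small slip: $\langle x^i p, x^j q\rangle$ is in fact the coefficient of $x^{j-i}$, not $x^{i-j}$, in $p\,\rho(q)$. This is harmless for exactly the reason you give ($\rho(P)=P$), and you are also right that the hypothesis $g\mid x^n-1$ is never used.
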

\begin{proof}
$P = g\rho(g)\,(v_1 \rho(v_2) + \rho(v_1) v_2) = g\rho(g)\,(v_1 v_2 + v_1 v_2) = 0$ in characteristic $2$, since $\rho(v_i) = v_i$.
\end{proof}

\begin{proposition}\label{prop:dim}
$\dim_{\Ftwo} \langle(a,b)\rangle = n - \deg\gcd(a, b, x^n-1)$; in particular the family member has $k = \deg g$ whenever $\gcd(v_1, v_2, (x^n-1)/g) = 1$.
\end{proposition}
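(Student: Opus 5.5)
The plan is to identify $\langle(a,b)\rangle$ with the image of the $R$-module map
\[
\phi: R \to R^2,\qquad s \mapsto (sa, sb),
\]
and compute its $\Ftwo$-dimension as $n - \dim_{\Ftwo}\ker\phi$. Since $R$ is a principal ideal ring, every ideal of $R$ has the form $(h)$ with $h \mid x^n-1$, and $\dim_{\Ftwo}(h) = n - \deg h$. The same holds for annihilators, so the whole statement reduces to identifying $\ker\phi$ as a principal ideal.

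The kernel is
\[
\ker\phi = \operatorname{Ann}(a)\cap\operatorname{Ann}(b).
\]
The key step is to show that $\ker\phi = \operatorname{Ann}(d)$ with $d = \gcd(a,b,x^n-1)$, and that $\operatorname{Ann}(d) = \bigl((x^n-1)/d\bigr)$. For the second identity, take $s \in R$ and lift it to $\Ftwo[x]$. Then $sd \equiv 0 \pmod{x^n-1}$ iff $x^n-1 \mid sd$ iff $(x^n-1)/d \mid s$, since $d \mid x^n-1$. For the first identity, apply Bézout in the PID $\Ftwo[x]$ to get $d = ua + vb + w(x^n-1)$. Hence in $R$ we have $d = ua + vb$, which lies in the ideal $(a,b)$. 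Conversely, $a$ and $b$ are multiples of $d$ in $R$. So $(a,b) = (d)$ as ideals of $R$, and $s$ annihilates both $a$ and $b$ iff $s$ annihilates $d$. This gives
\[
\dim_{\Ftwo}\ker\phi = n - \deg\bigl((x^n-1)/d\bigr) = \deg d,
\]
and therefore $\dim_{\Ftwo}\langle(a,b)\rangle = n - \deg d$.

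For the family member, set $a = gv_1$ and $b = gv_2$ with $g \mid x^n-1$. Then
\[
\gcd(gv_1, gv_2, x^n-1) = g\cdot\gcd\bigl(v_1, v_2, (x^n-1)/g\bigr),
\]
because $g$ can be factored out of all three arguments. Under the hypothesis this gcd equals $g$, so $\dim = n - \deg g$.

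This exposes a conflict with the statement as written: my computation gives $\dim = n - \deg g$, while the statement says $k = \deg g$. I reconcile these through the quantum parameter from the preliminaries: a self-orthogonal code with $t$ generators yields $k = n - t$. Here the stabilizer dimension is $t = n - \deg g$, so $k = n - t = \deg g$. The code is indeed self-orthogonal by Theorem~\ref{thm:pal}, so this reading is justified.

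The main obstacle is this bookkeeping step, not the algebra. I must check that $t$ means the $\Ftwo$-dimension of $\mathcal C$ rather than the number of shifts. I also need the nonsquarefree case ($n$ even) handled correctly. The Bézout argument in $\Ftwo[x]$ covers that case, since it never uses CRT semisimplicity.
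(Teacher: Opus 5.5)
Your argument is correct and is essentially the paper's proof: the kernel of $s\mapsto(sa,sb)$ is the annihilator ideal generated by $(x^n-1)/\gcd(a,b,x^n-1)$, which has dimension $\deg\gcd(a,b,x^n-1)$. Your Bézout step showing $\operatorname{Ann}(a)\cap\operatorname{Ann}(b)=\operatorname{Ann}(d)$ just makes explicit what the paper leaves implicit, and your reading of the ``in particular'' clause as the quantum parameter $k = n-\dim_{\Ftwo}\mathcal{C} = \deg g$ is the intended one, so there is no conflict with the statement.
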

\begin{proof}
The annihilator of the generator is the ideal $(x^n-1)/\gcd(a,b,x^n-1)$; the map $s \mapsto (sa, sb)$ has kernel of dimension $\deg\gcd(a,b,x^n-1)$.
\end{proof}

\begin{proposition}\label{prop:rev}
Define $\omega_\rho((a|b),(c|d)) = a\cdot \rev(d) + \rev(b)\cdot c$. Then (i) every 1-generator QC code is automatically $\omega_\rho$-self-orthogonal; (ii) the map $T(a|b) = (a \mid \rev(b))$ is invertible and conjugates $\omega_\rho$ to $\omega$; hence an $\omega_\rho$-self-orthogonal additive code $\mathcal{C}$ gives a stabilizer code $T(\mathcal{C})$ of the same length and dimension. Distances are evaluated on $T(\mathcal{C})$ itself, since $T$ permutes only the $Z$-block and the union weight $\swt$ is not preserved in general.
\end{proposition}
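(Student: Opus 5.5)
The plan is to read $\rev$ as the coordinate reversal $\rev(b)_k = b_{-k \bmod n}$, i.e. the coefficient vector of $\rho(b)$. Both parts then become short bilinear-algebra computations; the only real content is a symmetric convolution identity.

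For (i), since $\omega_\rho$ is $\Ftwo$-bilinear and $\langle(a,b)\rangle$ is spanned over $\Ftwo$ by the cyclic shifts $(x^i a \mid x^i b)$, $0\le i<n$, it suffices to check $\omega_\rho$ on pairs of shifts. Take $u=(x^i a\mid x^i b)$ and $v=(x^j a\mid x^j b)$.
\begin{itemize}
\item The first term is $\sum_k (x^i a)_k (x^j b)_{-k}$. This is the coefficient of $x^0$ in $(x^i a)(x^j b)$, which equals the coefficient of $x^{-(i+j)}$ in $a(x)b(x)\in R$.
\item The second term is $\sum_k (x^i b)_{-k}(x^j a)_k$. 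This is the coefficient of $x^0$ in $(x^i b)(x^j a)$, which is the same coefficient of $ab$, because $R$ is commutative and the expression is symmetric in $i,j$.
\end{itemize}
The two terms coincide, so their sum vanishes in characteristic $2$. Hence $\omega_\rho(u,v)=0$ for all pairs of shifts, and by bilinearity $\mathcal C$ is $\omega_\rho$-self-orthogonal. No condition on $a,b$ is needed, in contrast with the defect symbol $P$ of Theorem~\ref{thm:pal}, which involves $a\rho(b)$ rather than $ab$.

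For (ii), $\rev$ is a coordinate permutation of order at most $2$. Therefore $T$ is an $\Ftwo$-linear involution, hence invertible, and it preserves the length $n$ and the $\Ftwo$-dimension of any subspace. Write $Tu=(a\mid \rev(b))$ and $Tv=(c\mid \rev(d))$. Then the formula $\omega((p|q),(r|s))=p\cdot s+q\cdot r$ gives
\[
\omega(Tu,Tv)=a\cdot\rev(d)+\rev(b)\cdot c=\omega_\rho(u,v).
\]
So $T$ carries $\omega_\rho$-orthogonality to $\omega$-orthogonality, and $T(\mathcal C^{\perp_\rho})=T(\mathcal C)^{\perp_{ts}}$. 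Consequently $T(\mathcal C)$ is trace-symplectic self-orthogonal with the same number $t$ of generators, and it yields an $[[n,n-t]]_2$ stabilizer code.

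The final remark follows from the definitions. $\swt(a\mid\rev(b))=|\supp(a)\cup\supp(\rev b)|$ differs in general from $|\supp(a)\cup\supp(b)|$, so distances must be computed on $T(\mathcal C)$ itself. A two-coordinate example with $a=e_0$ and $b=e_1$ would exhibit this, taking $n\ge3$ so that $\rev(e_1)=e_{n-1}\ne e_1$.

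The only delicate step is the index convention for $\rev$. The identity in (i) needs the same reversal in both terms, so that each term is the same coefficient of $ab$. If instead $\rev$ is $k\mapsto n-1-k$, both terms become the coefficient of $x^{n-1-i-j}$ in $ab$, and the argument is unchanged. I would state this explicitly.
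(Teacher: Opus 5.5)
Your argument is correct. For (ii) it is the paper's proof: substitute into $\omega$ to read off $\omega_\rho$, and note that $T$ is an involution.

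For (i) you follow the paper's plan, computing the form on cyclic shifts as polynomial coefficients and extending by bilinearity, but your computation is the one that actually works. The paper writes $\omega_\rho(u,x^iu)=[x^{n-i}](\rho(a)b)+[x^{n-i}](a\rho(b))$. Those two coefficients are really $a\cdot x^ib$ and $b\cdot x^ia$, i.e.\ the ordinary $\omega(u,x^iu)=[x^{-i}]P$, which need not vanish. For $n=3$ and $(a,b)=(1,x)$ that expression equals $1$, while $\omega_\rho(u,xu)=0$.

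Your observation that both $\omega_\rho$-terms equal the same coefficient $[x^{-(i+j)}](ab)$ is what makes them cancel. It also shows cleanly why no palindromy condition is needed, in contrast with $P$.

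Treating arbitrary pairs $(x^iu,x^ju)$ is also the safer route. The form $\omega_\rho$ is not invariant under simultaneous shifts; instead it satisfies $\omega_\rho(x^iu,x^jv)=\omega_\rho(u,x^{i+j}v)$. So the paper's reduction to pairs $(u,x^iu)$ is valid only through this twisted relation, not merely because the form is bilinear and alternating.

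One slip is in your last paragraph. The map $(e_0\mid e_1)\mapsto(e_0\mid\rev(e_1))$ does not change $\swt$: both words have weight $2$, because the $X$- and $Z$-supports are disjoint before and after. To show non-preservation, the overlap of the two supports must change. For example, take $u=(e_k\mid e_k)$ with $k$ not fixed by $\rev$ (say $k=1$ and $n\ge 3$ under your convention $k\mapsto -k$). Then $\swt(u)=1$ but $\swt(Tu)=2$.
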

\begin{proof}
(i) For $u = (a,b)$, $\omega_\rho(u, x^i u) = [x^{n-i}](\rho(a)\,b) + [x^{n-i}](a\,\rho(b)) = 0$ in characteristic $2$, and $\omega_\rho$ is bilinear and alternating. (ii) $\omega(T(a|b), T(c|d)) = a\cdot\rev(d) + \rev(b)\cdot c = \omega_\rho((a|b),(c|d))$, and invertibility is clear.
\end{proof}

\begin{example}\label{ex:60}
An archived $[[60,8,13\text{--}14]]_2$ QC record (generator data reproduced in the artifact package, Appendix~\ref{app:artifacts}) illustrates why self-orthogonality and the choice of symplectic form must be checked before any distance claim. The generator is not self-orthogonal under $\omega$, but it is self-orthogonal under $\omega_\rho$, as Proposition~\ref{prop:rev}(i) requires. Hence $T(\mathcal{C})$, rather than $\mathcal{C}$ itself, is the relevant stabilizer code. Our pipeline (Section~\ref{sec:pipeline}) produced independently re-verified logical operators of weights $12, 11, 10$; in particular, a weight-$10$ logical operator certifies $d_Q \le 10 < 12$ and invalidates the archived record, but does not by itself establish $d_Q = 10$. Pairing conventions and exact certification are both indispensable.
\end{example}

Consistently, in an even-length sweep ($n \in \{42,\dots,90\}$, $\deg g \in \{4,\dots,20\}$) all $9\,620$ randomly generated palindromic candidates passed the self-orthogonality check, leaving a pure distance hunt.

\begin{remark}\label{rem:outside}
The palindromic criterion is sufficient but far from necessary. The self-orthogonal QC codes underlying our evidence-level candidates $[[127,21,\le 21]]_2$ and $[[153,17,\le 27]]_2$ have reduced multipliers $a/g$, $b/g$ that are not palindromic, even up to cyclic shifts (verified exhaustively). They arise instead from the directed construction of Proposition~\ref{prop:directed}, which solves the defect equation $P \equiv 0$ directly.
\end{remark}

\begin{proposition}\label{prop:reversible}
For a palindromic-multiplier code $\mathcal{C} = \langle (g v_1, g v_2) \rangle$ with $\gcd$-exact $g$, the following are equivalent: (i) $\mathcal{C}$ is \emph{reversible}, i.e., closed under $x \mapsto x^{-1}$ (so its automorphism group contains the dihedral group $D_n$); (ii) $g$ is self-reciprocal up to a shift. Otherwise $\mathcal{C}$ pairs with a distinct reversal twin (Proposition~\ref{prop:rev}). Verified: $[[80,7,16]]_2$ is reversible; $[[73,10,15]]_2$ and $[[56,9,12]]_2$ are not.
\end{proposition}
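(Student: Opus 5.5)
The plan is to recast reversibility as an identity between ideals of $R^2$ and then read off condition (ii). The reversal map $\rho$ acts coordinatewise on $R^2$; it is a ring automorphism of $R$ and an involution, and it permutes coordinates. Applying it to $\mathcal{C}=\{(sgv_1,sgv_2)\}$ and using $\rho(v_i)=v_i$ gives
\[
\rho(\mathcal C)=\{(\rho(s)\rho(g)v_1,\ \rho(s)\rho(g)v_2)\}=\langle(\rho(g)v_1,\rho(g)v_2)\rangle .
\]
So $\rho(\mathcal C)$ is again a palindromic-family code, with $g$ replaced by $\rho(g)$. Since $\rho(g)=x^{-\deg g}g^*$, where $g^*$ is the reciprocal polynomial, and $x$ is a unit in $R$, this becomes $\rho(\mathcal C)=\langle(g^*v_1,g^*v_2)\rangle$. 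Moreover $g^*\mid x^n-1$.

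For (ii)$\Rightarrow$(i): if $\rho(g)=x^jg$, then $\rho(\mathcal C)=\langle(x^jgv_1,x^jgv_2)\rangle=\mathcal C$, because multiplying the generator by the unit $x^j$ does not change the cyclic closure. Closure under $\rho$ together with closure under the shift $x$ gives invariance under the group $\langle x,\rho\rangle\cong D_n$.

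For (i)$\Rightarrow$(ii), the idea is to recover $g$ intrinsically from $\mathcal C$. This is where ``$\gcd$-exact'' enters. I read it as $\gcd(v_1,v_2,(x^n-1)/g)=1$, so that Proposition~\ref{prop:dim} gives $\gcd(gv_1,gv_2,x^n-1)=g$. I would then show that
\[
I(\mathcal C)=\{a+b: (a,b)\in\mathcal C\}\quad\text{or better}\quad R\cdot\{\text{coordinates of }\mathcal C\}
\]
equals the ideal $(gv_1,gv_2)R=(\gcd(gv_1,gv_2,x^n-1))=(g)$, which holds because $R$ is a principal ideal ring. This ideal is an invariant of $\mathcal C$, and it is transported by $\rho$: the coordinate ideal of $\rho(\mathcal C)$ is $\rho((g))=(g^*)$. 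If $\rho(\mathcal C)=\mathcal C$ we get $(g)=(g^*)$ in $R$. Both $g$ and $g^*$ are monic divisors of $x^n-1$ of the same degree, and such divisors generate distinct ideals, since the ideal determines its monic generator dividing $x^n-1$. Hence $g^*=g$, which means $\rho(g)=x^{-\deg g}g$, and that is (ii).

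The main obstacle is exactly this identification of the coordinate ideal with $(g)$, i.e.\ justifying that the gcd-exactness hypothesis is what pins $g$ down. Without it, a non-self-reciprocal $g$ could be absorbed into $v_1,v_2$: for example, if $g^*$ divides $gv_1$ and $gv_2$ modulo $x^n-1$, the code could still be reversible. I would first check carefully that the intended hypothesis is $\gcd(gv_1,gv_2,x^n-1)=g$ and state it in that form.

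The remaining claims are quick. In the negated case $g^*\neq g$, the twin $\rho(\mathcal C)=\langle(g^*v_1,g^*v_2)\rangle$ is a distinct code. It is again palindromic-family, hence self-orthogonal by Theorem~\ref{thm:pal}, and equivalent to $\mathcal C$ by a coordinate permutation, which matches the pairing remark. The three verified instances are finite checks that the listed generators' $g$ are or are not self-reciprocal up to shift, and I would cite the artifact data for them rather than reprove them.
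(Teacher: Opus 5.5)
Your proposal is correct and follows the paper's route. Both arguments compute $\rho(\mathcal C)=\langle(g^*v_1,g^*v_2)\rangle$ and use gcd-exactness to reduce reversibility to $(g)=(g^*)$ in $R_n$; your coordinate-ideal invariant makes explicit the step the paper leaves implicit, and it gives equality directly rather than through the paper's containment-plus-dimension count, while uniqueness of the monic divisor generator sharpens the paper's ``up to a unit'' to $g^*=g$. Two small points: discard the first candidate $\{a+b:(a,b)\in\mathcal C\}$, since that set is the ideal $(g(v_1+v_2))$ and can be zero; and for $[[73,10,15]]_2$ the listed degree-$9$ $g$ is not itself gcd-exact (both $v_i$ have an even number of terms, so $\gcd(a,b,x^{73}-1)=(x+1)g$), though the verdict is unchanged because $x+1$ is self-reciprocal.
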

\begin{proof}
Reversal sends $(g v_1, g v_2)$ to a shift of $(g^* v_1, g^* v_2)$ (palindromy of $v_i$), which lies in $\mathcal{C}$ iff $g^* \in (g)$ in $R_n$ (using the gcd-exactness of $g$). Since $g \mid x^n-1$ implies $g^* \mid x^n-1$ with $\deg g^* = \deg g$, the ideals $(g)$ and $(g^*)$ of $R_n$ have the same $\Ftwo$-dimension $n - \deg g$, even when $x^n-1$ has repeated factors; hence the containment $g^* \in (g)$ forces $(g^*) = (g)$, i.e., $g$ is self-reciprocal up to a unit, and palindromy reduces the unit to a shift.
\end{proof}

\subsection{Duality structure of 1-generator QC codes}\label{sec:duality}

The results of this subsection describe the symplectic dual of a 1-generator QC code exactly; they hold for general (not necessarily palindromic) generators.

\begin{proposition}\label{prop:purebch}Let $\mathcal{C} = \langle (a,b) \rangle$ be a 1-generator QC code at length $n$ odd. The pure-$X$ part $\{(p \mid 0) \in \mathcal{C}^{\perp_{ts}}\} = \operatorname{ann}(b^*) \times \{0\}$ and the pure-$Z$ part are cyclic codes whose defining sets are the complements of the root sets of $b^*$ and $a^*$; the BCH bound applies to each section separately. In particular, if both annihilator defining sets contain consecutive runs of length $\ge d_Q$, then \emph{every} minimum-weight logical operator is necessarily mixed.
\end{proposition}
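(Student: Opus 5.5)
The plan is to compute the trace-symplectic dual of $\mathcal{C}=\langle(a,b)\rangle$ directly in the polynomial language, then restrict to pure-$X$ and pure-$Z$ words. Since $\mathcal{C}$ is spanned over $\Ftwo$ by the shifts $x^i(a,b)$, a word $(p\mid q)$ lies in $\mathcal{C}^{\perp_{ts}}$ iff $\omega\bigl((p\mid q),x^i(a\mid b)\bigr)=0$ for all $i$. Using the standard identity $\sum_j u_j v_{j-i} = [x^i]\bigl(u\,\rho(v)\bigr)$ in $R$, this family of conditions reads coefficientwise as the single polynomial equation
\begin{equation*}
p\,\rho(b) + q\,\rho(a) = 0 \quad\text{in } R,
\end{equation*}
which is the same manipulation already used for the defect symbol $P$ and in Proposition~\ref{prop:rev}(i). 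I will write $b^*$ for $\rho(b)$, which agrees with the reciprocal polynomial up to a unit $x^j$ and hence generates the same ideal.

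Setting $q=0$ gives the pure-$X$ part: $(p\mid 0)\in\mathcal{C}^{\perp_{ts}}$ iff $p\,b^*=0$, i.e.\ $p\in\operatorname{ann}(b^*)$. Symmetrically, setting $p=0$ gives the pure-$Z$ part $\{0\}\times\operatorname{ann}(a^*)$. For odd $n$ the ring $R$ is semisimple and $x^n-1$ is squarefree. Hence $\operatorname{ann}(b^*)$ is the cyclic code generated by $(x^n-1)/\gcd(b^*,x^n-1)$. Its zeros, i.e.\ its defining set as a subset of $n$-th roots of unity, are exactly those roots of $x^n-1$ that are \emph{not} roots of $b^*$, which is the complement of the root set of $b^*$. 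The same holds for $a^*$. The BCH bound then applies to each cyclic code separately: a run of $\delta-1$ consecutive exponents in the defining set forces minimum weight at least $\delta$.

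For the final claim, interpret ``run of length $\ge d_Q$'' as guaranteeing BCH designed distance $> d_Q$; this is the delicate step, and I would state the precise off-by-one convention explicitly. Then every nonzero pure-$X$ word of $\mathcal{C}^{\perp_{ts}}$ has $\swt=\wt(p)>d_Q$, and likewise for pure-$Z$ words. A minimum-weight logical operator has weight exactly $d_Q$, so it is neither pure-$X$ nor pure-$Z$. It is also nonzero, since it lies outside $\mathcal{C}$. Hence both of its components $p$ and $q$ are nonzero, i.e.\ it is mixed.

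The main obstacle is bookkeeping rather than depth. First, the inner-product-to-polynomial translation must be made carefully with the reversal $\rho$ and the shift index, so that the annihilator is that of $b^*$ (not $b$) for the pure-$X$ part. Second, the run-length hypothesis must match the BCH bound so that it yields strict inequality $>d_Q$. If the hypothesis only gives weight $\ge d_Q$, the conclusion must be weakened to ``no pure logical operator of weight below $d_Q$'', so I would fix the convention (run length $\ge d_Q$ meaning $d_Q$ consecutive zeros, giving weight $\ge d_Q+1$) at the outset.
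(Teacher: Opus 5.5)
Your proposal is correct and follows the paper's argument. You derive the dual equation $p\,\rho(b)+q\,\rho(a)=0$, restrict it to pure words to obtain $\operatorname{ann}(b^*)$ and $\operatorname{ann}(a^*)$, and then use that $R_n$ is semisimple for odd $n$ to identify the zeros with the non-roots of $b^*$ (respectively $a^*$). Your off-by-one worry resolves exactly as you fixed it: under the standard BCH convention a run of $d_Q$ consecutive zeros forces pure-word weight at least $d_Q+1$, so the ``mixed'' conclusion holds without any weakening.
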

\begin{proof}
$(p \mid 0) \in \mathcal{C}^{\perp_{ts}}$ iff $p \cdot b(x^{-1}) = 0$ in $R_n$; for odd $n$ the ring $R_n$ is semisimple, so this annihilator is the cyclic ideal whose roots are exactly the non-roots of $b^*$.
\end{proof}

\begin{theorem}\label{thm:compdual}
Let $n$ be odd, $\mathcal{C} = \langle(a,b)\rangle \subseteq R_n^2$, and let $x^n - 1 = \prod_j f_j$ be the irreducible factorization, so $R_n \cong \prod_j F_j$ (CRT, $F_j$ fields). Then $\mathcal{C}^{\perp_{ts}} = \prod_j K_j$ where, writing $\alpha_j = a^*|_{F_j}$, $\beta_j = b^*|_{F_j}$:
(i) if $\alpha_j, \beta_j \neq 0$: $K_j = \{(\alpha_j \beta_j^{-1} s,\ s) : s \in F_j\}$, a graph line;
(ii) if $\beta_j = 0 \neq \alpha_j$: $K_j = F_j \times \{0\}$ (pure-$X$);
(iii) if $\alpha_j = 0 \neq \beta_j$: $K_j = \{0\} \times F_j$ (pure-$Z$);
(iv) if $\alpha_j = \beta_j = 0$: $K_j = F_j^2$ (free).
\end{theorem}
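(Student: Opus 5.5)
The plan is to rewrite membership in $\mathcal{C}^{\perp_{ts}}$ as a single polynomial identity in $R_n$, and then split that identity componentwise with the CRT.

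\textbf{Step 1: the dual as the kernel of one linear form.} A pair $(p\mid q)\in R_n^2$ lies in $\mathcal{C}^{\perp_{ts}}$ iff $\omega((p\mid q), x^i(a\mid b))=0$ for all $i$, because $\mathcal{C}$ is $\Ftwo$-spanned by the shifts $x^i(a,b)$. Following the computation in the proof of Proposition~\ref{prop:rev}(i), $p\cdot x^i b + q\cdot x^i a$ is the coefficient of $x^{\pm i}$ in $p\,\rho(b)+q\,\rho(a)$. The sign depends on convention, and it is harmless since all $i$ occur. So the condition for all $i$ is equivalent to
\[ p\,\rho(b) + q\,\rho(a) = 0 \quad\text{in } R_n. \]
Here $\rho(a)=x^{-\deg a}a^*$, and the unit $x^{-\deg}$ does not change which CRT components vanish. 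This is exactly how the paper uses $a^*,b^*$ in Proposition~\ref{prop:purebch}, where $(p\mid 0)$ is dual iff $p\,b(x^{-1})=0$. I will therefore write the equation as $p\beta+q\alpha=0$, with $\alpha,\beta$ denoting $\rho(a),\rho(b)$ up to units, identified with $a^*,b^*$ componentwise.

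\textbf{Step 2: CRT splitting.} Since $n$ is odd, $x^n-1$ is squarefree, and $R_n\cong\prod_j F_j$ as rings. The single equation is therefore equivalent to the family $p_j\beta_j+q_j\alpha_j=0$ in $F_j$, one for each $j$. The solution set is the product over $j$ of the solution sets $K_j\subseteq F_j^2$, so $\mathcal{C}^{\perp_{ts}}=\prod_j K_j$.

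\textbf{Step 3: solve each field equation by cases.} Each $K_j$ is the solution set of the equation $p_j\beta_j=q_j\alpha_j$ over the field $F_j$ (in characteristic $2$ the sign is irrelevant).
\begin{itemize}
\item[(i)] If $\alpha_j\beta_j\ne0$, set $s=p_j$. Then $q_j=\alpha_j^{-1}\beta_j s$, a graph line. The statement lists the slope with $\alpha_j$ and $\beta_j$ placed the other way, so I will recheck the pairing order of $(p\mid q)$ against $(a\mid b)$. I expect that the correct matching of the $X$/$Z$ slots yields $(\alpha_j\beta_j^{-1}s, s)$ with $s=q_j$.
\item[(ii)] If $\beta_j=0\ne\alpha_j$, the equation forces one coordinate to vanish and leaves the other free. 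Consistency with Proposition~\ref{prop:purebch}, where the pure-$X$ part is $\operatorname{ann}(b^*)$, fixes which coordinate is free and gives $F_j\times\{0\}$.
\item[(iii)] The case $\alpha_j=0\ne\beta_j$ is symmetric to (ii) and gives $\{0\}\times F_j$.
\item[(iv)] If $\alpha_j=\beta_j=0$, the equation is vacuous, so $K_j=F_j^2$.
\end{itemize}

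\textbf{Main obstacle.} The only delicate point is bookkeeping in Step 1. One has to get the correct assignment of $\rho(a)$ versus $\rho(b)$ to the $p$ and $q$ slots under $\omega((a|b),(c|d))=a\cdot d+b\cdot c$, and justify replacing $\rho(\cdot)$ by the reciprocal $(\cdot)^*$ up to a unit power of $x$. Once the identity $p\,\rho(b)+q\,\rho(a)=0$ is correctly oriented, and anchored against the pure-$X$ description in Proposition~\ref{prop:purebch}, the rest is routine linear algebra over fields.
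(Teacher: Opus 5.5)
Your route is the paper's: you reduce membership in $\mathcal{C}^{\perp_{ts}}$ to the single identity $p\,b^*+q\,a^*=0$ in $R_n$, split it over the CRT fields, and solve one linear equation per field. Your Step~1 derivation of that identity, via $p\cdot x^i b=[x^i]\,(p\,\rho(b))$, is correct and more explicit than the paper, which simply asserts the defining equation.

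The step to change is the identification ``up to units'' that you single out as the main obstacle. As you propose it, it cannot be justified for case (i). Suppose $a^*,b^*$ denoted reciprocal polynomials. Then $\rho(a)=x^{-\deg a}a^*$ and $\rho(b)=x^{-\deg b}b^*$ are rescaled by \emph{different} units, so the slope of the graph line changes by the factor $x^{\deg b-\deg a}$ restricted to $F_j$, which is generally not $1$. Only the vanishing pattern, i.e.\ cases (ii)--(iv), is insensitive to this.

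Concretely, take $n=3$, $a=1$, $b=x$. The orthogonality conditions read $p_{i+1}=q_i$, so $\mathcal{C}^{\perp_{ts}}=\{(xq,q)\}$ and the line on the $\Ffour$-component is $\{(ws,s)\}$. Reciprocal polynomials $a^*=b^*=1$ would instead predict $\{(s,s)\}$. The theorem holds with ${}^*$ meaning $x\mapsto x^{-1}$ in $R_n$, i.e.\ $a^*=\rho(a)$ exactly. This is the paper's convention: Proposition~\ref{prop:purebch} uses $b(x^{-1})$, and Section~\ref{sec:lift} states it explicitly. With this convention there is no unit to absorb.

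The same convention also removes your remaining hedges. In (i), $\{(s,\alpha_j^{-1}\beta_j s)\}$ and $\{(\alpha_j\beta_j^{-1}t,t)\}$ are the same line (put $t=\alpha_j^{-1}\beta_j s$), so there is no pairing order to recheck. In (ii), $p_j\beta_j+q_j\alpha_j=0$ with $\beta_j=0\neq\alpha_j$ reads $q_j\alpha_j=0$. This gives $q_j=0$ with $p_j$ free directly, without appealing to Proposition~\ref{prop:purebch}.
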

\begin{proof}
Since $n$ is odd, $x^n - 1$ is squarefree and the CRT decomposition $R_n \cong \prod_j F_j$ is a ring isomorphism onto a product of fields; hence the defining equation $p\,b^* + q\,a^* = 0$ of $\mathcal{C}^{\perp_{ts}}$ holds in $R_n$ if and only if $p_j\,\beta_j + q_j\,\alpha_j = 0$ holds in every field component $F_j$. Over a field this is either void (case (iv)) or a single linear constraint: if both $\alpha_j, \beta_j$ are nonzero, then $p_j = q_j\,\alpha_j\beta_j^{-1}$ with $q_j$ free (case (i), the sign being irrelevant in characteristic $2$); if exactly one of $\alpha_j, \beta_j$ vanishes, that coordinate is free and the other is forced to zero (cases (ii), (iii)). Both the equation and the code split over the same CRT components, so $\mathcal{C}^{\perp_{ts}} = \prod_j K_j$.
\end{proof}

\begin{corollary}\label{cor:graph}
Suppose $a^*$ and $b^*$ vanish on exactly the same components: $\gcd(a^*, x^n{-}1) = \gcd(b^*, x^n{-}1) = \tilde g$ (\emph{root-balanced}). With $h = (x^n{-}1)/\tilde g$, $S = \Ftwo[x]/(h)$, and the unit $c := (a^*/\tilde g)(b^*/\tilde g)^{-1} \bmod h$:
\begin{equation}
\mathcal{C}^{\perp_{ts}} = \{ (c\,q + h r,\ q) : q \in R_n,\ r \in R_n \}, \qquad \dim \mathcal{C}^{\perp_{ts}} = n + \deg \tilde g,
\end{equation}
the direct sum of the \emph{unit-ratio graph code} $\langle(c,1)\rangle$ (a 1-generator QC code) and the pure-$X$ annihilator $(h) \times \{0\}$. Both single-component sections equal the cyclic ideal $(h)$, and
\begin{equation}
d(\mathcal{C}^{\perp_{ts}}) = \min\Big\{\, d\big((h)\big),\; \min_{q \neq 0}\; \min_{w \in (h)}\, \big|\supp(q) \cup \supp\big(c\,q + w\big)\big| \,\Big\},
\end{equation}
the first term BCH-bounded (Proposition~\ref{prop:purebch}), the second a coset-minimum problem over the single-component ideal: the inner minimum is essential, since the canonical representative $c\,q \bmod h$ is generally far from optimal. Verified exactly for $[[73,10,15]]_2$: $\deg \tilde g = 10$, $h$ squarefree of degree $63$, the displayed generating set has rank $83 = \dim \mathcal{C}^{\perp_{ts}}$, and $300$ random dual words satisfy the graph law. So does the certified weight-$15$ witness itself: its canonical graph representative has $X$-part Hamming weight $37$ (symplectic weight $44$), while its $(h)$-correction reduces the $X$-part to weight $11$. The minimum word is a \emph{deep coset leader}.
\end{corollary}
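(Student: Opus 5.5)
We derive Corollary~\ref{cor:graph} from Theorem~\ref{thm:compdual} by sorting the CRT components. Root-balance says $\alpha_j=0$ exactly when $\beta_j=0$. Hence cases (ii) and (iii) never occur. The components where $\tilde g$ vanishes (the factors $f_j \mid \tilde g$) are all of type (iv), and the remaining ones (the factors $f_j \mid h$) are all of type (i).

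Since $n$ is odd, $x^n-1=\tilde g h$ with $\gcd(\tilde g,h)=1$. The CRT therefore gives a ring isomorphism $R_n \cong \Ftwo[x]/(\tilde g)\times S$, and the ideal $(h)\subseteq R_n$ is exactly the set of elements supported on the $\tilde g$-components. On the $h$-side, $a^*/\tilde g$ and $b^*/\tilde g$ are units modulo $h$, because no factor of $h$ divides them. So $c$ is a well-defined unit of $S$. Moreover, the componentwise ratio $\alpha_j\beta_j^{-1}$ equals the image of $c$ in $F_j$, since the common factor $\tilde g$ is itself a unit on each $h$-component.

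Next we identify the pairs $(p,q)$ with $p\,b^*+q\,a^*=0$. On the $h$-components this condition is equivalent to $p\equiv c\,q \pmod h$. On the $\tilde g$-components there is no condition. Thus $p-cq$ is exactly the set of elements vanishing on all $h$-components, namely the ideal $(h)$. This yields $\mathcal{C}^{\perp_{ts}}=\{(cq+w,q): q\in R_n,\ w\in(h)\}$, which is the displayed form with $w=hr$.

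For the dimension, count freely: $q$ ranges over all of $R_n$, contributing $n$, and $w$ ranges over $(h)$, of dimension $n-\deg h=\deg\tilde g$. This gives $n+\deg\tilde g$. The sum is direct because $(w,0)=(cq,q)$ forces $q=0$ and then $w=0$. The two single-component sections follow from the same description. The pure-$X$ section is $q=0$, which gives $(h)\times\{0\}$. The pure-$Z$ section requires $cq\in(h)$; since $c$ is a unit modulo $h$, this forces $q\in(h)$, so the section is $\{0\}\times(h)$. This agrees with Proposition~\ref{prop:purebch} under root-balance.

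For the distance formula, split the nonzero words by whether $q=0$. Words with $q=0$ are $(w\mid 0)$ with $w\in(h)$, and their minimum symplectic weight is $d((h))$, which is BCH-bounded by Proposition~\ref{prop:purebch}. Words with $q\ne0$ have symplectic weight $|\supp(q)\cup\supp(cq+w)|$, and minimizing over $w$ gives the inner coset minimum. The remark that the inner minimum is essential is interpretive, not a claim to prove.

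The numerical verifications for $[[73,10,15]]_2$ (rank $83$, the $300$ sampled words, and the witness weights $37$, $44$, $11$) are computational. We would only check consistency: $73+10=83$, and $\deg h=63$.

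The main subtlety is the identification $\ker(S\text{-projection})=(h)$. This uses both that $x^n-1$ is squarefree for odd $n$ and that $\gcd(\tilde g,h)=1$; both are supplied by the hypotheses.
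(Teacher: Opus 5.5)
Your proposal is correct and follows the paper's route, which derives the corollary directly from Theorem~\ref{thm:compdual}: root-balance excludes cases (ii)--(iii), so the $\tilde g$-components are free (type (iv)) and the $h$-components are graph lines with ratio $c$. From this you read off the parametrization $(cq+w,q)$ with $w\in(h)$, the dimension $n+\deg\tilde g$, the two sections $(h)$, and the distance formula by splitting at $q=0$; as you note, the $[[73,10,15]]_2$ data are computational, and only their arithmetic consistency can be checked by hand.
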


\begin{proposition}\label{prop:orderbound}
In the root-balanced setting of Corollary~\ref{cor:graph}, let $c \in S^\times$ have multiplicative order $t$ and put $e_t := \sum_{i=0}^{t-1} c^i$. If $e_t \neq 0$, then $c \cdot e_t = e_t$, and every $q = e_t s$ with $q \neq 0$ gives a ``diagonal'' word $(cq, q) = (q,q)$; consequently
\begin{equation}
d_{\mathrm{mix}} \;\le\; \min_{s:\, e_t s \neq 0}\, \wt_S(e_t\, s).
\end{equation}
If $e_t=0$, this construction gives no nonzero diagonal word and hence no finite bound. In fact, for any $c$ of order $t>1$, $e_t=0$, so the order-based bound is degenerate in every nontrivial case; only $c=1$ gives $e_t\neq0$ (with $t=1$). The componentwise analogue uses only those components with $c_j=1$.
\end{proposition}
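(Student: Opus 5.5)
The argument works inside the ring $S=\Ftwo[x]/(h)$ and then transports the resulting words back to $\mathcal C^{\perp_{ts}}$ through the graph description of Corollary~\ref{cor:graph}.

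\emph{Step 1: the identity $c\,e_t=e_t$.} Since $c\in S^\times$ has order $t$, we compute
\[
c\,e_t=\sum_{i=1}^{t}c^i=e_t-1+c^t=e_t .
\]
This uses only $c^t=1$ and holds in any commutative ring.

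\emph{Step 2: diagonal words are dual words.} Take $q=e_t s$ with $q\neq0$ in $S$, and lift $q$ to a representative in $R_n$. By Step 1, $cq\equiv q \pmod h$. Hence $(q,q)=(cq+hr,\,q)$ for a suitable $r\in R_n$. By Corollary~\ref{cor:graph}, this means $(q,q)\in\mathcal C^{\perp_{ts}}$. Its symplectic weight is $|\supp(q)|$, so it is bounded by the weight of the chosen representative. Minimizing over $s$ then gives the displayed bound on $d_{\mathrm{mix}}$, provided we read $\wt_S$ as the minimum Hamming weight of a lift. I would state that reading explicitly.

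\emph{Step 3: degeneracy for $t>1$.} This is the main point of the statement. From Step 1 we have $(c-1)e_t=0$. Note that $h=(x^n-1)/\tilde g$ divides $x^n-1$ with $n$ odd, so $h$ is squarefree and $S\cong\prod_j F_j$ is a product of fields. In each field component there are two cases.
\begin{itemize}
\item If $c_j\neq1$, then $c_j-1$ is invertible, so $(e_t)_j=0$.
\item If $c_j=1$, then $(e_t)_j=t\cdot1$, which is $t\bmod 2$.
\end{itemize}
So $e_t$ is supported on the components with $c_j=1$, and only when $t$ is odd. The statement's claim is that $e_t=0$ for every $c$ of order $t>1$. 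I expect this to be the main obstacle, because the component analysis above does not give it outright. For example, $t$ could be odd and some $c_j$ could equal $1$ while $c\neq1$; then $(e_t)_j=1$.

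To handle this honestly, I would first check whether the statement intends $t$ to be the order of each component $c_j$. Otherwise I would prove the claim in the form the analysis actually supports: $e_t$ is the idempotent $\sum_{j:\,c_j=1}\varepsilon_j$ when $t$ is odd, and $e_t=0$ when $t$ is even. The last sentence of the statement, that the componentwise analogue uses only the components with $c_j=1$, then follows directly from this component description. The assertion about $c=1$ is immediate: there $t=1$ and $e_1=1\neq0$.
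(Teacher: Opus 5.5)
Your Steps 1 and 2 follow the paper's proof, and Step 2 is more careful than the paper. The paper simply writes $\swt(cq\mid q)=\wt(q)$, although $cq=q$ holds only modulo $h$. One needs exactly your $(h)$-correction from Corollary~\ref{cor:graph} to realize $(\tilde q\mid\tilde q)\in\mathcal C^{\perp_{ts}}$ for a lift $\tilde q\in R_n$, together with your reading of $\wt_S$ as the minimum Hamming weight of a lift.

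On Step 3 you have found an error in the statement, not a gap in your argument, and you can drop the hedge. The paper obtains $e_t=0$ from $e_t=(c^t+1)/(c+1)$, which divides by $c+1$. In $S\cong\prod_j F_j$ that element is invertible only when no component $c_j$ equals $1$, which is precisely the case your analysis isolates. Two remarks make your correction definitive.

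First, $|S^\times|=\prod_j(2^{\deg f_j}-1)$ is odd, so $t$ is always odd. Your even-$t$ branch is therefore vacuous, and in every case $e_t=\sum_{j:\,c_j=1}\varepsilon_j$ (with $\varepsilon_j$ the primitive idempotents of $S$). Hence $e_t\neq0$ if and only if $c+1\notin S^\times$.

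Second, this is not a corner case. Whenever $(x+1)\nmid\tilde g$, the factor $x+1$ of $h$ contributes a component $\Ftwo$ of $S$, on which the unit $c$ is forced to equal $1$. A concrete self-orthogonal instance in the palindromic family is $n=5$, $\mathcal C=\langle(1+x+x^4,\,1)\rangle$. Here $\tilde g=1$, and $c=1+x+x^4$ satisfies $c^2=1+x^2+x^3$ and $c^3=1$, so $t=3$. Yet $e_3=1+x+x^2+x^3+x^4\neq0$, and $(e_3\mid e_3)$ is a nonzero diagonal dual word; it is the $(1,1)$-type word of Corollary~\ref{cor:div} with $m=1$.

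So the proposition should be stated in your idempotent form. The order-based construction is degenerate exactly when $c+1\in S^\times$, as in the examples of Remark~\ref{rem:maxord}. Otherwise $e_t$ is precisely the idempotent of the components with $c_j=1$, so the ``componentwise analogue'' is what the construction already produces rather than a separate refinement.
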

\begin{proof}
$c \cdot e_t = \sum_{i=1}^{t} c^i = e_t$ since $c^t = 1$ in characteristic $2$; then $\swt(cq \mid q) = \wt(q)$. Moreover, for $t>1$, $e_t = (c^t+1)/(c+1) = 0$ because $c^t=1$ and $c\neq1$.
\end{proof}

\begin{theorem}\label{thm:localdual}
Write $n = 2^s n'$ with $n'$ odd and $x^{n'}-1 = \prod_j f_j$ the irreducible factorization, so $R_n \cong \prod_j R_j$ with $R_j := \Ftwo[x]/(f_j^{2^s})$ local Artinian. For $\mathcal{C} = \langle(a,b)\rangle$ write $a^*|_j = f_j^{\alpha_j} u_j$, $b^*|_j = f_j^{\beta_j} v_j$ with $u_j, v_j \in R_j^\times$; interchanging the two coordinates on $R_j$ if necessary, assume $\alpha_j \le \beta_j$. Then $\mathcal{C}^{\perp_{ts}} = \prod_j K_j$ with
\begin{equation}
K_j = \big\{ \big(p,\; c_j\, p + f_j^{\,2^s - \alpha_j} t\big) : p, t \in R_j \big\}, \qquad c_j := f_j^{\beta_j - \alpha_j} v_j u_j^{-1},
\end{equation}
a \emph{nilpotent-sheared graph} of dimension $(2^s + \alpha_j)\deg f_j$, the boundary cases included: $\beta_j = 2^s$ forces the second coordinate into the nilpotent ideal $(f_j^{\,2^s-\alpha_j})$, and $\alpha_j = \beta_j = 2^s$ leaves $K_j = R_j^2$ free. Summing over components,
\begin{equation}
\dim \mathcal{C}^{\perp_{ts}} = n + \sum_j \min(\alpha_j, \beta_j)\deg f_j = n + \deg\gcd(a, b, x^n{-}1).
\end{equation}
Theorem~\ref{thm:compdual} is the case $s = 0$; we call $\mathcal{C}$ \emph{valuation-balanced} when $\alpha_j = \beta_j$ for all $j$.
\end{theorem}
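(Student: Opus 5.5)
The plan follows the proof of Theorem~\ref{thm:compdual}, with fields replaced by local Artinian rings. First, I will write down the defining equation of the dual. Exactly as before, a pair $(p,q)\in R_n^2$ lies in $\mathcal{C}^{\perp_{ts}}$ if and only if
\[
p\,b^* + q\,a^* = 0 \quad\text{in } R_n ,
\]
where the star denotes reversal, possibly up to a shift factor that is a unit. This holds because $\omega$ paired against all cyclic shifts $x^i(a,b)$ gives the coefficients of $p\,\rho(b)+q\,\rho(a)$. Since $x^n-1=(x^{n'}-1)^{2^s}=\prod_j f_j^{2^s}$ with the $f_j$ pairwise coprime, CRT gives the ring isomorphism $R_n\cong\prod_j R_j$. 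The equation therefore splits componentwise, and $\mathcal{C}^{\perp_{ts}}=\prod_j K_j$ with
\[
K_j=\{(p,q)\in R_j^2 : p\,f_j^{\beta_j}v_j + q\,f_j^{\alpha_j}u_j = 0\}.
\]
I will use the fact that $R_j$ is a chain ring: its ideals are exactly $(f_j^{e})$ for $0\le e\le 2^s$, every nonzero element is $f_j^{e}\cdot$unit, and $\dim_{\Ftwo}(f_j^{e})=(2^s-e)\deg f_j$.

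Next I will solve the local equation under the normalization $\alpha_j\le\beta_j$; the swap is harmless because the equation is symmetric in the two coordinates. Multiplying by $u_j^{-1}$, the equation becomes
\[
f_j^{\alpha_j}\bigl(q + f_j^{\beta_j-\alpha_j}v_ju_j^{-1}p\bigr)=0,
\]
which (signs are irrelevant in characteristic~2) is $f_j^{\alpha_j}(q + c_jp)=0$. The key lemma is that the annihilator of $f_j^{\alpha}$ in $R_j$ is exactly $(f_j^{2^s-\alpha})$. The inclusion $\supseteq$ is clear. For $\subseteq$, write $y=f_j^{e}w$ with $w$ a unit; then $f_j^{\alpha+e}w=0$ forces $\alpha+e\ge 2^s$, because $f_j^{m}$ is nonzero for $m<2^s$ in $\Ftwo[x]/(f_j^{2^s})$. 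It follows that $q = c_jp + f_j^{2^s-\alpha_j}t$ with $p,t$ free, which is the stated form of $K_j$.

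For the dimension, the map $(p,t)\mapsto(p,\,c_jp+f_j^{2^s-\alpha_j}t)$ is an isomorphism from $R_j\times(f_j^{2^s-\alpha_j})$ onto $K_j$. Hence
\[
\dim K_j = 2^s\deg f_j + \alpha_j\deg f_j = (2^s+\alpha_j)\deg f_j .
\]
The boundary cases need a separate check. When $\beta_j=2^s$, we get $c_j=0$, so the second coordinate lies in the nilpotent ideal $(f_j^{2^s-\alpha_j})$. When $\alpha_j=\beta_j=2^s$, the annihilator is all of $R_j$ and $K_j=R_j^2$. Here the representation $f_j^{2^s}u_j$ is degenerate, since $a^*|_j=0$, and I will adopt the convention $u_j=1$. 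Summing over $j$ with $\sum_j 2^s\deg f_j=n$ gives $n+\sum_j\min(\alpha_j,\beta_j)\deg f_j$.

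Finally, I will identify this sum with $\deg\gcd(a,b,x^n-1)$. Since reversal maps $f_j$ to its reciprocal $f_{j^*}$, the valuations of $a$ at $f_{j^*}$ equal those of $a^*$ at $f_j$, so the multiset of pairs $(\min(\alpha,\beta),\deg f)$ is preserved. Because $\gcd(a,b,x^n-1)=\prod_j f_j^{\min(v_j(a),v_j(b))}$, with valuations capped at $2^s$, the two sums agree. This also matches Proposition~\ref{prop:dim}, since $\dim\mathcal{C}^{\perp_{ts}}=2n-\dim\mathcal{C}$. I expect the main obstacle to be bookkeeping rather than ideas: making the reversal and shift conventions in $b^*,a^*$ consistent with the precise form of $\omega$, and verifying the chain-ring annihilator lemma in the repeated-root case. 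Setting $s=0$ recovers Theorem~\ref{thm:compdual}.
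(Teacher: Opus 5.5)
Your proposal is correct and follows the paper's proof step for step: CRT reduction, factoring out $f_j^{\alpha_j}$, the identity $\operatorname{ann}(f_j^{\alpha_j})=(f_j^{2^s-\alpha_j})$, the dimension count, and the sum over components. You also fill in two points the paper leaves implicit: the chain-ring proof of the annihilator identity, and the reciprocal-pairing argument that the valuations of $a^*,b^*$ give the same sum as $\deg\gcd(a,b,x^n-1)$.

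Two small slips need fixing. When $\beta_j=2^s$ and $\alpha_j\ge 1$, $c_j=f_j^{2^s-\alpha_j}v_ju_j^{-1}$ is not $0$; it lies in $(f_j^{2^s-\alpha_j})$, which is all the boundary claim needs. In your dimension map, the free parameter should be either $y\in(f_j^{2^s-\alpha_j})$ entering as $c_jp+y$, or $t$ taken modulo $f_j^{\alpha_j}$ entering as $f_j^{2^s-\alpha_j}t$; as written you use both at once.
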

\begin{proof}
The CRT decomposition $R_n \cong \prod_j R_j$ (the $f_j^{2^s}$ pairwise coprime) reduces the defining equation $p\,b^* + q\,a^* = 0$ to each component $R_j$, where, since $\alpha_j \le \beta_j$, it reads $f_j^{\alpha_j}\big(p\, f_j^{\beta_j-\alpha_j} v_j + q\, u_j\big) = 0$; as $u_j$ is a unit and $\operatorname{ann}\big(f_j^{\alpha_j}\big) = \big(f_j^{\,2^s-\alpha_j}\big)$, this is exactly $q = c_j p + f_j^{\,2^s-\alpha_j} t$ with $p$ free and $t$ free modulo $f_j^{\alpha_j}$, giving $\dim K_j = (2^s + \alpha_j)\deg f_j$. The same derivation covers the boundary values verbatim: $\beta_j = 2^s$ gives $c_j \in \big(f_j^{\,2^s-\alpha_j}\big)$, and $\alpha_j = \beta_j = 2^s$ makes the equation vacuous. Summing $\big(2^s + \min(\alpha_j, \beta_j)\big)\deg f_j$ over components telescopes by $\deg\gcd(a, b, x^n{-}1) = \sum_j \min(\alpha_j, \beta_j)\deg f_j$ and $n = 2^s \sum_j \deg f_j$.
\end{proof}

\begin{remark}\label{rem:valuations}
The six even-length certified codes profiled here satisfy the dimension identity, and the sheared-graph law was verified on $200$ random dual words for three of them. Profiles (component: $(\alpha,\beta)$): $[[56,9,12]]_2$: $(x{+}1)^8$: $(5,3)$; $[[62,16,11]]_2$: $(x{+}1)^2$: $(1,2)$; $[[62,17,11]]_2$: valuation-balanced; $[[78,13,14]]_2$: deg-$2$ component $(1,0)$; $[[78,14,14]]_2$: deg-$2$ component $(0,1)$; $[[80,7,16]]_2$: $(x{+}1)^{16}$: $(3,5)$. Unbalanced components force one section of every dual word into a nilpotent ideal: every $[[80,7,16]]_2$ dual word has $Z$-part divisible by $(x{+}1)^2$, and every $[[56,9,12]]_2$ dual word has $X$-part divisible by $(x{+}1)^2$. This structural constraint is invisible at the block level; it is the local counterpart of the two-layer analysis of Section~\ref{sec:lift}.
\end{remark}

\begin{remark}\label{rem:maxord}
The order bound of Proposition~\ref{prop:orderbound} gives no useful upper bound for our best codes: for any $c$ of order $t>1$, the geometric sum $e_t$ vanishes, so the construction yields no nonzero diagonal word. In particular, $\operatorname{ord}(c) = 511 = 2^9-1$ for the certified $[[73,10,15]]_2$ and $\operatorname{ord}(c) = 1023 = 2^{10}-1$ for the certified root-balanced $[[93,16,17]]_2$, maximal in every component field of $S$. This observation concerns only that particular order-based bound and does not exclude other algebraic explanations. The exact distances in this paper are established by direct computation rather than by Proposition~\ref{prop:orderbound}. Root-balance is not automatic either: the certified $[[91,15,17]]_2$ and $[[93,15,17]]_2$ codes have unequal $a^*$, $b^*$ root sets ($\deg 15$ vs.\ $16$, and $18$ vs.\ $15$), so their duals additionally contain the single-component layers of Theorem~\ref{thm:compdual}(ii)--(iii).
\end{remark}

\begin{remark}\label{rem:73mixed}
For the certified $[[73,10,15]]_2$ code, $a^*$ and $b^*$ share the same $10$ roots, so the code is root-balanced. Both single-component sections are the cyclic $[73,10,28]_2$ code. Hence no single-component logical operator has weight below $28>d_Q=15$, and the certified witness is mixed. By Corollary~\ref{cor:graph}, certifying $d_Q$ is equivalent to computing the symplectic distance of the explicit unit-ratio graph code $\langle(c,1)\rangle$ over $S$ in dimension $63$. This example motivates studying symplectic BCH-type bounds for unit-ratio graph codes.
\end{remark}

\section{Ancestor Lifting}\label{sec:lift}

Throughout this section $n = \ell m$ with $m \mid n$. Write $\varphi: R_n \to R_m$, $p \mapsto p \bmod (x^m-1)$ (the \emph{fold}), and $L: R_m \to R_n$, $u \mapsto u \cdot e$ with $e = (x^n-1)/(x^m-1)$ (the \emph{lift}). As vectors, $\varphi$ folds the $\ell$ blocks and $L$ repeats $u$ periodically $\ell$ times.

\begin{remark}\label{rem:positioning}
The CRT decomposition of quasi-cyclic codes along the factors of $x^n - 1$ is classical for \emph{linear} codes over fields~\cite{lally,lingsolle}, and repeated-root cyclic codes ($\gcd(n,q) > 1$) were analyzed in~\cite{castagnoli,vanlint}. The fold $\varphi$ itself, blockwise summation modulo $x^m - 1$, is known from the quasi-cyclic literature~\cite{bcq12,ht15}. Our contribution concerns its action on the trace-symplectic dual (Theorem~\ref{thm:lift}) and the resulting exact distance law. Theorems~\ref{thm:twolayer} and~\ref{thm:crt} have three distinctive features. First, they apply to $\mathbb{F}_4$-\emph{additive} codes, and every certified example lies strictly outside the $\mathrm{GF}(4)$-linear class (the \hyperref[rem:gf4]{GF(4)-linearity discussion}), so the classical linear theory does not apply directly. Second, the $\ell=2$ analysis tracks the \emph{symplectic dual} rather than the code: an obstruction map $\widehat{R}$ couples the layers, a phenomenon absent in the odd-$\ell$ coprime-layer case, where the layers decouple by CRT. Third, the output is an \emph{exact} distance formula (Corollary~\ref{cor:dimlaw}) tied to the certification pipeline, not only a structural decomposition. In the closest recent work~\cite{guan-symp}, necessary and sufficient symplectic self-orthogonality conditions are established for \emph{linear} quasi-cyclic codes of even index, and record-breaking codes are derived from them; our certified codes are $\mathbb{F}_4$-additive and outside that linear class (the \hyperref[rem:gf4]{GF(4)-linearity discussion}), so the two families of constructions are complementary.
\end{remark}

\begin{theorem}\label{thm:lift}
Let $n = \ell m$, $\mathcal{C} = \langle(a,b)\rangle \subseteq R_n^2$, and $\mathcal{C}_m = \langle(\varphi a, \varphi b)\rangle \subseteq R_m^2$. Then:
\begin{enumerate}
\item $L$ is injective, $\swt(L u, L v) = \ell \cdot \swt(u,v)$, and $L$ intertwines the cyclic shift: $L(xu) = x\,L(u)$;
\item $(u,v) \in \mathcal{C}_m^{\perp_{ts}} \implies (Lu, Lv) \in \mathcal{C}^{\perp_{ts}}$;
\item $d(\mathcal{C}^{\perp_{ts}}) \le \ell \cdot d(\mathcal{C}_m^{\perp_{ts}})$, and if a minimum-weight logical operator of $\mathcal{C}_m$ lifts outside $\mathcal{C}$ then $d_Q(\mathcal{C}) \le \ell \cdot d_Q(\mathcal{C}_m)$.
\end{enumerate}
\end{theorem}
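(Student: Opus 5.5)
Part (i) is a direct computation from the definitions. Since $e=\sum_{j=0}^{\ell-1}x^{jm}$, for $u=\sum_{i<m}u_ix^i$ the product $u e$ is already reduced modulo $x^n-1$: its coefficient at position $i+jm$ equals $u_i$ for every $j$. Hence $L$ writes $u$ in each of the $\ell$ blocks. It is therefore injective, and $\supp(Lu)\cup\supp(Lv)$ is the union over the $\ell$ blocks of copies of $\supp(u)\cup\supp(v)$, which gives $\swt(Lu,Lv)=\ell\,\swt(u,v)$. Intertwining follows from $x\,(ue)=(xu)e$ in $R_n$. To make this rigorous we need $L$ to be well defined on $R_m$: if $u\equiv u'\pmod{x^m-1}$, then $(u-u')e$ is a multiple of $(x^m-1)e=x^n-1$. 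This well-definedness is the only point that needs care. It also shows that $L(xu)$ is computed consistently even when $xu$ wraps modulo $x^m-1$.

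For (ii), I would use the polynomial form of duality from the preliminaries and Theorem~\ref{thm:compdual}: $(p,q)\in\mathcal C^{\perp_{ts}}$ iff $p\,\rho(b)+q\,\rho(a)=0$ in $R_n$. This holds because the trace-symplectic products with all shifts $x^i(a,b)$ are the coefficients of $p\rho(b)+q\rho(a)$. The analogous criterion holds in $R_m$ with $\rho_m$ and $\varphi a,\varphi b$. I would then establish two identities. First, $\rho_n(e)=e$ in $R_n$, since $x^{-jm}=x^{(\ell-j)m}$ permutes the summands. Second, for $s\in R_n$ we have $s\,e=L(\varphi s)$, because $e$ is annihilated by $x^m-1$ ($(x^m-1)e=0$ in $R_n$), so multiplying by $e$ only sees $s$ modulo $x^m-1$. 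Folding also commutes with reversal: $\varphi(\rho_n s)=\rho_m(\varphi s)$, since $x^{-1}$ maps to $x^{-1}$ under the ring map $\varphi$. Now take $(u,v)\in\mathcal C_m^{\perp_{ts}}$, so $u\,\rho_m(\varphi b)+v\,\rho_m(\varphi a)=0$ in $R_m$. Choose lifts $\tilde u,\tilde v\in R_n$ with $\varphi\tilde u=u$ and $\varphi\tilde v=v$, so that $Lu=\tilde u e$. Then
\[
Lu\,\rho(b)+Lv\,\rho(a)=e\,\bigl(\tilde u\rho(b)+\tilde v\rho(a)\bigr)=L\bigl(\varphi(\tilde u\rho(b)+\tilde v\rho(a))\bigr)=L\bigl(u\rho_m(\varphi b)+v\rho_m(\varphi a)\bigr)=L(0)=0.
\]
So $(Lu,Lv)\in\mathcal C^{\perp_{ts}}$.

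Part (iii) then follows. Applying (ii) to a minimum-weight nonzero dual word of $\mathcal C_m$, injectivity gives a nonzero word of $\mathcal C^{\perp_{ts}}$, and (i) shows its weight is $\ell\,d(\mathcal C_m^{\perp_{ts}})$. For logical operators, a minimum-weight element of $\mathcal C_m^{\perp_{ts}}\setminus\mathcal C_m$ lifts into $\mathcal C^{\perp_{ts}}$ with weight $\ell\,d_Q(\mathcal C_m)$. Under the stated hypothesis that the lift lies outside $\mathcal C$, it is a logical operator of $\mathcal C$, so $d_Q(\mathcal C)\le\ell\,d_Q(\mathcal C_m)$. The main obstacle is the bookkeeping in (ii), namely that fold, lift and reversal are compatible. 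I expect the identity $se=L(\varphi s)$ to carry the whole argument.
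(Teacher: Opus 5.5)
Your proof is correct, and at its core it uses the same mechanism as the paper: pairing a period-$m$ word with a length-$n$ word collapses to pairing its base block with the fold. The two arguments differ only in formalism.

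The paper checks (2) shift by shift in coordinates. It splits indices as $j=r+sm$, recognizes $\sum_s b_{r+sm+t}$ as a coefficient of $\varphi b$, and concludes orthogonality to $x^{t'}(\varphi a,\varphi b)$ for a suitable $t'$.

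You instead encode all $n$ shift conditions at once in the ring identity $p\,\rho(b)+q\,\rho(a)=0$. You then push it through two structural facts, $se=L(\varphi s)$ and $\varphi\circ\rho_n=\rho_m\circ\varphi$; this is the coordinate-free form of the same block sum. Your version makes the well-definedness of $L$ on $R_m$ explicit and avoids the shift-index bookkeeping, which the paper leaves implicit in ``a suitable shift $t'$''. The cost is that you must first establish the polynomial dual criterion, whereas the paper's proof is a single elementary index computation.

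Two small points. First, rely on your own coefficient computation for that criterion rather than citing Theorem~\ref{thm:compdual}, which is stated only for odd $n$; your computation shows the criterion holds for every $n$, including even lengths such as $56=4\cdot 14$. Second, the identity $\rho_n(e)=e$ you record is never used in your chain of equalities.
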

\begin{proof}
(1) $L(u)$ is the period-$m$ repetition of $u$ ($e = 1 + x^m + \cdots + x^{(\ell-1)m}$), so $L$ is injective; shift-intertwining is commutativity of multiplication in $R_n$; and the weight identity holds because the $\ell$ blocks are disjoint translates.
(2) Writing indices as $j = r + s m$, $r \in [0,m)$, $s \in [0,\ell)$,
\begin{align*}
\omega_n((Lu, Lv), x^t(a,b)) &= \textstyle\sum_j \big(L(u)_j b_{j+t} + L(v)_j a_{j+t}\big)\\
&= \textstyle\sum_r \big(u_r \sum_s b_{r+sm+t} + v_r \sum_s a_{r+sm+t}\big)\\
&= \omega_m\big((u,v),\, x^{t'}(\varphi a, \varphi b)\big) = 0
\end{align*}
for a suitable shift $t'$, since $(u,v)$ is orthogonal to every shift of $(\varphi a, \varphi b)$.
(3) Immediate from (1) and (2).
\end{proof}

\begin{corollary}\label{cor:div}
If $(x^m-1) \mid (\alpha a + \beta b)$ in $R_n$ for some $m \mid n$, $m<n$, and some $(\alpha,\beta) \in \Ftwo^2\setminus\{0\}$, then $d(\mathcal{C}^{\perp_{ts}}) \le n/m$; unless the corresponding lift lies in $\mathcal{C}$, $d_Q \le n/m$.
\end{corollary}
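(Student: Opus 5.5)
The plan is to produce an explicit dual word of the folded ancestor code $\mathcal{C}_m$ of symplectic weight $1$, and then transport it with Theorem~\ref{thm:lift}. Write $A = \varphi a$ and $B = \varphi b$ in $R_m$. The first step translates the hypothesis to the fold. Since $\varphi$ is reduction modulo $x^m-1$ and is an $\Ftwo$-algebra homomorphism $R_n \to R_m$ (because $m \mid n$), the divisibility $(x^m-1) \mid (\alpha a + \beta b)$ says exactly that $\alpha A + \beta B = 0$ in $R_m$. Equivalently, $\alpha A_t + \beta B_t = 0$ for every coefficient index $t \in [0,m)$.

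The second step is the candidate word: take $(u \mid v) = (\beta \cdot 1 \mid \alpha \cdot 1) \in R_m^2$, i.e.\ constants placed at coordinate $0$. With the paper's pairing $\omega((u\mid v),(c\mid d)) = u\cdot d + v\cdot c$, pairing against the shift $x^t(A,B)$ picks out only coordinate $0$ of the shifted vectors. This gives $\beta B_{t'} + \alpha A_{t'}$ for the appropriate index $t'$ determined by $t$, which is a coefficient of $\alpha A + \beta B$ and hence vanishes. So $(\beta \mid \alpha) \in \mathcal{C}_m^{\perp_{ts}}$. Because $(\alpha,\beta) \neq (0,0)$, the word is nonzero and $\swt(\beta \mid \alpha) = 1$. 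The only real care in the argument is here: the roles of $\alpha$ and $\beta$ are crossed relative to $(a,b)$, and the reader should check that the index bookkeeping of the shift is harmless. It is harmless because every coefficient of $\alpha A + \beta B$ vanishes, so the direction of the shift does not matter.

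The third step applies Theorem~\ref{thm:lift}. By part (2), $(L\beta, L\alpha) = (\beta e, \alpha e) \in \mathcal{C}^{\perp_{ts}}$. By part (1), its symplectic weight is $\ell \cdot 1 = n/m$, and it is nonzero since $L$ is injective. Hence $d(\mathcal{C}^{\perp_{ts}}) \le n/m$. If this lift $(\beta e, \alpha e)$ is not in $\mathcal{C}$, it lies in $\mathcal{C}^{\perp_{ts}} \setminus \mathcal{C}$ and is therefore a logical operator. By the definition of $d_Q$, this gives $d_Q \le n/m$.

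I expect no substantive obstacle; the only step needing attention is the orientation check in the second paragraph.
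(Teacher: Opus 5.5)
Your proposal is correct and follows the paper's proof: you observe that $\varphi(\alpha a+\beta b)=0$, which places the weight-$1$ word $(\beta\mid\alpha)$ in $\mathcal{C}_m^{\perp_{ts}}$, and you lift it with Theorem~\ref{thm:lift} to $(\beta e,\alpha e)$ of weight $\ell=n/m$. The paper writes out the three cases $(1,0),(0,1),(1,1)$ separately (pure-$Z$, pure-$X$, and diagonal words $x^j e$), and your uniform treatment covers all three.
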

\begin{proof}
Write $e = (x^n-1)/(x^m-1)$. The three cases give weight-$n/m$ elements of $\mathcal{C}^{\perp_{ts}}$: $(\alpha,\beta)=(1,0)$: the pure-$Z$ words $x^j e$; $(0,1)$: the pure-$X$ words $x^j e$; $(1,1)$: the words $(x^j e, x^j e)$. In each case $\varphi$ of the relevant combination vanishes, so the weight-$1$ element $(\beta, \alpha)$ of $R_m^2$ lies in $\mathcal{C}_m^{\perp_{ts}}$; apply Theorem~\ref{thm:lift}.
\end{proof}

\begin{lemma}\label{lem:fold}
In the setting of Theorem~\ref{thm:lift}, $\varphi(\mathcal{C}^{\perp_{ts}}) \subseteq \mathcal{C}_m^{\perp_{ts}}$; folding preserves dual membership, and an ancestor logical operator is obtained only when the folded word lies outside $\mathcal{C}_m$. Moreover $\varphi \circ L = \ell \cdot \mathrm{id}$, which is the identity for odd $\ell$ and zero for even $\ell$.
\end{lemma}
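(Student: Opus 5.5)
We prove the two assertions of Lemma~\ref{lem:fold} separately, both by direct computation in the rings $R_n$ and $R_m$. Throughout we use that $\varphi$ is a ring homomorphism: $x^m-1$ divides $x^n-1$, so reduction modulo $x^m-1$ is well defined on $R_n$ and respects sums and products.

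For the inclusion $\varphi(\mathcal{C}^{\perp_{ts}}) \subseteq \mathcal{C}_m^{\perp_{ts}}$, we first restate dual membership as a polynomial identity. As in the proofs of Theorems~\ref{thm:compdual} and~\ref{thm:localdual}, a pair $(p,q) \in R_n^2$ lies in $\mathcal{C}^{\perp_{ts}}$ if and only if
\begin{equation*}
p\,\rho(b) + q\,\rho(a) = 0 \quad \text{in } R_n .
\end{equation*}
Indeed, the coefficient of $x^{-t}$ in the left-hand side is $\omega_n((p,q), x^t(a,b))$, up to the direction of the index convention. The same equivalence holds at length $m$, with $\rho_m$ in place of $\rho$.

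Next we check that $\varphi$ commutes with reversal: $\varphi(\rho(p)) = \rho_m(\varphi(p))$. This holds because $x^{-1} = x^{n-1}$ in $R_n$ and $x^{n-1} \equiv x^{m-1} = x^{-1}$ modulo $x^m-1$, since $m \mid n$. Applying the homomorphism $\varphi$ to the displayed identity then gives
\begin{equation*}
\varphi(p)\,\rho_m(\varphi b) + \varphi(q)\,\rho_m(\varphi a) = 0 \quad \text{in } R_m ,
\end{equation*}
which is exactly the statement $(\varphi p, \varphi q) \in \mathcal{C}_m^{\perp_{ts}}$.

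The same inclusion can be seen through the index computation in the proof of Theorem~\ref{thm:lift}(2), read in the opposite direction. In that computation, $L$ is the adjoint of $\varphi$ under the trace-symplectic form: $\omega_n(Lu, w) = \omega_m(u, \varphi w)$. Pairing $(\varphi p, \varphi q)$ against the shifts $x^{t}(\varphi a,\varphi b) = \varphi(x^t(a,b))$ therefore reduces to pairing $(p,q)$ against lifted words. This is a slightly less direct route than the ring-theoretic argument above, so we use it only as a consistency check.

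The statement about ancestor logical operators needs no separate proof. A folded word $\varphi(w)$ lies in $\mathcal{C}_m^{\perp_{ts}}$ by the inclusion just proved. It is a logical operator of $\mathcal{C}_m$ precisely when it also lies outside $\mathcal{C}_m$, which is the definition of $d_Q$ recalled in Section~\ref{sec:prelim}. We would also remark that $\varphi(\mathcal{C}) = \mathcal{C}_m$, because $\varphi(s(a,b)) = \varphi(s)(\varphi a,\varphi b)$ and $\varphi$ is surjective. Consequently, folding a stabilizer element always yields a stabilizer element, and only words outside $\mathcal{C}$ can fold to logical operators.

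For the identity $\varphi \circ L = \ell \cdot \mathrm{id}$, we compute $\varphi(e)$, where $e = \sum_{s=0}^{\ell-1} x^{sm}$. Since $x^m \equiv 1$ modulo $x^m-1$, each of the $\ell$ terms reduces to $1$, so $\varphi(e) = \ell \bmod 2$. Because $\varphi$ is multiplicative, $\varphi(L u) = \varphi(\tilde u)\,\varphi(e) = \ell\, u$, where $\tilde u$ is any preimage of $u$ under $\varphi$. In characteristic $2$, $\ell u$ equals $u$ for odd $\ell$ and $0$ for even $\ell$.

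We expect the main obstacle to be keeping the reversal and index conventions consistent with the circulant form of $\omega$ used in the proof of Theorem~\ref{thm:lift}. Once the compatibility $\varphi \circ \rho = \rho_m \circ \varphi$ is established, everything else is routine.
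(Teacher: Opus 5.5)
Your proof is correct, but your main argument takes a different route from the paper's. The paper proves the inclusion by a direct index computation. For $w=(u,v)\in\mathcal{C}^{\perp_{ts}}$, it expands $\omega_m((\varphi u,\varphi v),x^t(\varphi a,\varphi b))$ as $\sum_{\Delta}\omega_n((u,v),x^{t+\Delta m}(a,b))$, grouping block indices by their difference $\Delta$ modulo $\ell$, and every summand vanishes.

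That is exactly the adjointness route you mention only as a consistency check: $\omega_m(\varphi w,\varphi(x^t(a,b)))=\omega_n(w,L\varphi(x^t(a,b)))$. Here $L\varphi(x^t(a,b))=e\,x^t(a,b)=\sum_\Delta x^{t+\Delta m}(a,b)$, and this lies in $\mathcal{C}$, which is the one fact you left unstated.

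Your primary argument instead writes dual membership as the polynomial identity $p\,\rho(b)+q\,\rho(a)=0$ in $R_n$. This is the same equation the paper uses in Theorems~\ref{thm:compdual} and~\ref{thm:localdual}. You then push it through the ring homomorphism $\varphi$ after checking $\varphi\circ\rho=\rho_m\circ\varphi$.

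What your route buys:
\begin{itemize}
\item It is index-free.
\item It applies unchanged to reduction modulo any self-reciprocal divisor of $x^n-1$. For example, with $e$ in Theorem~\ref{thm:crt} it gives the layer-1 correlation identity directly.
\item Your remark $\varphi(\mathcal{C})=\mathcal{C}_m$ gives the ``only when'' clause real content, since stabilizer words fold to stabilizer words; the paper treats that clause as definitional.
\end{itemize}
What the paper's route buys: it is the literal transpose of the proof of Theorem~\ref{thm:lift}(2), and it needs no compatibility with the reversal.

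Both proofs handle the composition identity the same way, via $\varphi(e)=\ell$. One point you use implicitly: $Lu=\tilde u\,e$ does not depend on the chosen preimage $\tilde u$, because $(x^m-1)\,e=0$ in $R_n$.
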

\begin{proof}
For $(u,v) \in \mathcal{C}^{\perp_{ts}}$, expand $\omega_m((\varphi u, \varphi v), x^t(\varphi a, \varphi b))$ as $\sum_{\Delta} \omega_n((u,v), x^{t+\Delta m}(a,b)) = 0$, grouping the block indices by their difference $\Delta$ modulo $\ell$. The composition identity follows from $\varphi(e) = \ell$ in characteristic $2$.
\end{proof}

\begin{theorem}\label{thm:twolayer}
Let $n = 2m$, and write a word $w \in R_n^2$ as $w = u + e \cdot z$ with $u = \varphi(w)$ (the fold) and $z$ its second block. Then $w \in \mathcal{C}^{\perp_{ts}}$ if and only if
\begin{enumerate}
\item[(i)] \emph{Layer 0:} $u \in \mathcal{C}_m^{\perp_{ts}}$;
\item[(ii)] \emph{Layer 1:} $\Psi(z) = R_u$, where $\Psi(z) = z_a\,(\varphi b)^* + z_b\,(\varphi a)^*$ in $R_m$ (star denoting $x \mapsto x^{-1}$) and $R_u$ is the restriction to exponents $[0,m)$ of the length-$n$ correlation polynomial $u_a\, b(x^{-1}) + u_b\, a(x^{-1})$.
\end{enumerate}
Moreover $\ker \Psi = \mathcal{C}_m^{\perp_{ts}}$, so the admissible $z$ for a consistent $u$ form an affine coset $z_0 + \mathcal{C}_m^{\perp_{ts}}$; the symplectic weight has the Plotkin form $\swt(w) = \swt(z) + \swt(u{+}z)$; and the consistency condition $R_u \in \operatorname{im} \Psi$ is linear in $u$.
\end{theorem}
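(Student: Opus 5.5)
The plan is to reduce everything to the polynomial form of the dual condition used in Theorems~\ref{thm:compdual} and~\ref{thm:localdual}. A word $(p\mid q)\in R_n^2$ lies in $\mathcal C^{\perp_{ts}}$ if and only if $p\,b^*+q\,a^*=0$ in $R_n$. The analogous statement holds in $R_m$ for $\mathcal C_m$, with $(\varphi a,\varphi b)$ in place of $(a,b)$. Before using the decomposition, I will check that $w=u+e\cdot z$ really is a parametrization of $R_n^2$. With $e=1+x^m$, let the two length-$m$ blocks of a component of $w$ be $w^{(0)}$ and $w^{(1)}$. Put $z=w^{(1)}$ and $u=\varphi(w)=w^{(0)}+w^{(1)}$, both regarded as polynomials of degree $<m$. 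Then $u+z+x^mz$ has first block $w^{(0)}$ and second block $w^{(1)}$. So the decomposition is a bijection, and it is applied componentwise to the $X$ and $Z$ parts.

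The key identity is that for any $y\in R_n$, the product $e\cdot y$ depends only on $y \bmod (x^m-1)$. More precisely, $e\cdot y=L(\varphi(y))$. In particular $e\,z_a b^* = L(z_a\,\varphi(b^*))$ and $\varphi(b^*)=(\varphi b)^*$, because reversal commutes with reduction modulo $x^m-1$. This is the step I expect to need the most care, since the indices must be tracked and $z$ must be read inside $R_m$. Granting it, the dual condition for $w$ becomes
\[
C_u + L\big(\Psi(z)\big)=0 \quad\text{in } R_n,
\]
where $C_u:=u_a\,b(x^{-1})+u_b\,a(x^{-1})$ is the length-$n$ correlation polynomial. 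Split $C_u=R_u+x^mH_u$ into its exponent ranges $[0,m)$ and $[m,2m)$. Since $L(\Psi(z))$ is the repetition of $\Psi(z)$ in both blocks, the condition is equivalent to the pair $R_u=\Psi(z)$ and $H_u=\Psi(z)$.

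Next, I will show that this pair is equivalent to (i) together with (ii). Folding $C_u$ gives $R_u+H_u=\varphi(C_u)=\varphi(u_a)(\varphi b)^*+\varphi(u_b)(\varphi a)^*$, which is exactly the $R_m$-correlation of $\varphi(u)=u$ with $\mathcal C_m$. This mirrors the computation in Lemma~\ref{lem:fold}. By the defining equation in $R_m$, we have $R_u+H_u=0$ if and only if $u\in\mathcal C_m^{\perp_{ts}}$. Hence the equalities $R_u=\Psi(z)=H_u$ are equivalent to $u\in\mathcal C_m^{\perp_{ts}}$ together with $\Psi(z)=R_u$.

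The remaining claims are then routine. $\Psi(z)=0$ is literally the defining equation of $\mathcal C_m^{\perp_{ts}}$ applied to $z$, so $\ker\Psi=\mathcal C_m^{\perp_{ts}}$. Since $\Psi$ is $\Ftwo$-linear, the solutions of $\Psi(z)=R_u$ form a coset $z_0+\ker\Psi$. For the weight, $\swt$ is additive over the two disjoint blocks. The blocks of $w$ are $u+z$ and $z$ in both the $X$ and $Z$ parts, which gives $\swt(w)=\swt(u+z)+\swt(z)$. Finally, $u\mapsto R_u$ is linear, being a coefficient truncation of a product that is linear in $u$. Therefore the set of $u$ with $R_u\in\operatorname{im}\Psi$ is the preimage of a subspace under a linear map, which is a linear condition.
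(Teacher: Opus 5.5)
Your proposal is correct and follows essentially the paper's argument. The paper compares the orthogonality equations at lags $s$ and $s+m$; these are exactly the coefficients of $x^s$ and $x^{s+m}$ in your identity $C_u + L(\Psi(z)) = 0$, and their sum gives the fold condition. Your explicit identity $e\,y = L(\varphi(y))$ makes precise why $z$ enters only through $\Psi(z)$ computed in $R_m$, a point the paper leaves implicit.
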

\begin{proof}
For a shift lag $s \in [0,m)$, the orthogonality equation at lag $s$ expands as $E_s = R_u(s) + \Psi(z)(s)$, separating the second-block unknowns $z$ from the fold $u$. The lag pair $(s, s{+}m)$ sums to the fold condition $\omega_m(u, x^s(\varphi a, \varphi b)) = 0$; once layer 0 holds, the equations at lags $s$ and $s{+}m$ coincide, so the $2m$ shift equations reduce to (i) and the $m$ equations (ii). The kernel claim is the definition of $\mathcal{C}_m^{\perp_{ts}}$, the weight identity is immediate from disjoint block supports, and linearity of $R_u$ in $u$ gives the last claim.
\end{proof}

\begin{corollary}\label{cor:dimlaw}
With the obstruction map $\widehat{R}\colon \mathcal{C}_m^{\perp_{ts}} \to R_m / \operatorname{im}\Psi$, $u \mapsto R_u \bmod \operatorname{im}\Psi$,
\begin{equation}
\dim \mathcal{C}^{\perp_{ts}} = 2\,\dim \mathcal{C}_m^{\perp_{ts}} - \rank \widehat{R},
\end{equation}
and
\begin{equation}
d(\mathcal{C}^{\perp_{ts}}) = \min\Big\{\, 2\, d(\mathcal{C}_m^{\perp_{ts}}),\ \min_{\substack{u \in \mathcal{C}_m^{\perp_{ts}} \setminus \{0\} \\ R_u \in \operatorname{im}\Psi}}\ \min_{z \in S(u)} \swt(z) + \swt(u{+}z) \,\Big\},
\end{equation}
where $S(u) = z_0 + \mathcal{C}_m^{\perp_{ts}}$ is the layer-1 coset. For the quantum distance, the restriction to $\mathcal{C}^{\perp_{ts}}\setminus\mathcal{C}$ gives
\begin{equation}
d_Q = \min\Bigg\{
\min_{\substack{z\in\mathcal C_m^{\perp_{ts}}\\Lz\notin\mathcal C}} 2\,\swt(z),\
\min_{\substack{u\in\mathcal C_m^{\perp_{ts}}\setminus\{0\}\\R_u\in\operatorname{im}\Psi}}\
\min_{\substack{z\in S(u)\\u+ez\notin\mathcal C}}\swt(z)+\swt(u+z)
\Bigg\},
\end{equation}
with empty branches omitted. The non-stabilizer condition is imposed on the reconstructed word $u+ez$, not on $u$ and $z$ separately. The ``escape'' phenomenon is a coset beating the diagonal $2\,\swt(u)$; coexistence is equality of the two branches. We call a divisor edge $m \mid n$ \emph{lift-saturated} when the lift branch alone attains the minimum, \emph{coexisting} when both branches attain, and \emph{escaping} when the coset branch lies strictly below $2\,d(\mathcal{C}_m^{\perp_{ts}})$. For odd $\ell$ we borrow the same terms, with the mixed branch of Theorem~\ref{thm:crt} in place of the coset branch.
\end{corollary}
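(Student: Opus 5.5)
The plan is to read everything off the parametrization of Theorem~\ref{thm:twolayer}. Consider the linear map $\Phi\colon \mathcal{C}^{\perp_{ts}} \to \mathcal{C}_m^{\perp_{ts}}$, $w \mapsto u = \varphi(w)$; it is well defined by Lemma~\ref{lem:fold}. Its image consists exactly of the consistent folds, that is, those $u \in \mathcal{C}_m^{\perp_{ts}}$ with $R_u \in \operatorname{im}\Psi$. By construction this set is $\ker \widehat{R}$, so $\dim \operatorname{im}\Phi = \dim \mathcal{C}_m^{\perp_{ts}} - \rank \widehat{R}$. The kernel of $\Phi$ consists of the words with $u = 0$. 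Then $R_0 = 0$, so Layer~1 forces $z \in \ker\Psi = \mathcal{C}_m^{\perp_{ts}}$, and the kernel is $\{ez : z \in \mathcal{C}_m^{\perp_{ts}}\}$. Note that $ez = Lz$, since $w = u + ez$ with $u = 0$ is the period-$m$ repetition of $z$. This kernel is isomorphic to $\mathcal{C}_m^{\perp_{ts}}$ because $L$ is injective (Theorem~\ref{thm:lift}(1)). Rank--nullity then gives $\dim\mathcal{C}^{\perp_{ts}} = 2\dim\mathcal{C}_m^{\perp_{ts}} - \rank\widehat{R}$.

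For the distance formula, split the nonzero words $w \in \mathcal{C}^{\perp_{ts}}$ according to whether $u = 0$. If $u = 0$, then $w = Lz$ with $z \in \mathcal{C}_m^{\perp_{ts}}\setminus\{0\}$, and $\swt(w) = 2\swt(z)$ by Theorem~\ref{thm:lift}(1). The minimum over this branch is therefore exactly $2\,d(\mathcal{C}_m^{\perp_{ts}})$. If $u \neq 0$, then $u$ ranges over the nonzero consistent folds, and for each such $u$ the admissible $z$ range over the coset $S(u) = z_0 + \mathcal{C}_m^{\perp_{ts}}$ (Theorem~\ref{thm:twolayer}). The Plotkin identity $\swt(w) = \swt(z) + \swt(u+z)$ then gives the second term. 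Both branches are exhaustive and each parametrization is bijective onto its set of words, so taking the minimum of the two branch minima is exact, with no loss in either direction.

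The $d_Q$ formula follows by the same split, restricted to $\mathcal{C}^{\perp_{ts}}\setminus\mathcal{C}$. The non-stabilizer condition is a property of the reconstructed word $w = u + ez$. In the first branch it reads $Lz \notin \mathcal{C}$, and in the second it reads $u + ez \notin \mathcal{C}$. Empty branches contribute $+\infty$ and are omitted. The remaining sentences of the statement (escape, coexistence, lift-saturated) are definitions and need no proof.

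The main obstacle is bookkeeping. One must confirm that the decomposition $w = u + ez$ with $u = \varphi(w)$ and $z$ the second block is a valid bijection of $R_n^2$ with $R_m^2 \times R_m^2$. The first block is then $u + z$ and the second is $z$, which is consistent with the Plotkin form. One must also confirm that $\widehat{R}$ is linear with kernel exactly the consistent set, which is the linearity clause of Theorem~\ref{thm:twolayer}. Everything else is a formal consequence.
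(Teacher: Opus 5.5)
Your proposal is correct and is essentially the argument the paper leaves implicit, since the corollary is stated without a separate proof as a direct consequence of Theorem~\ref{thm:twolayer}. The dimension law comes from rank--nullity for the fold map on $\mathcal{C}^{\perp_{ts}}$, with kernel $L(\mathcal{C}_m^{\perp_{ts}})$ and image $\ker\widehat{R}$; the two distance formulas come from splitting dual words by whether $u=0$, with weights given by $\swt(Lz)=2\swt(z)$ and the Plotkin form.
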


\begin{remark}\label{rem:squarefree}
Writing $\sigma(g) = \deg g - \deg \gcd(\varphi a, \varphi b, x^m{-}1)$ (the \emph{fold shrink}, equal to the degree of the squareful part of $g$ relative to $x^m{-}1$), our data are consistent with the following one-directional rule: $\sigma(g) = 0$ (no repeated factor of $g$ relative to $x^m{-}1$) implies no strict escape, $d(\mathcal{C}^{\perp_{ts}}) = 2\,d(\mathcal{C}_m^{\perp_{ts}})$. The $\sigma = 0$ codes $[[56,9,12]]_2$, $[[80,7,16]]_2$, $[[78,13,14]]_2$ satisfy equality, the last with coexistence. The squareful examples show different deviations: $[[78,14,14]]_2$ ($\sigma = 1$, no escape), $[[62,16,11]]_2$ and $[[62,17,11]]_2$ ($\sigma = 5$ and $12$, one below the lift bound), $[[93,15,17]]_2$ ($\sigma = 10$, four below the dual-distance lift bound $21$), and $[[91,15,17]]_2$ ($\sigma = 15$, eleven below the dual-distance lift bound $28$). We conjecture the forward direction: $\sigma(g) = 0$ forces the coset branch to attain at least $2\,d(\mathcal{C}_m^{\perp_{ts}})$.
\end{remark}

\begin{remark}\label{rem:62verify}For the certified $[[62,16,11]]_2$ code and its $m = 31$ fold: $\dim \mathcal{C}_{31}^{\perp_{ts}} = 42$ (note $\deg\gcd(\varphi a, \varphi b, x^{31}{-}1) = 11 \ne \deg g$), $\dim \operatorname{im}\Psi = 20$ confirming $\ker\Psi = \mathcal{C}_{31}^{\perp_{ts}}$, $\rank \widehat{R} = 6$, and the dimension law gives $2\cdot 42 - 6 = 78 = \dim \mathcal{C}^{\perp_{ts}}$, matching the direct computation. The certified weight-$11$ witness decomposes as $u$ of weight $9$ in layer 0 and $z$ of weight $7$ in layer 1 with $\Psi(z) = R_u$ and $7 + 4 = 11 < 12 = 2\, d(\mathcal{C}_{31}^{\perp_{ts}})$, realizing the escape branch of Corollary~\ref{cor:dimlaw} exactly. The $[[56,9,12]]_2$ code at $m = 28$ gives $\dim \mathcal{C}_{28}^{\perp_{ts}} = 37$, $\rank \widehat{R} = 9$, and $2 \cdot 37 - 9 = 65 = \dim \mathcal{C}^{\perp_{ts}}$, again matching.
\end{remark}

\begin{theorem}\label{thm:crt}
Let $n = \ell m$ with $\ell$ odd, and put $e = (x^n-1)/(x^m-1) = 1 + x^m + \cdots + x^{(\ell-1)m} = \prod_{d \mid \ell,\, d > 1} \Phi_d(x^m)$ (for prime $\ell$ this is $e = \Phi_\ell(x^m)$). Then $x^{n}-1 = (x^m-1)\,e$ with coprime factors, so $R_n \cong R_m \oplus \widetilde{S}$ with $\widetilde{S} = \mathbb{F}_2[x]/(e)$, and the trace-symplectic dual splits as a \emph{direct product}
\begin{equation}
\mathcal{C}^{\perp_{ts}} \;\cong\; \mathcal{C}_m^{\perp_{ts}} \;\oplus\; K_{\widetilde{S}},
\end{equation}
with \emph{no compatibility condition between the layers}: layer 0 is the fold $u = w \bmod (x^m{-}1) \in \mathcal{C}_m^{\perp_{ts}}$, and layer 1 is the component $\zeta = w \bmod e$ satisfying the same correlation identity in $\widetilde{S}$. In particular $\dim \mathcal{C}^{\perp_{ts}} = \dim \mathcal{C}_m^{\perp_{ts}} + \dim K_{\widetilde{S}}$, and with the idempotents $e_1 = e$, $e_2 = 1 + e$ every dual word reconstructs as $w = u\, e_1 + \zeta\, e_2$, with symplectic weight given by the block sum of the reconstruction (for $\ell = 3$: blocks $u{+}\zeta_1$, $u{+}\zeta_0$, $u{+}\zeta_0{+}\zeta_1$, generalizing the Plotkin form). The $\zeta = 0$ branch is exactly the lift branch of weight $\ell \cdot \swt(u)$.
\end{theorem}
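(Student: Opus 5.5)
We plan to derive Theorem~\ref{thm:crt} from the CRT decomposition of $R_n$ combined with the description of the dual as a solution set of one module equation, much as in Theorem~\ref{thm:compdual}.

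\textbf{Step 1: the ring splitting.} We first check that $x^m-1$ and $e$ are coprime for odd $\ell$. Reducing $e$ modulo $x^m-1$ gives $e \equiv \ell = 1$ in characteristic $2$. Hence $1 = e + (x^m-1)q$ for some $q$, and the CRT gives $R_n \cong R_m \oplus \widetilde S$. The idempotent corresponding to the $R_m$ component is $e_1 = e$. This holds because $e \equiv 1 \pmod{x^m-1}$ and $e \equiv 0 \pmod e$. The complementary idempotent is then $e_2 = 1+e$. The product formula $e = \prod_{d\mid \ell,\,d>1}\Phi_d(x^m)$ is the standard factorization $(y^\ell-1)/(y-1)$ with $y = x^m$.

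\textbf{Step 2: the dual as a kernel.} As in the proof of Proposition~\ref{prop:purebch} and Theorem~\ref{thm:compdual}, a word $w=(p,q)\in R_n^2$ lies in $\mathcal C^{\perp_{ts}}$ if and only if $p\,b^* + q\,a^* = 0$ in $R_n$. This equation is the vanishing of all shift correlations. Since it is an $R_n$-linear equation, it splits along the ring isomorphism into two independent equations. The first is $\varphi(p)\,\varphi(b)^* + \varphi(q)\,\varphi(a)^* = 0$ in $R_m$; here we use that $\varphi$ commutes with $^*$, because $x^m-1$ is self-reciprocal up to a unit. The second is the analogous equation for $\zeta = w \bmod e$ in $\widetilde S$, where $e$ is also self-reciprocal. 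The first equation is exactly membership of the fold in $\mathcal C_m^{\perp_{ts}}$, which is consistent with Lemma~\ref{lem:fold}. The second defines $K_{\widetilde S}$. Because the ring isomorphism maps solutions bijectively to pairs of solutions, there is no coupling between the layers. This yields both the direct-product decomposition and the dimension formula.

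\textbf{Step 3: reconstruction and weights.} Every $w$ equals $w e_1 + w e_2$, and $w e_1$ depends only on $u = w \bmod (x^m-1)$, giving $w = u e_1 + \zeta e_2$. When $\zeta=0$ we get $w = u e = L(u)$, so Theorem~\ref{thm:lift}(1) gives weight $\ell\cdot\swt(u)$; this is the lift branch. For $\ell=3$ we will compute $\zeta(1+e)$ explicitly, taking $\zeta$ as a representative of degree less than $2m$ with blocks $\zeta_0,\zeta_1$. We reduce $(1+x^m+x^{2m})\zeta$ modulo $x^{3m}-1$ and read off the three length-$m$ blocks, which produces the stated combinations $u+\zeta_1$, $u+\zeta_0$, $u+\zeta_0+\zeta_1$.

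The main obstacle is bookkeeping rather than any conceptual difficulty. It lies in the compatibility of $^*$ with the CRT components and in the exact block indexing of the $\ell=3$ formula. We will settle the former by noting that $\rho$ preserves both ideals $(x^m-1)$ and $(e)$. We will settle the latter by a direct polynomial computation, whose block labels we will match against the statement.
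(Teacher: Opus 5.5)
Your proposal is correct and follows the paper's own argument: coprimality from $e \equiv \ell \equiv 1 \pmod{x^m-1}$, componentwise splitting of the orthogonality identity $p\,b^* + q\,a^* = 0$ under $R_n \cong R_m \oplus \widetilde{S}$, and reconstruction through the idempotents $e$ and $1+e$. You also make explicit two points the paper leaves implicit. The first is that the reversal involution preserves both ideals $(x^m-1)$ and $(e)$. The second is the congruence $(\zeta_0 + x^m\zeta_1)(x^m + x^{2m}) \equiv \zeta_1 + x^m\zeta_0 + x^{2m}(\zeta_0+\zeta_1) \pmod{x^{3m}-1}$, which indeed yields the stated $\ell=3$ blocks.
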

\begin{proof}
Since $x^m \equiv 1 \pmod{(x^m-1)}$ we have $e \equiv \ell \equiv 1 \pmod{(x^m-1)}$ for odd $\ell$, so $e$ is coprime to $x^m - 1$; this gives the CRT splitting without any semisimplicity assumption on $R_m$. The orthogonality condition is the polynomial identity $p\,b(x^{-1}) + q\,a(x^{-1}) = 0$ in $R_n$, which splits componentwise, and the component maps are surjective with independent kernels. In $R_n$ we have $e\,x^m = e$ (since $x^n = 1$), hence $e^2 = \ell\, e = e$, and the reconstruction and weight formulas follow from the idempotent expansion.
\end{proof}

\begin{remark}\label{rem:93verify}
For the certified $[[93,15,17]]_2$ code ($93 = 3 \cdot 31$), the odd-$\ell$ CRT splitting gives $\dim \mathcal{C}_{31}^{\perp_{ts}} = 36$ and $\dim K_{\widetilde{S}} = 72$, hence $36 + 72 = 108 = \dim \mathcal{C}^{\perp_{ts}}$. The weight-$17$ logical has fold weight $15$ and reconstructs through the $\ell = 3$ block formula. For the certified $[[91,15,17]]_2$ code ($91 = 7 \cdot 13$), the corresponding dimensions are $13$ and $93$, giving $106$. The length-$13$ fold is additive self-dual with $d(\mathcal{C}_{13}^{\perp_{ts}}) = 4$. A minimum-weight ancestor word lifts to a weight-$28$ element of $\mathcal{C}$, not to a logical operator, so the certified distance $17$ is carried by the mixed branch rather than by a simple lift.
\end{remark}

\begin{table}[h]
\centering
\caption{Dimension-law verifications. $\ell=2$: $\dim = 2\dim \mathcal{C}_m^{\perp_{ts}} - \rank\widehat{R}$ (Theorem~\ref{thm:twolayer}); odd $\ell$: $\dim = \dim \mathcal{C}_m^{\perp_{ts}} + \dim K_{\widetilde{S}}$ (Theorem~\ref{thm:crt}).}
\label{tab:dimlaw}
\begin{tabular}{ccccc}
\toprule
code & $m$ & $\dim \mathcal{C}_m^{\perp_{ts}}$ & layer-1 datum & $\dim \mathcal{C}^{\perp_{ts}}$ (predicted = actual) \\
\midrule
$[[62,16,11]]_2$ & 31 & 42 & $\rank\widehat{R} = 6$ & $84 - 6 = 78 = 78$ \\
$[[56,9,12]]_2$ & 28 & 37 & $\rank\widehat{R} = 9$ & $74 - 9 = 65 = 65$ \\
$[[93,15,17]]_2$ & 31 & 36 & $\dim K_{\widetilde{S}} = 72$ & $36 + 72 = 108 = 108$ \\
$[[91,15,17]]_2$ & 13 & 13 & $\dim K_{\widetilde{S}} = 93$ & $13 + 93 = 106 = 106$ \\
\bottomrule
\end{tabular}
\end{table}

\begin{figure}[t]
\centering
\begin{tikzpicture}[
  font=\scriptsize,
  cert/.style={circle,draw,thick,inner sep=1.3pt},
  root/.style={circle,draw,fill=black!15,inner sep=1.3pt},
  ed/.style={-{Stealth[length=2mm]},thick},
  lbl/.style={midway,fill=white,inner sep=0.6pt}]
\node[root] (n14) at (0,0) {$14$};
\node[cert] (n28) at (0,0.85) {$28$};
\node[cert] (n56) at (0,1.7) {$56$};
\draw[ed] (n14) -- node[lbl]{eq.\ $3{\to}6$} (n28);
\draw[ed] (n28) -- node[lbl]{eq.\ $6{\to}12$} (n56);
\node[root] (n31) at (2.4,0) {$31$};
\node[cert] (n62) at (2.4,0.85) {$62$};
\draw[ed] (n31) -- node[lbl]{esc.\ $6{\to}11$} (n62);
\node[root] (n31b) at (4.8,0) {$31$};
\node[cert] (n93) at (4.8,0.85) {$93$};
\draw[ed] (n31b) -- node[lbl]{mixed\ $7{\to}17$} (n93);
\node[root] (n21) at (7.2,0) {$21$};
\node[cert] (n63) at (7.2,0.85) {$63$};
\draw[ed] (n21) -- node[lbl]{esc.\ ${>}\,6{\to}14$} (n63);
\node[root] (n13) at (9.6,0) {$13$};
\node[cert] (n91) at (9.6,0.85) {$91$};
\draw[ed] (n13) -- node[lbl]{mixed\ $4{\to}17$} (n91);
\end{tikzpicture}
\caption{Representative divisor-lattice chains for the certified codes. Edge labels give $d(\mathcal{C}_m^{\perp_{ts}}) \to d_Q(\mathcal C)$. The chains $14 \to 28 \to 56$ and $31 \to 62$ show equality and a one-unit deviation, respectively; $31 \to 93$ and $13 \to 91$ show mixed branches; $21 \to 63$ shows a deviation of at least seven. The self-dual fold at $13$ uses the exact dual distance $4$ rather than a quantum distance.}
\label{fig:lattice}
\end{figure}

\subsection{Lift-anchored construction}\label{sec:anchored}

The two-layer analysis inverts into a construction method: instead of analyzing a given code at $2m$, one \emph{prescribes} the fold and solves for the second layer.

\begin{definition}\label{def:anchored}
Let $(a,b)$ generate a self-orthogonal QC code at length $m$ (the \emph{anchor}). A length-$2m$ QC code $\mathcal{C}_2 = \langle (a + e\,u,\; b + e\,w) \rangle$ with $e = x^m + 1$ and $u, w \in R_m$ is called \emph{lift-anchored} at $(a,b)$; its fold is the anchor code.
\end{definition}

\begin{theorem}\label{thm:anchored}
Let the anchor be self-orthogonal, so its defect Laurent polynomial satisfies $D(x)\equiv 0$ in $R_m$; write $x^m D(x) = q \cdot (x^m - 1)$ with $q \in R_m$. Then $\mathcal{C}_2$ is self-orthogonal if and only if
\begin{equation}\label{eq:anchored}
T_b(u) + T_a(w) = q \quad \text{in } R_m, \qquad T_c(y) := c\, y^* + c^*\, y,
\end{equation}
with ${}^*$ denoting $x \mapsto x^{-1}$ in $R_m$. The admissible $(u,w)$ therefore form an affine subspace of $R_m^2$, nonempty iff $q \in \operatorname{im} T_a + \operatorname{im} T_b$; its dimension is $2m - \rank(T_a \mid T_b)$. Multiplying the anchor pair by a common unit preserves the anchor code and varies $q$, giving a twist family of consistency instances per anchor.
\end{theorem}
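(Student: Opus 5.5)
The plan is to compute the defect symbol $P_{2m}$ of the generator $(A,B) = (a + e\,u,\ b + e\,w)$ directly in $R_{2m}$. I will then show that it factors as $P_{2m} = e\cdot\big(q + T_b(u) + T_a(w)\big)$, where the bracket only matters modulo $x^m-1$. Self-orthogonality of a $(1,2)$-QC code is equivalent to $P \equiv 0$, because the Gram matrix of the cyclic shifts is the circulant with symbol $P$ (Section~\ref{sec:palin}). So the theorem reduces to two facts: this factorization, and the statement that multiplication by $e$ on $R_{2m}$ is injective on $R_m$-classes.

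\emph{Step 1 (expansion).} In $R_{2m}$ the star is $x\mapsto x^{-1} = x^{2m-1}$, and $e^* = x^{-m}+1 = x^m+1 = e$ because $x^{2m}=1$. Also $e^2 = x^{2m}+1 = 0$ in characteristic $2$. Expanding $AB^* + A^*B$ therefore gives
\begin{equation*}
P_{2m} = (ab^* + a^*b) + e\,\big(u b^* + u^* b + a w^* + a^* w\big),
\end{equation*}
and the $e^2$ term vanishes. Any $y \in R_{2m}$ satisfies $e\,(x^m+1)y = e^2 y = 0$, so $e\,y$ depends only on $y \bmod (x^m-1)$. Moreover, reducing $x\mapsto x^{-1}$ in $R_{2m}$ and then mod $x^m-1$ agrees with reversal in $R_m$. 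Hence the second term equals $e\cdot\big(T_b(u) + T_a(w)\big)$, with $T_a,T_b$ computed in $R_m$ and then lifted.

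\emph{Step 2 (the anchor term).} I regard $a,b$ as polynomials of degree $<m$, so $D(x) = a(x)b(x^{-1}) + a(x^{-1})b(x)$ is a Laurent polynomial with exponents in $(-m,m)$. Then $x^mD$ is an honest polynomial of degree $\le 2m-2$, and anchor self-orthogonality means $(x^m-1)\mid x^mD$. This makes $q$ a well-defined polynomial of degree $<m$. In $R_{2m}$ the term $ab^*+a^*b$ equals $D$ read with exponents mod $2m$, that is, $x^{-m}\cdot x^mD = x^{m}q\,e$. Since $x^m e = x^{2m}+x^m = e$, the anchor term equals $q\,e$, which gives the claimed factorization. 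I expect the main obstacle to be this exponent bookkeeping: the Laurent polynomial $D$, its two readings mod $m$ and mod $2m$, and the well-definedness of $q$ all have to be handled carefully.

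\emph{Step 3 (conclusion).} We have $e\,y = 0$ in $R_{2m}$ iff $(x^{2m}-1) \mid (x^m+1)y$, iff $(x^m+1)\mid y$, iff $y = 0$ in $R_m$. Hence $P_{2m}\equiv 0$ iff $T_b(u)+T_a(w) = q$ in $R_m$, which is \eqref{eq:anchored}. The map $(u,w)\mapsto T_b(u)+T_a(w)$ is $\Ftwo$-linear on $R_m^2 \cong \Ftwo^{2m}$, since $*$ is linear. Its solution set is therefore empty or an affine translate of the kernel, and it is nonempty iff $q \in \operatorname{im}T_a + \operatorname{im}T_b$. Its dimension is $2m - \rank(T_a\mid T_b)$ by rank–nullity. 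For the twist remark, replacing $(a,b)$ by $(\mu a,\mu b)$ with $\mu \in R_m^\times$ generates the same cyclic closure, hence the same anchor code. The new defect is $\mu\mu^* D$ as a Laurent polynomial, which generally has a different quotient $q$, and $T_{\mu a}, T_{\mu b}$ change accordingly; this gives the twist family.
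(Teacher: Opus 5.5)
Your proof is correct and follows the paper's argument. You expand the defect of $(a+eu,\,b+ew)$ in $R_{2m}$, drop the $e^2$ terms, and reduce the $e$-coefficient modulo $x^m-1$ to get $T_b(u)+T_a(w)$; the paper does this step via $\rho_{2m}(p)=\rho_m(p)+e(\rho_m(p)+p_0)$, you via compatibility of reversal with the fold. You then identify the anchor term as $q\,e$ and conclude from $\operatorname{ann}_{R_{2m}}(e)=(e)$, and your exponent bookkeeping for $D=q\,e$ fills in a step the paper leaves implicit. The only loose point is the informal twist remark: $q$ must be computed from the degree-$<m$ representatives of $\mu a,\mu b$, whose Laurent defect agrees with $\mu\mu^*D$ only in $R_m$, not as a Laurent polynomial.
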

\begin{proof}
Put $e = x^m + 1$; in $R_{2m}$ we have $e^2 = x^{2m} - 1 \equiv 0$, $e^* = e$, and $\rho_{2m}(p) = \rho_m(p) + e\,(\rho_m(p) + p_0)$ for $p \in R_m$. Expanding the defect symbol of the generator $(a + e u,\; b + e w)$ of $\mathcal{C}_2$, the mixed terms $e^2 u\,\rho_{2m}(w)$ and $e^2 \rho_{2m}(u)\,w$ vanish, and reducing the $e$-coefficient modulo $e$ gives $P_2 = e\,\big(q + T_b(u) + T_a(w)\big)$ with $q$ as in the statement. Since $\operatorname{ann}_{R_{2m}}(e) = (e)$, the defect vanishes identically iff $T_b(u) + T_a(w) = q$ in $R_m$, which is \eqref{eq:anchored}.
\end{proof}

\begin{definition}\label{def:aperiodic}
An anchor is \emph{aperiodically self-orthogonal} when $D(x) \equiv 0$ as an unreduced polynomial (equivalently $q = 0$): self-orthogonality without the wrap-around. Then \eqref{eq:anchored} is homogeneous and the anchored codes form a \emph{linear} space.
\end{definition}

\begin{remark}\label{rem:anchoredstatus}
The $[[56,9,12]]_2$ anchor is aperiodically self-orthogonal. All seven certified codes serve as anchors, yielding $210$ lift-anchored candidates at $2n\in\{112,124,146,156,160\}$. Of the $56$ screened candidates, $55$ met the Grassl thresholds. In particular, the $[[124,17]]$ entry has a candidate with a weight-$20$ witness, matching the table value, and seven survivors. These candidates inherit $d_Q\le2d(\text{anchor})$; their exact certification requires further BZ computation beyond the present compute budget of Table~\ref{tab:cost}.
\end{remark}

\begin{remark}
Membership of a lift in $\mathcal{C}$ is decidable by linear algebra. In our non-self-dual ancestor examples, the minimum lifts are outside $\mathcal{C}$. A minimum-weight word of the self-dual length-$13$ fold of $[[91,15,17]]_2$ lifts into $\mathcal{C}$, so its weight-$28$ lift is not a logical operator. Theorem~\ref{thm:lift} is the additive-QC analogue of the classical $(u \mid u{+}v)$ structure of repeated-root cyclic codes \cite{castagnoli,vanlint}. The QC setting makes the folded ancestor itself a 1-generator QC code, so the mechanism iterates along the divisor lattice of $n$. Lifted and primitive minimum-weight logicals can coexist: in $[[78,13,14]]_2$ the ancestor satisfies $d_Q(39) = 7$, and we verified explicitly both a lifted weight-$14$ logical and a primitive weight-$14$ logical.
\end{remark}

\begin{comptheorem}\label{thm:equal}
For the codes listed in Table~\ref{tab:ancestor}, the relevant ancestor distances were computed exactly (BZ, each under $0.02$\,s). For the self-dual length-$13$ fold of $[[91,15,17]]_2$, the ancestor datum is $d(\mathcal{C}_{13}^{\perp_{ts}})$ rather than a quantum distance. Nine of these codes satisfy
\begin{equation}
d_Q(n) = \frac{n}{m}\, d_Q(m)
\end{equation}
at \emph{every} level of the divisor chain, with explicit lifted witnesses listed in Appendix~\ref{app:witnesses}. Seven further codes have a certified minimum distance exactly one below the corresponding ancestor-lifting bound: the two $[[62,\cdot,11]]_2$ and four $[[60,\cdot,11]]_2$ codes have $m \in \{30, 31\}$ ancestors with $d_Q = 6$, so direct ancestor lifting gives logical operators of weight at least $12$, whereas primitive minimum-weight logical operators have weight $11$. For $[[60,8,13]]_2$, the length-$30$ ancestor has quantum distance $7$, giving a lift bound of $14$, while the certified minimum distance is $13$. These minimum-weight logical operators arise from the mixed or coset branch rather than simple periodic lifting; in particular, the weight-$11$ witnesses of the $[[62,\cdot,11]]_2$ pair fold to ancestor logical operators of weight $9$ (Lemma~\ref{lem:fold}) and are therefore not disguised lifts. The extremal case is $[[63,7,14]]_2$, whose certified distance is at least seven below every divisor-lattice lift bound: every divisor of $63$ yields a lift bound $\ge 21$ (exhaustive coset-minimum search gives $d_Q(21) > 6$, and the divisibility penalty $(x^3-1) \mid (a{+}b)$ of Corollary~\ref{cor:div} yields only weight-$21$ words), and its minimum witness is primitive. The certified $[[91,15,17]]_2$ and $[[93,15,17]]_2$ codes attain $d_Q=17$ in the mixed branch, below the dual-distance lift bounds $7\cdot4=28$ and $3\cdot7=21$, respectively, and their minimum witnesses are primitive. The remaining code $[[73,10,15]]_2$ has prime length and is primitive by necessity.
\end{comptheorem}

\begin{table}[h]
\centering
\caption{Ancestor structure of selected certified codes. The ancestor-distance column gives $d_Q(m)$, except for the self-dual length-$13$ fold of $[[91,15,17]]_2$, where it gives $d(\mathcal{C}_{13}^{\perp_{ts}})$.}
\label{tab:ancestor}
\begin{tabular}{ccccc}
\toprule
Code & fold chain & ancestor distance & lift bound & certified $d_Q(n)$ \\
\midrule
$[[73,10,15]]$  & (73 prime) & --- & --- & 15 (primitive) \\
$[[56,9,12]]$   & $56{\to}28{\to}14$ & 3 & $4\cdot 3 = 12$ & 12 (equality) \\
$[[62,16,11]]$  & $62{\to}31$ & 6 & $2\cdot 6 = 12$ & 11 (escape by 1) \\
$[[62,17,11]]$  & $62{\to}31$ & 6 & $2\cdot 6 = 12$ & 11 (escape by 1) \\
$[[78,13,14]]$  & $78{\to}39$ & 7 & $2\cdot 7 = 14$ & 14 (equality; lift + primitive) \\
$[[78,14,14]]$  & $78{\to}39$ & 7 & $2\cdot 7 = 14$ & 14 (equality) \\
$[[80,7,16]]$   & $80{\to}40{\to}20$ & 4 & $4\cdot 4 = 16$ & 16 (equality) \\
\midrule
$[[60,8,13]]$   & $60{\to}30$ & 7 & $2\cdot 7 = 14$ & 13 (escape by 1) \\
$[[60,12,12]]$  & $60{\to}30$ & 6 & $2\cdot 6 = 12$ & 12 (equality) \\
$[[60,13,11]]$  & $60{\to}30$ & 6 & $2\cdot 6 = 12$ & 11 (escape by 1) \\
$[[60,13,12]]$  & $60{\to}30{\to}15{\to}5$ & 1 & $12\cdot 1 = 12$ & 12 (equality, every level) \\
$[[60,14,11]]$  & $60{\to}30$ & 6 & $2\cdot 6 = 12$ & 11 (escape by 1) \\
$[[60,15,11]]$  & $60{\to}30$ & 6 & $2\cdot 6 = 12$ & 11 (escape by 1) \\
$[[60,16,11]]$  & $60{\to}30$ & 6 & $2\cdot 6 = 12$ & 11 (escape by 1) \\
$[[60,19,10]]$  & $60{\to}30$ & 5 & $2\cdot 5 = 10$ & 10 (equality) \\
$[[60,20,10]]$  & $60{\to}30$ & 5 & $2\cdot 5 = 10$ & 10 (equality) \\
$[[62,7,14]]$   & $62{\to}31$ & 7 & $2\cdot 7 = 14$ & 14 (equality) \\
$[[63,7,14]]$   & $63{\to}21$ & ${>}\,6$ & ${\ge}\,3\cdot 7 = 21$ & 14 (escape by ${\ge}\,7$) \\
$[[91,15,17]]$   & $91{\to}13$ & 4 (self-dual fold) & $7\cdot 4 = 28$ & 17 (mixed branch) \\
$[[93,15,17]]$   & $93{\to}31$ & 7 & $3\cdot 7 = 21$ & 17 (mixed branch) \\
\bottomrule
\end{tabular}
\end{table}

\begin{algorithm}[h]
\caption{Ancestor sieve (design rule from Theorem~\ref{thm:lift})}
\label{alg:sieve}
\begin{algorithmic}[1]
\REQUIRE candidate $(g, v_1, v_2)$ at composite $n$; record target $d^*$
\STATE $B \leftarrow +\infty$
\FOR{each divisor $m \mid n$, $m < n$}
  \STATE compute the exact minimum structure of the folded ancestor
  \FOR{each verified minimum logical $w_m \in \mathcal{C}_m^{\perp_{ts}}\setminus\mathcal{C}_m$}
    \IF{$L(w_m) \notin \mathcal{C}$}
      \STATE $B \leftarrow \min(B,\, \swt(L(w_m)))$
    \ENDIF
  \ENDFOR
  \FOR{each divisibility witness $v$ of weight $n/m$}
    \IF{$v \notin \mathcal{C}$}
      \STATE $B \leftarrow \min(B,\, \swt(v))$
    \ENDIF
  \ENDFOR
\ENDFOR
\IF{$B \le d^*$} \STATE reject (a certified logical upper bound excludes the target) \ELSE \STATE pass to the ISD/BZ pipeline with no conclusion from the sieve \ENDIF
\end{algorithmic}
\end{algorithm}

\subsection{Structural explanation of the low-weight cluster in the $n=153$ ISD census}\label{sec:bimodal}

At $n = 153 = 3^2\cdot 17$, equal-budget ISD (200\,s) on 30 random family members with $\dim \mathcal{C} = 137$ produced two clusters of best-found values: 12 members at $9$--$18$ and 18 members at $31$--$35$. All 12 members in the low cluster have explicit lifted witnesses: $153 = 9\cdot 17 = 3\cdot 51$, and
\begin{itemize}
\item $(x^{17}-1) \mid b$: pure-$X$ lifts of weight $9$ ($3$ members; ISD best exactly $9$);
\item $(x^{9}-1) \mid b$: pure-$X$ lifts of weight $17$ ($4$ members; ISD best exactly $17$);
\item $(x^{9}-1) \mid (a+b)$: $(1,1)$-type lifts of weight $17$ ($2$ members; ISD best exactly $17$);
\item small $d_Q$ of the $m=51$ ancestor: lifts of weight $3\,d_Q(51)$ ($3$ members with witnesses: $18 = 3\cdot 6$, $15 = 3\cdot 5$, $12 = 3\cdot 4$).
\end{itemize}
Every low-cluster witness we examined is an exact period-$9$, period-$17$, or period-$51$ repetition (Appendix~\ref{app:artifacts}). The low-weight cluster therefore has a divisor-lattice explanation. The high best-found cluster is associated with members lacking an exploitable fold; its exact distances remain certification tasks.

\section{Rank and Degree Structure of the Defect Symbol}\label{sec:spectrum}

\begin{lemma}\label{lem:circ}
Let $\mathcal{C} = \langle(a,b)\rangle \subseteq R^2$ with shift basis $g_i = (x^i a, x^i b)$, $0 \le i < n$. The Gram matrix $S_{ij} = \omega(g_i, g_j)$ is the circulant matrix of the defect symbol: $S_{ij} = [x^{i-j}]\,P(x)$.
\end{lemma}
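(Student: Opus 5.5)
We compute the Gram entry directly from the definition of $\omega$ on binary pairs and read off the circulant structure. Write $g_i = (x^i a \mid x^i b)$, so the $X$-part is $x^i a$ and the $Z$-part is $x^i b$. On binary pairs $\omega((p\mid q),(r\mid s)) = p\cdot s + q\cdot r$, hence
\[
S_{ij} = (x^i a)\cdot(x^j b) + (x^i b)\cdot(x^j a).
\]
The first step is the standard identity relating the dot product of coefficient vectors in $R=\Ftwo[x]/(x^n-1)$ to a product with a reversed polynomial. For $p,r\in R$ and any $k$, we have $p\cdot (x^k r) = [x^{-k}]\,\big(p\,\rho(r)\big)$, where exponents are read modulo $n$. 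This holds because $p\cdot (x^k r) = \sum_t p_t r_{t-k}$, while the coefficient of $x^{-k}$ in $p(x)r(x^{-1})$ is $\sum_{t-u\equiv -k} p_t r_u$, which is the same sum once we substitute $u=t+k$. Here I must fix a shift-direction convention, and I will choose it so that the final index comes out as $i-j$.

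The second step applies this identity to each term. Since $x^i$ is a unit permuting coordinates cyclically, $(x^i a)\cdot(x^j b) = a\cdot(x^{j-i}b) = [x^{i-j}]\,(a\,\rho(b))$. By the same reasoning, $(x^i b)\cdot(x^j a) = [x^{i-j}]\,(b\,\rho(a))$. Adding the two in characteristic $2$ gives
\[
S_{ij} = [x^{i-j}]\big(a\rho(b)+\rho(a)b\big) = [x^{i-j}]\,P(x).
\]
Because $S_{ij}$ depends only on $i-j \bmod n$, the matrix $S$ is circulant with first column given by the coefficients of $P$.

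The only real obstacle is bookkeeping: the sign of the index, and whether a shift corresponds to multiplying by $x$ or by $x^{-1}$ in coordinates. I would pin this down on a tiny case, for example $n=3$, $a=1$, $b=x$. There $a\rho(b)+\rho(a)b = x^{-1}+x$, so $S_{10}=[x]P=1$. Directly, $g_1=(x\mid x^2)$ and $g_0=(1\mid x)$, so $\omega(g_1,g_0)=x\cdot x + x^2\cdot 1 = 1$, which is consistent.

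Symmetry of $P$ under $\rho$, namely $\rho(P)=P$, also makes $S$ symmetric. So the convention $i-j$ versus $j-i$ is immaterial in the end. This gives a consistency check: $S$ is alternating, since $[x^0]P = 2\sum_t a_t b_t = 0$.
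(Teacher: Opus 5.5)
Your approach is the paper's: expand $S_{ij}$ as two coordinate dot products and read each off as a coefficient of a product with a reversed polynomial. However, your auxiliary identity has the sign backwards. We have $p\cdot(x^k r)=\sum_t p_t r_{t-k}$, whereas $[x^{-k}](p\,\rho(r))=\sum_t p_t r_{t+k}$; the substitution $u=t+k$ you describe produces the second sum, not the first. The correct identity is $p\cdot(x^k r)=[x^{k}](p\,\rho(r))=[x^{-k}](\rho(p)\,r)$.

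As a result, your two per-term formulas are each false in general. The correct ones are $(x^ia)\cdot(x^jb)=[x^{i-j}](\rho(a)\,b)$ and $(x^ib)\cdot(x^ja)=[x^{i-j}](a\,\rho(b))$, which is exactly the paper's computation. Your version has the two swapped. The swap is invisible in the sum, which is why your $n=3$ check passed: it only tests the total. There the actual terms are $1$ and $0$, while your formulas give $0$ and $1$.

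So the conclusion $S_{ij}=[x^{i-j}]P$ does hold, and your closing observation $\rho(P)=P$ is what rescues it. But the sentence about ``choosing a shift-direction convention'' should go. There is no freedom: $(x^k r)_t=r_{t-k}$ is forced by multiplication in $R$ and by $g_i=x^i(a,b)$.

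The clean repair has two steps:
\begin{itemize}
\item Use the corrected identity, which gives $S_{ij}=[x^{j-i}](a\,\rho(b))+[x^{j-i}](b\,\rho(a))=[x^{j-i}]P$.
\item Either invoke $\rho(P)=P$ (valid because $\rho$ is a ring involution of $R$), or use $[x^{m}](a\,\rho(b))=[x^{-m}](\rho(a)\,b)$ termwise to land directly on $[x^{i-j}]P$, as the paper does.
\end{itemize}

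Your remarks that $S$ is symmetric with zero diagonal, and hence alternating, are correct.
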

\begin{proof}
$\sum_k (x^i a)_k (x^j b)_k = \sum_k a_{k-i} b_{k-j} = \sum_l a_l b_{l + (i-j)} = [x^{i-j}](\rho(a)\,b)$ by the definition of the reversal involution, and similarly the second summand gives $[x^{i-j}](a\,\rho(b))$.
\end{proof}

\begin{theorem}\label{thm:spectrum}
For a $(1,2)$-QC additive code with defect symbol $P$:
\begin{enumerate}
\item the symplectic Gram rank is $r = n - \deg\gcd(P, x^n-1)$;
\item $P$ is palindromic and satisfies $P(1) = 0$, so $D := \gcd(P, x^n-1)$ is a self-reciprocal divisor of $x^n-1$ that necessarily contains $x+1$. For odd $n$, writing $x^n - 1 = (x+1)\,Q(x)$, the achievable gcd degrees are contained in $\{\,1 + s : s \in \Sigma(Q)\,\}$, where $\Sigma(Q)$ is the set of subset sums of the self-reciprocal block degrees of $Q$ $($a block is a self-reciprocal irreducible factor of $Q$, or a reciprocal pair $\{f, \rho(f)\}$ of combined degree $2\deg f)$; equality holds in every case we have computed (complete enumeration at $n = 5, 7, 9$ and directed search at $n = 15, 21$), and we conjecture that it holds for all odd $n$;
\item $r = 0 \iff P \equiv 0$.
\end{enumerate}
\end{theorem}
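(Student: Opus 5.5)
Part (1) is a statement about the rank of a circulant matrix, so it will follow from Lemma~\ref{lem:circ}. By that lemma the Gram matrix $S$ of the shift basis $g_0,\dots,g_{n-1}$ is the circulant matrix of $P$. I will identify the row space of a circulant over $\Ftwo$ with the ideal $(P)\subseteq R$: the rows are the cyclic shifts $x^iP$, and these span the principal ideal generated by $P$. In $R=\Ftwo[x]/(x^n-1)$ we have $(P)=(\gcd(P,x^n-1))$, because $\gcd(P,x^n-1)$ is an $\Ftwo[x]$-combination of $P$ and $x^n-1$, and conversely $P$ is a multiple of the gcd. For a divisor $D\mid x^n-1$ the ideal $(D)$ has $\Ftwo$-dimension $n-\deg D$; this is the kernel count already used in Proposition~\ref{prop:dim}. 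Hence $r=\rank S=n-\deg\gcd(P,x^n-1)$. This argument needs no semisimplicity, so it covers even $n$ as well.

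Part (3) is then immediate. We have $r=0$ iff $\deg\gcd(P,x^n-1)=n$, iff $x^n-1\mid P$, iff $P=0$ in $R$. Equivalently, the circulant of $P$ vanishes iff $P$ does.

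For part (2), palindromy is a one-line computation: $\rho(P)=\rho(a)b+a\rho(b)=P$, since $\rho$ is a ring involution of $R$ and the two summands are swapped. For $P(1)$, note that $\rho$ fixes evaluation at $1$. Hence $P(1)=2a(1)b(1)=0$, so $(x+1)\mid P$ and thus $x+1\mid D$. To see that $D$ is self-reciprocal, apply the reciprocal to $D\mid P$ and $D\mid x^n-1$. Since $P$ equals its reversal in $R$, the reciprocal $D^*$ divides $P$ modulo $x^n-1$ and also divides $x^n-1$. Therefore $D^*\mid D$, and because the two have equal degree, $D^*=D$. The monomial-shift unit issue is handled as in Proposition~\ref{prop:reversible}: work with monic divisors of $x^n-1$, which have nonzero constant term.

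For the degree constraint with $n$ odd, $x^n-1$ is squarefree. So $D$ is a product of distinct irreducible factors, $x+1$ among them, and self-reciprocity means its factor set is closed under $f\mapsto f^*$. The factors of $Q$ fall into blocks exactly as described in the statement, so $\deg D-1$ is a sum of block degrees, that is, an element of $\Sigma(Q)$. This shows only the containment. The equality assertion is explicitly an empirical claim (enumeration at small $n$ plus a conjecture), so the proof will cite the computations rather than prove it.

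The main obstacle is the clean identification of the circulant's rank with $n-\deg\gcd(P,x^n-1)$ in the non-semisimple case, i.e.\ the ideal-dimension fact. I will discharge it by the standard cyclic-code argument: $(D)\cong\Ftwo[x]/((x^n-1)/D)$ as $\Ftwo$-spaces via $s\mapsto sD$.
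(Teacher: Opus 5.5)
Your proposal is correct and follows essentially the same route as the paper: part (1) from Lemma~\ref{lem:circ} via the circulant--ideal correspondence, palindromy and $P(1)=0$ by direct computation, and only the containment (not equality) of the gcd-degree set for odd $n$, with equality left to computation. The differences are minor. You compute the rank from the row space $(P)=(D)$, of dimension $n-\deg D$, whereas the paper computes the kernel $\big((x^n-1)/D\big)$, of dimension $\deg D$. You also spell out the argument $D^*\mid D$, $\deg D^*=\deg D$ for the self-reciprocity of $D$, which the paper compresses into a one-word ``hence''.
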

\begin{proof}
(1) By Lemma~\ref{lem:circ}, $S$ is the circulant matrix with symbol $P$. The right kernel of a circulant with symbol $P$ consists of the polynomials $v$ with $P v \equiv 0 \pmod{x^n-1}$, i.e., the ideal generated by $(x^n-1)/D$, where $D = \gcd(P, x^n-1)$; its dimension is $\deg D$, whence $r = n - \deg D$.
(2) $\rho(P) = \rho(a)\,b + a\,\rho(b) = P$, so $P$ is palindromic, hence $D = \gcd(P, x^n-1)$ is a self-reciprocal divisor of $x^n-1$. Moreover $P(1) = a(1)\,b(1) + a(1)\,b(1) = 0$ in characteristic $2$, so $(x+1) \mid D$; in particular $r \le n - 1$. For odd $n$ the factorization is squarefree, a self-reciprocal divisor containing $x+1$ is exactly $(x+1)D'$ with $D'$ a self-reciprocal divisor of $Q$, and the groupings of irreducible factors of $Q$ into self-reciprocal singletons and reciprocal pairs show that $\deg D'$ is a subset sum of the block degrees of $Q$. Whether every such divisor is attained in general remains open; complete enumeration of all $2^{2n}$ generator pairs at $n = 5, 7, 9$ gives achieved degree sets $\{1,5\}$, $\{1,7\}$, $\{1,3,7,9\}$ respectively, matching $\{1+s : s \in \Sigma(Q)\}$ in each case, and directed search attains every self-reciprocal divisor containing $x+1$ at $n = 15$ and $n = 21$. The same forced factor $x+1$ persists for even $n$ with multiplicities. (3) $r = 0 \iff S = 0 \iff$ every coefficient of $P$ vanishes.
\end{proof}

\begin{proposition}\label{prop:directed}
Fix $b \in R$ and a target self-reciprocal divisor $\mathrm{hr} \mid x^n-1$. The map $T_b: a \mapsto a\,\rho(b) + \rho(a)\,b$ is $\Ftwo$-linear, and the admissible set $\{a = g f_1 : \mathrm{hr} \mid T_b(a)\}$ is an $\Ftwo$-subspace of dimension $\ge (n - \deg g) - \deg \mathrm{hr}$.
\end{proposition}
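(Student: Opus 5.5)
Linearity of $T_b$ comes first and is routine: $\rho$ is an $\Ftwo$-linear ring involution of $R$, multiplication by the fixed elements $\rho(b)$ and $b$ is $\Ftwo$-linear, and so $T_b(a+a') = T_b(a)+T_b(a')$. Next I fix the parametrization of the admissible set. I read $a = g f_1$ as ranging over the principal ideal $(g) \subseteq R$, where $g \mid x^n-1$. By Proposition~\ref{prop:dim} (or directly, since the annihilator of $g$ is $((x^n-1)/g)$), the ideal $(g)$ has dimension $n - \deg g$ over $\Ftwo$. So I will work with the restriction $T_b|_{(g)} : (g) \to R$. This is an $\Ftwo$-linear map on a space of dimension $n-\deg g$.

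The condition $\mathrm{hr} \mid T_b(a)$ means that $T_b(a)$ lies in the ideal $(\mathrm{hr})$ of $R$. Because $\mathrm{hr} \mid x^n-1$, this is the kernel of the reduction map $\pi: R \to \Ftwo[x]/(\mathrm{hr})$, $p \mapsto p \bmod \mathrm{hr}$. The map $\pi$ is well defined precisely because $\mathrm{hr}$ divides $x^n-1$, and it is a surjective $\Ftwo$-linear map onto a space of dimension $\deg \mathrm{hr}$. The admissible set is then exactly $\ker(\pi \circ T_b|_{(g)})$. As the kernel of a linear map, it is an $\Ftwo$-subspace. Rank–nullity gives
\[
\dim \ker(\pi\circ T_b|_{(g)}) = (n-\deg g) - \rank(\pi\circ T_b|_{(g)}) \ge (n-\deg g) - \deg \mathrm{hr},
\]
because the rank is bounded by the dimension of the target $\Ftwo[x]/(\mathrm{hr})$.

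The only point needing care, though it is not a real obstacle, is the well-definedness of the condition in $R$. Divisibility by $\mathrm{hr}$ in the quotient ring is meaningful only because $\mathrm{hr} \mid x^n-1$, and I will state this explicitly when I introduce $\pi$.

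One could sharpen the bound using the palindromy of $T_b(a)$ from Theorem~\ref{thm:spectrum}(2), since $T_b(a)$ is exactly the defect symbol of $(a,b)$, and the constraints from $x+1$ are automatic. That refinement is not needed for the stated inequality, so I would only remark on it.
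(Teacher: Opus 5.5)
Your proposal is correct and follows the paper's own argument: realize the admissible set as the kernel of $a \mapsto T_b(a) \bmod \mathrm{hr}$ on the $(n-\deg g)$-dimensional space of multiples of $g$, then apply rank--nullity against the $\deg \mathrm{hr}$-dimensional quotient $R/(\mathrm{hr})$. Your explicit remark that reduction modulo $\mathrm{hr}$ is well defined on $R$ because $\mathrm{hr} \mid x^n-1$ is a small but worthwhile addition that the paper leaves implicit.
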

\begin{proof}
Linearity is immediate from the linearity of $\rho$ and of multiplication by fixed elements. The admissible set is the kernel of the linear map $a \mapsto T_b(a) \bmod \mathrm{hr}$ from the $(n-\deg g)$-dimensional space of multiples of $g$ to the $\deg \mathrm{hr}$-dimensional quotient $R/(\mathrm{hr})$.
\end{proof}

\begin{remark}
When the dimension bound is positive, Proposition~\ref{prop:directed} constructs the admissible set by linear algebra rather than exhaustive random search. In the $n=21$, $r=2$ experiment, this produced prescribed-defect ingredients in all 20 trials, compared with 0 successes in 400 uniform random trials.
\end{remark}

\section{Symplectic Extension and Assembly}\label{sec:assembly}

\begin{theorem}\label{thm:ext}
An additive code $\mathcal{C}$ with symplectic Gram rank $r$ extends to a self-orthogonal additive code of length $n + r/2$ with the same number of generators, and $r/2$ new coordinates are necessary.
\end{theorem}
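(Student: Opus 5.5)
Let the code have $t$ generators, stacked as the rows of $(A\mid B)$, and let $S = AB^{\mathsf T} + BA^{\mathsf T}$ be the symplectic Gram matrix of rank $r$. The natural approach is to append $r/2$ new coordinate pairs to every generator, $(A\mid B) \mapsto (A\,A'\mid B\,B')$ with $A', B' \in \Ftwo^{t\times r/2}$. The new Gram matrix is $S + A'B'^{\mathsf T} + B'A'^{\mathsf T}$. Self-orthogonality of the extended code then amounts to
\begin{equation*}
A'B'^{\mathsf T} + B'A'^{\mathsf T} = S \quad (\text{char } 2).
\end{equation*}
So the statement reduces to a fact about alternating forms over $\Ftwo$: an alternating matrix of rank $r$ is the Gram matrix of $t$ vectors in a symplectic space of dimension $r$, that is, of $r/2$ binary pairs, and not of fewer.

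For sufficiency I use the Darboux standardization from the Preliminaries. There is an invertible $M$ with $MSM^{\mathsf T} = \mathrm{diag}(J_2^{r/2},0)$. Let $E = \mathrm{diag}(J_2^{r/2},0)$ and choose $A'_0, B'_0 \in \Ftwo^{t\times r/2}$ as follows. For $i = 1,\dots,r/2$, row $2i-1$ of $A'_0$ is the unit vector $\epsilon_i$; row $2i$ of $B'_0$ is $\epsilon_i$; all other entries are zero. A direct check gives $A'_0B_0'^{\mathsf T} + B'_0A_0'^{\mathsf T} = E$: the row pair $(2i-1,2i)$ contributes exactly the block $J_2$, and the zero rows contribute nothing. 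Now set $A' = M^{-1}A'_0$ and $B' = M^{-1}B'_0$. Then
\begin{equation*}
A'B'^{\mathsf T} + B'A'^{\mathsf T} = M^{-1} E M^{-\mathsf T} = S.
\end{equation*}
Hence the extended generators are pairwise orthogonal. The number of generators is unchanged, and because the old coordinates are kept, the generators stay $\Ftwo$-independent, so the dimension is also unchanged.

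For necessity, suppose an extension by $k$ new coordinates is self-orthogonal. Then $S = A'B'^{\mathsf T} + B'A'^{\mathsf T} = [A'\ B']\,\Omega\,[A'\ B']^{\mathsf T}$, where $\Omega$ is the $2k\times 2k$ symplectic form matrix. This gives
\begin{equation*}
r = \rank S \le \rank[A'\ B'] \le 2k,
\end{equation*}
so $k \ge r/2$.

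The main subtlety is the claim that Darboux standardization is achievable by row operations, i.e.\ by congruence $S \mapsto MSM^{\mathsf T}$. This holds because $S$ is alternating (zero diagonal) rather than merely symmetric, and the Preliminaries already grant it. The rest is bookkeeping: the appended coordinates are ordinary $\Ffour$ positions $a'+wb'$, so the result is again an additive code of length $n + r/2$.
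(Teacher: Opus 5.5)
Your proposal is correct and follows the paper's proof. Sufficiency uses Darboux standardization with one new coordinate per hyperbolic pair, carrying a single $X$-entry on one member and a single $Z$-entry on the other. Necessity uses the observation that the alternating form contributed by $k$ new coordinates has rank at most $2k$. The only difference is cosmetic: you pull the appended columns back to the original generators via $M^{-1}$, whereas the paper extends the Darboux basis directly, and both produce the same code.
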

\begin{proof}
By Darboux standardization of the symplectic form on the generator space, the generator space decomposes as $V = V_0 \oplus \bigoplus_{i=1}^{r/2} H_i$ with $V_0$ isotropic and $H_i = \mathrm{span}(u_i, v_i)$ hyperbolic, $\omega(u_i, v_i) = 1$. Append $r/2$ coordinates and extend
\begin{equation}
u_i \mapsto u_i' = (u_i \mid z_i), \qquad v_i \mapsto v_i' = (v_i \mid x_i), \qquad V_0 \ni w \mapsto (w \mid 0),
\end{equation}
where $z_i$ (resp.\ $x_i$) is the unit vector with a single $Z$-entry (resp.\ $X$-entry) in the $i$-th new coordinate. The new coordinates contribute $z_i \cdot x_j = \delta_{ij}$ to the pairing of $u_i'$ and $v_j'$ and nothing to any other pairing, so $\omega(u_i', v_j') = \delta_{ij} + \delta_{ij} = 0$ in characteristic $2$, and the extended code is self-orthogonal. Necessity: if $c$ added coordinates make the code self-orthogonal, their contribution to the pairing of any two generators is an alternating form of rank at most $2c$ that must equal the original alternating form of rank $r$; hence $2c \ge r$.
\end{proof}

\begin{remark}\label{rem:ea}
An ingredient with symplectic Gram rank $r$ used in Theorem~\ref{thm:ext} also gives an entanglement-assisted quantum code $[[n,\, n - t + r/2,\, d;\, r/2]]_2$ under the Brun--Devetak--Hsieh construction~\cite{bdh}, with the symplectic Gram rank equal to the ebit count. Every directed-construction finalist of the sweeps has $r = 2$ and therefore has $c = 1$. The extension of Theorem~\ref{thm:ext} recovers the $r/2$ ebits at the price of $r/2$ additional physical coordinates. For example, the $[[34,11,6]]_2$ ingredient (Appendix~\ref{app:candidates}, full QC span) is an entanglement-assisted $[[33,11,7;1]]_2$, BZ-verified with $\min\swt(\mathcal{C}^{\perp_{ts}}) = 7$ and minimum words outside $\mathcal{C}$. One ebit increases the distance from $6$ to $7$ at one coordinate fewer. Against the entanglement-assisted table~\cite{grassl} (snapshot 2026-09-26), this certified $[[33,11,7;1]]_2$ is a strict improvement at $(n,k) = (33,11)$: the table lists no $c = 1$ entry at those parameters, and its distance-$7$ construction uses $c = 2$ ebits. A systematic comparison of the sweep's archive of ingredients with symplectic Gram rank $2$ against the entanglement-assisted table would come at no additional cost.
\end{remark}

\begin{theorem}\label{thm:assembly}
Let $\mathcal{C}_1, \mathcal{C}_2$ be additive codes of lengths $n_1, n_2$ with $t_1, t_2$ generators, both with symplectic Gram rank $r$. After Darboux standardization of both, the code generated by
\begin{equation}
(V_{0,1} \mid 0),\quad (0 \mid V_{0,2}),\quad (u_{1,i} \mid u_{2,i}),\quad (v_{1,i} \mid v_{2,i}),\quad 1 \le i \le r/2,
\end{equation}
is self-orthogonal with $t_1 + t_2 - r$ generators, giving a stabilizer code of length $n_1+n_2$ and dimension $n_1+n_2-(t_1+t_2-r)$ (in characteristic $2$ no twist factor is needed since $1+1=0$). Moreover, writing $w_i(s)$ for the minimum weight of a side-$i$ word with defect syndrome $s \in \Ftwo^r$,
\begin{equation}
d(\mathcal{C}^{\perp_{ts}}) = \min\Big\{d(\mathcal{C}_1^{\perp_{ts}}),\, d(\mathcal{C}_2^{\perp_{ts}}),\, \min_{s \ne 0}\big(w_1(s) + w_2(s)\big)\Big\}.
\end{equation}
\end{theorem}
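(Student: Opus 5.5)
I would prove the assembly theorem in three steps: self-orthogonality, the generator count, and the dual-distance formula.

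\textbf{Self-orthogonality.} After Darboux standardization (as in the proof of Theorem~\ref{thm:ext}), the generator space of $\mathcal{C}_i$ splits as $V_{0,i} \oplus \bigoplus_j \mathrm{span}(u_{i,j}, v_{i,j})$. Here $V_{0,i}$ is the radical of the alternating form, so it pairs to zero with everything. The hyperbolic pairs satisfy $\omega(u_{i,j}, v_{i,k}) = \delta_{jk}$, and $\omega(u_{i,j}, u_{i,k}) = \omega(v_{i,j}, v_{i,k}) = 0$. The form on concatenated words is additive: $\omega((x_1\mid x_2),(y_1\mid y_2)) = \omega_1(x_1,y_1) + \omega_2(x_2,y_2)$. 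The pairings then split into cases:
\begin{itemize}
\item Words $(V_{0,1}\mid 0)$ and $(0\mid V_{0,2})$ are radical on their side and zero on the other, so every pairing involving them vanishes.
\item For the glued words, $\omega((u_{1,j}\mid u_{2,j}),(v_{1,k}\mid v_{2,k})) = \delta_{jk}+\delta_{jk} = 0$ in characteristic $2$.
\item The $u$--$u$ and $v$--$v$ pairings vanish on each side.
\end{itemize}
Hence the assembled code $\mathcal{C}$ is self-orthogonal, and no twist factor is needed.

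\textbf{Generator count.} The listed generators are linearly independent because:
\begin{itemize}
\item each side's standardized basis is independent;
\item the $V_{0,1}$ block lives only on side $1$ and the $V_{0,2}$ block only on side $2$;
\item projecting the glued words to side $1$ gives the independent family $u_{1,i}, v_{1,i}$, which is independent of $V_{0,1}$.
\end{itemize}
So there are $(t_1-r)+(t_2-r)+r = t_1+t_2-r$ generators, and the dimension formula follows.

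\textbf{Dual-distance formula.} I would describe $\mathcal{C}^{\perp_{ts}}$ by syndromes. For a side-$i$ word $x_i$, define the defect syndrome $s_i(x_i) \in \Ftwo^r$ as the vector of pairings of $x_i$ with $u_{i,1},\dots,u_{i,r/2},v_{i,1},\dots,v_{i,r/2}$. Then $(x_1\mid x_2) \in \mathcal{C}^{\perp_{ts}}$ if and only if:
\begin{itemize}
\item $x_1 \perp V_{0,1}$ and $x_2 \perp V_{0,2}$, since orthogonality to the one-sided generators involves only the corresponding half;
\item $s_1(x_1)=s_2(x_2)$, since orthogonality to the glued generators reads $s_1 + s_2 = 0$.
\end{itemize}
So the dual is the fibered product over the common syndrome $s$. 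I interpret $w_i(s)$ as the minimum weight of a word in $V_{0,i}^\perp$ with syndrome $s$, and $w_i(0)$ as ranging over nonzero words. With $\swt(x_1\mid x_2) = \swt(x_1)+\swt(x_2)$, split on $s$:
\begin{itemize}
\item If $s \ne 0$, both halves are nonzero, and the best weight is $w_1(s)+w_2(s)$, with each half minimized independently.
\item If $s=0$, each $x_i$ lies in $\mathcal{C}_i^{\perp_{ts}}$, because the syndrome-zero condition plus $V_{0,i}^\perp$ is exactly orthogonality to all of $\mathcal{C}_i$. A nonzero word then has weight at least $\min\{d(\mathcal{C}_1^{\perp_{ts}}), d(\mathcal{C}_2^{\perp_{ts}})\}$, and this is attained by taking the other half to be $0$.
\end{itemize}
Taking the minimum over $s$ gives the formula.

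\textbf{Main obstacle.} The step needing most care is interpreting $w_i(s)$ so that it matches the fibered-product description, namely restricting to $V_{0,i}^\perp$. I would state this convention explicitly. I would also check that the syndrome map $V_{0,i}^\perp \to \Ftwo^r$ is surjective (nondegeneracy of $\omega$ on $\Ftwo^{2n_i}$), so every $w_i(s)$ is finite.
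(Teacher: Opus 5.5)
Your proposal is correct and follows essentially the same route as the paper's proof. Self-orthogonality comes from the radical and hyperbolic pairings, with $\delta_{ij}+\delta_{ij}=0$; independence comes from projecting to each side; and the distance law comes from describing $\mathcal{C}^{\perp_{ts}}$ as the fibered product over a common defect syndrome, split into the cases $s=0$ and $s\neq 0$. Your explicit convention that $w_i(s)$ ranges over side-$i$ words in $V_{0,i}^{\perp}$ is exactly what the paper uses implicitly when it says each side must be orthogonal to its isotropic part.
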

\begin{proof}
Self-orthogonality: isotropic blocks pair trivially; $\omega((u_{1,i}|u_{2,i}), (v_{1,j}|v_{2,j})) = \delta_{ij} + \delta_{ij} = 0$; mixed pairings vanish because $V_{0,k}$ is orthogonal to everything in $\mathcal{C}_k$; the dimension count is $(t_1 - r) + (t_2 - r) + r$, with independence from the disjoint support structure. For the distance law, $(w_1 \mid w_2) \in \mathcal{C}^{\perp_{ts}}$ iff each side is orthogonal to its isotropic part and the two defect syndromes agree; $s = 0$ decouples the sides, and $s \neq 0$ adds the two minima over disjoint supports.
\end{proof}

\begin{remark}\label{rem:tinv}
At $r = 2$, all six identifications $M \in \Sp(2,2)$ between the hyperbolic parts left $d_Q$ unchanged in the tested instances. This is a numerical observation only; its general validity is open (Section~\ref{sec:concl}).
\end{remark}

\section{The Certification Pipeline}\label{sec:pipeline}

\begin{figure}[t]
\centering
\begin{tikzpicture}[
  font=\scriptsize,
  stage/.style={rectangle,draw,thick,rounded corners=1.5pt,align=center,minimum height=5.2mm,inner sep=2.2pt},
  arr/.style={-{Stealth[length=2mm]},thick},
  note/.style={font=\tiny,align=center}]
\node[stage,fill=black!8] (gen) at (0,0) {QC candidate\\generation};
\node[stage] (chk) at (2.5,0) {self-orthogonality\\check ($P{\equiv}0$)};
\node[stage] (isd) at (5.0,0) {ISD screen\\(early exit)};
\node[stage] (bz) at (7.5,0) {BZ exact\\(computation)};
\node[stage,fill=black!8] (wit) at (7.5,-1.5) {witness\\re-verify};
\node[stage,fill=black!15] (rec) at (5.0,-1.5) {certified\\code};
\node[stage,dashed] (evi) at (2.5,-1.5) {evidence-level\\candidate};
\draw[arr] (gen) -- (chk);
\draw[arr] (chk) -- (isd);
\draw[arr] (isd) -- node[midway,above,note]{survivors} (bz);
\draw[arr] (bz) -- (wit);
\draw[arr] (wit) -- (rec);
\draw[arr,dashed] (isd) -- node[midway,left,note]{witness; BZ\\not completed} (evi);
\end{tikzpicture}
\caption{Three-stage certification pipeline. Solid path: QC candidate generation $\to$ self-orthogonality filter $\to$ ISD screening with early exit $\to$ exact Brouwer--Zimmermann computation $\to$ independent witness re-verification. The output is a certified code; record status is assessed separately against Grassl's code table. The dashed exit applies only to candidates with explicit witnesses but no completed BZ certificate, which are reported at evidence level only (Remark~\ref{rem:evidence}).}
\label{fig:pipeline}
\end{figure}

\emph{Stage 1: ISD screening with early exit.} Lee--Brickell information-set decoding is applied to the additive $\Ffour$-representation of $\mathcal{C}^{\perp_{ts}}$, in which a word $(a \mid b)$ is read as the $\Ffour$-vector $a + w b$, so that the symbol weight of a word is exactly its symplectic weight $\swt(a \mid b)$. Each screening run records its time budget and early-exit threshold. ISD yields only upper bounds; shallow screens can overestimate them, as illustrated by an archived candidate refined from $38$ to $19$.

\emph{Stage 2: exact certification.} Magma's Brouwer--Zimmermann \texttt{MinimumWeight} on the additive code $\mathcal{C}^{\perp_{ts}}$ proves $d_Q \ge d(\mathcal{C}^{\perp_{ts}})$. Table~\ref{tab:cost} records measured runtimes and the compute budget.

\emph{Stage 3: witness protocol.} Every claimed distance carries an explicit vector $W$: $W \in \mathcal{C}^{\perp_{ts}}$ and $W \notin \mathcal{C}$ re-verified by an independent linear-algebra computation (Appendix~\ref{app:witnesses}).

\begin{table}[h]
\centering
\caption{Measured exact-certification cost. CPU time is the Magma CPU time of the exact Brouwer--Zimmermann computation on the additive dual.}
\label{tab:cost}
\begin{tabular}{ccccc}
\toprule
Code & $\dim \mathcal{C}^{\perp_{ts}}$ & target $d$ & CPU time & status \\
\midrule
$[[56,9,12]]$  & 65 & 12 & 8.6\,s & certified \\
$[[62,16,11]]$ & 78 & 11 & 29\,s & certified \\
$[[62,17,11]]$ & 79 & 11 & 30\,s & certified \\
$[[73,10,15]]$ & 83 & 15 & 11\,478\,s & certified \\
$[[78,13,14]]$ & 91 & 14 & 1\,584\,s & certified \\
$[[78,14,14]]$ & 92 & 14 & 1\,699\,s & certified \\
$[[80,7,16]]$  & 87 & 16 & 14\,976\,s & certified \\
$[[80,7,16]]$ (member 2) & 87 & 16 & 19\,041\,s & certified \\
$[[91,15,17]]$ & 106 & 17 & 1\,625\,244.2\,s & certified \\
$[[93,15,17]]$ & 108 & 17 & 1\,850\,604.4\,s & certified \\
$[[93,16,17]]$ & 109 & 17 & 1\,996\,974.5\,s & certified \\
\midrule
$[[60,8,13]]$  & 68 & 13 & 183.3\,s & certified \\
$[[60,12,12]]$ & 72 & 12 & 20.9\,s & certified \\
$[[60,13,11]]$ & 73 & 11 & 22.3\,s & certified \\
$[[60,13,12]]$ & 73 & 12 & 23.4\,s & certified \\
$[[60,14,11]]$ & 74 & 11 & 25.0\,s & certified \\
$[[60,15,11]]$ & 75 & 11 & 25.2\,s & certified \\
$[[60,16,11]]$ & 76 & 11 & 31.0\,s & certified \\
$[[60,19,10]]$ & 79 & 10 & 56.5\,s & certified \\
$[[60,20,10]]$ & 80 & 10 & 505.1\,s & certified \\
$[[62,7,14]]$  & 69 & 14 & 1\,937.0\,s & certified \\
$[[63,7,14]]$  & 70 & 14 & 228.5\,s & certified \\
\midrule
$[[72,6]]$ (refuted) & 78 & 12 & 15\,201\,s & certified $d_Q = 12$; no record (Rem.~\ref{rem:72candidate}) \\
$[[127,21,\cdot]]$ & 148 & $\ge 20$ & not estimated & not completed within current compute budget \\
\bottomrule
\end{tabular}
\end{table}

\emph{Refutation case studies.} The pipeline also rejects unsupported archived candidates. An archived $[[70,24]]$ example gives $d=10$ rather than the ISD-suggested $11$ (source record \texttt{rec\_ext\_70\_24.txt}). The $[[72,6,16]]_2$ candidate is structurally invalid as recorded, and its legitimate $T$-image has certified $d_Q=12$ (Remark~\ref{rem:72candidate}); see also Example~\ref{ex:60}.

\begin{remark}\label{rem:env}
All exact certifications use Magma V2.25-3 \texttt{MinimumWeight} (Brouwer--Zimmermann) applied to the additive dual. The timings of Table~\ref{tab:cost} are Magma CPU times on an Intel Xeon Gold 5218R ($2.10$\,GHz, 128\,GB RAM). Generator matrices, witnesses, scripts, and full BZ logs are archived with SHA-256 manifests (Appendix~\ref{app:artifacts}), so each certified claim can be rerun or checked against the logs.
\end{remark}

\begin{remark}\label{rem:91cert}
For $[[91,15,17]]_2$, $\dim\mathcal{C}=76$ and $\dim\mathcal{C}^{\perp_{ts}}=106$. The archived weight-$17$ witness and Magma's returned minimum word both lie in $\mathcal{C}^{\perp_{ts}}\setminus\mathcal{C}$; hence the BZ lower bound and the witness upper bound coincide at $d_Q=17$ (\texttt{91\_15\_cert.m}, \texttt{91\_15\_cert.log}).
\end{remark}

\begin{remark}\label{rem:93cert}
For $[[93,15,17]]_2$, the defect symbol vanishes identically, $\dim\mathcal{C}=78$, and $\dim\mathcal{C}^{\perp_{ts}}=108$. The weight-$17$ witness in Appendix~\ref{app:witnesses} belongs to $\mathcal{C}^{\perp_{ts}}\setminus\mathcal{C}$, as independently re-verified by the orthogonality equation and rank augmentation. Magma's \texttt{MinimumWeight} returns $d(\mathcal{C}^{\perp_{ts}})=17$ after $1\,850\,604.391$\,s CPU, and the returned minimum word also lies outside $\mathcal{C}$. An independent computation with \texttt{cert\_odd\_4441} again returns $17$ after $1\,961\,291.640$\,s CPU. The latter is an independent check of the same code, not an additional certified member. The cumulative time above uses the primary run, with the independent rerun reported separately. The scripts and logs are archived as \texttt{93\_15\_cert.m}, \texttt{93\_15\_cert.log}, \texttt{cert\_odd\_4441.m}, and \texttt{cert\_odd\_4441.log}.
\end{remark}

\emph{Family calibration.} Equal-budget Stage-1 runs on random same-family members quantify selection effects (Section~\ref{sec:bimodal}; at $n=127$ the family is unimodal at best-found $22$--$25$).

\section{The Certified Record-Breaking Codes}\label{sec:record}

\begin{comptheorem}\label{thm:seven}
{\sloppy Each of $[[73,10,15]]_2$, $[[56,9,12]]_2$, $[[62,16,11]]_2$, $[[62,17,11]]_2$, $[[78,13,14]]_2$, $[[78,14,14]]_2$, $[[80,7,16]]_2$, $[[91,15,17]]_2$, $[[93,15,17]]_2$, $[[93,16,17]]_2$, $[[60,8,13]]_2$, $[[60,12,12]]_2$, $[[60,13,11]]_2$, $[[60,13,12]]_2$, $[[60,14,11]]_2$, $[[60,15,11]]_2$, $[[60,16,11]]_2$, $[[62,7,14]]_2$, $[[63,7,14]]_2$, $[[60,19,10]]_2$, $[[60,20,10]]_2$ exists, has the stated exact minimum distance, and improves on the best-known lower bound for its parameters in Grassl's code table~\cite{grassl}; every entry is improved by exactly one, except the $[[60,13]]_2$ entry, which is improved by two (the $[[60,13,12]]_2$ member; a second certified member has distance $11$).\par}
\end{comptheorem}
\begin{proof}[Construction and certification]
Eighteen of the twenty-one codes are palindromic QC codes $\langle(g v_1, g v_2)\rangle$ (Theorem~\ref{thm:pal}); the remaining $[[62,7,14]]_2$, $[[60,19,10]]_2$, and $[[60,20,10]]_2$ are self-orthogonal 1-generator QC codes $\langle(g v_1, g v_2)\rangle$ with vanishing defect symbol whose multipliers are \emph{not} palindromic (the latter two were found by the directed construction of Proposition~\ref{prop:directed}; cf.\ Remark~\ref{rem:outside}); polynomials in Appendix~\ref{app:constr}. Stage 2 gives $d_Q \ge d$ (Table~\ref{tab:cost}) and Stage 3 gives explicit weight-$d$ logical operators (Appendix~\ref{app:witnesses}), hence $d_Q = d$ exactly. For members with $d(\mathcal{C}^{\perp_{ts}}) < d(\mathcal{C})$ the last step is alternatively immediate from the two exact BZ values alone, since every minimum-weight dual word then lies outside $\mathcal{C}$. Ancestor structure per Table~\ref{tab:ancestor}.
\end{proof}

\begin{table}[h]
\centering
\caption{The certified record-breaking codes. The column $d(\mathcal{C})$ is the exact stabilizer minimum weight, an exact inequivalence invariant; the column `CPU time' is the Magma CPU time of the exact computation on the additive dual (Section~\ref{sec:pipeline}).}
\label{tab:certified}
\begin{tabular}{cccccc}
\toprule
code $[[n,k,d]]_2$ & construction & $d(\mathcal{C})$ & best-known~\cite{grassl} & improvement & CPU time \\
\midrule
$[[73,10,15]]$  & QC, $\deg g = 9$  & 20 & $[[73,10,14]]_2$ & $+1$ & 11\,478\,s \\
$[[56,9,12]]$   & QC, $\deg g = 9$  & 16 & $[[56,9,11]]_2$ & $+1$ & 8.6\,s \\
$[[62,16,11]]$  & QC, $\deg g = 16$ & 18 & $[[62,16,10]]_2$ & $+1$ & 29\,s \\
$[[62,17,11]]$  & QC, $\deg g = 17$ & 20 & $[[62,17,10]]_2$ & $+1$ & 30\,s \\
$[[78,13,14]]$  & QC, $\deg g = 13$ & 16 & $[[78,13,13]]_2$ & $+1$ & 1\,584\,s \\
$[[78,14,14]]$  & QC, $\deg g = 14$ & 20 & $[[78,14,13]]_2$ & $+1$ & 1\,699\,s \\
$[[80,7,16]]$   & QC, $\deg g = 7$  & --- & $[[80,7,15]]_2$ & $+1$ & 14\,976\,s \\
$[[91,15,17]]$  & QC, $\deg g = 15$ & --- & $[[91,15,16]]_2$ & $+1$ & 1\,625\,244.2\,s \\
$[[93,15,17]]$  & QC, $\deg g = 15$ & --- & $[[93,15,16]]_2$ & $+1$ & 1\,850\,604.4\,s \\
$[[93,16,17]]$  & QC, $\deg g = 16$ & --- & $[[93,16,16]]_2$ & $+1$ & 1\,996\,974.5\,s \\
\midrule
$[[60,8,13]]$   & QC, $\deg g = 8$  & 17 & $[[60,8,12]]_2$ & $+1$ & 183.3\,s \\
$[[60,12,12]]$  & QC, $\deg g = 9$  & 20 & $[[60,12,11]]_2$ & $+1$ & 20.9\,s \\
$[[60,13,11]]$  & QC, $\deg g = 13$ & 16 & $[[60,13,10]]_2$ & $+1$ & 22.3\,s \\
$[[60,13,12]]$  & QC, $\deg g = 13$ & 12 & $[[60,13,10]]_2$ & $+2$ & 23.4\,s \\
$[[60,14,11]]$  & QC, $\deg g = 12$ & 18 & $[[60,14,10]]_2$ & $+1$ & 25.0\,s \\
$[[60,15,11]]$  & QC, $\deg g = 13$ & 16 & $[[60,15,10]]_2$ & $+1$ & 25.2\,s \\
$[[60,16,11]]$  & QC, $\deg g = 13$ & 20 & $[[60,16,10]]_2$ & $+1$ & 31.0\,s \\
$[[60,19,10]]$  & QC, $\deg g = 16$ & 18 & $[[60,19,9]]_2$ & $+1$ & 56.5\,s \\
$[[60,20,10]]$  & QC, $\deg g = 18$ & 18 & $[[60,20,9]]_2$ & $+1$ & 505.1\,s \\
$[[62,7,14]]$   & QC, $\deg g = 5$  & 16 & $[[62,7,13]]_2$ & $+1$ & 1\,937.0\,s \\
$[[63,7,14]]$   & QC, $\deg g = 6$  & 14 & $[[63,7,13]]_2$ & $+1$ & 228.5\,s \\
\bottomrule
\end{tabular}
\end{table}

\paragraph{Improvement margins.}\label{rem:margin}
Most improvements in Table~\ref{tab:certified} are $+1$ because we certify an entry as soon as one member provably clears the current table. This is a search strategy, not a ceiling of the family. The $+2$ margin at $[[60,13]]_2$ comes from two certified members at distances $11$ and $12$; the $d=11$ member alone would have certified the usual $+1$. A certified $+2$ elsewhere would first require a candidate with no ISD-visible word at the $+1$ weight and then a BZ computation at the next distance. No such candidate appeared in the $9\,620$-code archive at the larger entries; the required BZ cost would have to be evaluated at the next distance level. The lift-anchored pipeline of Section~\ref{sec:anchored} is intended to prescribe the margin in advance at longer lengths, where the dual dimensions are $\le 110$ and certification fits the present compute budget.

\paragraph{Propagation improvements.}\label{rem:derived}
Standard propagation rules~\cite{crss,grassl23css} applied to the certified codes improve thirty-four further entries of the table~\cite{grassl} at no computational cost (Table~\ref{tab:derived}). Lengthening $[[n,k,d]]_2 \mapsto [[n{+}1,k,d]]_2$ by adjoining a one-qubit stabilizer preserves the exact quantum distance and accounts for twenty-five of these entries. To see this, write the adjoined stabilizer as $Z$ on the new coordinate. The extended dual consists of words $(v\mid\alpha)$ with $v\in\mathcal{C}^{\perp_{ts}}$ and $\alpha\in\{I,Z\}$; such a word lies outside the extended stabilizer exactly when $v\notin\mathcal{C}$. Choosing $\alpha=I$ attains the parent distance, while $\alpha=Z$ adds one to the weight. Applied to $[[91,15,17]]_2$, this rule gives exact codes $[[n,15,17]]_2$ for $92\le n\le100$, improving nine entries whose previous Grassl lower bounds were $16$. The explicit \hyperref[rem:subcode]{subcode construction} gives nine further improvements at $(n,14)$ for $91\le n\le99$. Subcode propagation independently produces the parameters $[[62,16,\ge 11]]_2$ and $[[78,13,\ge 14]]_2$ of two certified members. Lengthening the new $[[62,7,14]]_2$ and $[[60,16,11]]_2$ members likewise reproduces the parameters of the constructed $[[63,7,14]]_2$ and of a certified $[[62,16,11]]_2$ member, which are consistency checks rather than new entries; puncturing the single-component certified codes yields only ties. The lengthened codes are degenerate by construction, and their distances are inherited from the certified parents without new computation.

\begin{table}[h]
\centering
\caption{Derived improvements by propagation and subcode construction. Each entry improves the Grassl lower bound~\cite{grassl} by one.}
\label{tab:derived}
\begin{tabular}{cccc}
\toprule
derived code & parent & rule & best-known lower bound~\cite{grassl} \\
\midrule
$[[57,9,{\ge}12]]_2$  & $[[56,9,12]]_2$  & $L$   & 11 \\
$[[58,9,{\ge}12]]_2$  & $[[56,9,12]]_2$  & $L^2$ & 11 \\
$[[59,9,{\ge}12]]_2$  & $[[56,9,12]]_2$  & $L^3$ & 11 \\
$[[74,10,{\ge}15]]_2$ & $[[73,10,15]]_2$ & $L$   & 14 \\
$[[75,10,{\ge}15]]_2$ & $[[73,10,15]]_2$ & $L^2$ & 14 \\
$[[76,10,{\ge}15]]_2$ & $[[73,10,15]]_2$ & $L^3$ & 14 \\
$[[77,10,{\ge}15]]_2$ & $[[73,10,15]]_2$ & $L^4$ & 14 \\
$[[78,10,{\ge}15]]_2$ & $[[73,10,15]]_2$ & $L^5$ & 14 \\
$[[79,13,{\ge}14]]_2$ & $[[78,13,14]]_2$ & $L$   & 13 \\
$[[79,14,{\ge}14]]_2$ & $[[78,14,14]]_2$ & $L$   & 13 \\
$[[81,7,{\ge}16]]_2$  & $[[80,7,16]]_2$  & $L$   & 15 \\
\midrule
$[[61,8,{\ge}13]]_2$  & $[[60,8,13]]_2$  & $L$   & 12 \\
$[[62,8,{\ge}13]]_2$  & $[[60,8,13]]_2$  & $L^2$ & 12 \\
$[[61,12,{\ge}12]]_2$ & $[[60,12,12]]_2$ & $L$   & 11 \\
$[[61,13,{\ge}12]]_2$ & $[[60,13,12]]_2$ & $L$   & 11 \\
$[[61,16,{\ge}11]]_2$ & $[[60,16,11]]_2$ & $L$   & 10 \\
\midrule
$[[92,15,17]]_2$  & $[[91,15,17]]_2$ & $L$   & 16 \\
$[[93,15,17]]_2$  & $[[91,15,17]]_2$ & $L^2$ & 16 \\
$[[94,15,17]]_2$  & $[[91,15,17]]_2$ & $L^3$ & 16 \\
$[[95,15,17]]_2$  & $[[91,15,17]]_2$ & $L^4$ & 16 \\
$[[96,15,17]]_2$  & $[[91,15,17]]_2$ & $L^5$ & 16 \\
$[[97,15,17]]_2$  & $[[91,15,17]]_2$ & $L^6$ & 16 \\
$[[98,15,17]]_2$  & $[[91,15,17]]_2$ & $L^7$ & 16 \\
$[[99,15,17]]_2$  & $[[91,15,17]]_2$ & $L^8$ & 16 \\
$[[100,15,17]]_2$ & $[[91,15,17]]_2$ & $L^9$ & 16 \\
\midrule
$[[91,14,17]]_2$ & $[[91,15,17]]_2$ & $S$ & 16 \\
$[[92,14,17]]_2$ & $[[92,15,17]]_2$ & $S$ & 16 \\
$[[93,14,17]]_2$ & $[[93,15,17]]_2$ & $S$ & 16 \\
$[[94,14,17]]_2$ & $[[94,15,17]]_2$ & $S$ & 16 \\
$[[95,14,17]]_2$ & $[[95,15,17]]_2$ & $S$ & 16 \\
$[[96,14,17]]_2$ & $[[96,15,17]]_2$ & $S$ & 16 \\
$[[97,14,17]]_2$ & $[[97,15,17]]_2$ & $S$ & 16 \\
$[[98,14,17]]_2$ & $[[98,15,17]]_2$ & $S$ & 16 \\
$[[99,14,17]]_2$ & $[[99,15,17]]_2$ & $S$ & 16 \\
\bottomrule
\end{tabular}
\end{table}

\paragraph{Subcode propagation.}\label{rem:subcode}
Write $W=(w_a\mid w_b)$ for the certified weight-$17$ logical of $[[91,15,17]]_2$ in Appendix~\ref{app:witnesses}, and let $U=xW=(xw_a\mid xw_b)$. Direct computation gives $\omega(W,U)=0$ and $W+U\notin\mathcal{C}$. Since the dual is cyclic, $U\in\mathcal{C}^{\perp_{ts}}\setminus\mathcal{C}$. Adjoin $U$ to the stabilizer, writing $\mathcal{C}_S=\mathcal{C}+\{0,U\}$ and calling this operation $S$. Then $\mathcal{C}_S$ is self-orthogonal, $\dim\mathcal{C}_S=77$, and $k=14$. The original witness remains a logical operator: $\omega(W,U)=0$ puts $W$ in $\mathcal{C}_S^{\perp_{ts}}$, while $W+U\notin\mathcal{C}$ puts it outside $\mathcal{C}_S$. Every word of $\mathcal{C}_S^{\perp_{ts}}\setminus\mathcal{C}_S$ also lies in $\mathcal{C}^{\perp_{ts}}\setminus\mathcal{C}$, so the parent lower bound applies. The retained witness therefore makes the subcode distance exactly $17$.

Zero-padding $W$ and $U$ and adjoining one-qubit stabilizers give exact codes $[[n,14,17]]_2$ for $91\le n\le101$. The nine codes with $91\le n\le99$ improve the Grassl lower bound $16$ and appear in Table~\ref{tab:derived}. At $n=100$ and $101$ they tie the lower bound $17$ and supply explicit codes for entries marked \texttt{Code Missing Construction}. The same missing-construction status holds at $[[101,15]]_2$, where $L^{10}$ applied to $[[91,15,17]]_2$ gives an explicit $[[101,15,17]]_2$ code. These three table-tying constructions are collected in Table~\ref{tab:explicit}. All $22$ propagated constructions with $91\le n\le101$ and $k\in\{14,15\}$ are checked by the archived \texttt{91\_15\_propagation\_cert.m} and \texttt{91\_15\_propagation\_cert.log}: stabilizer ranks, self-orthogonality, dual membership of the weight-$17$ witness, and non-membership in the stabilizer all pass.

\begin{table}[h]
\centering
\caption{Table-tying explicit constructions for Grassl entries marked \texttt{Code Missing Construction}.}
\label{tab:explicit}
\begin{tabular}{cccc}
\toprule
code & parent and rule & Grassl entry & present construction \\
\midrule
$[[101,15,17]]_2$ & $[[91,15,17]]_2$, $L^{10}$ & 17, GV, missing construction & explicit, $d=17$ \\
$[[100,14,17]]_2$ & $[[100,15,17]]_2$, $S$ & 17, GV, missing construction & explicit, $d=17$ \\
$[[101,14,17]]_2$ & $[[101,15,17]]_2$, $S$ & 17, GV, missing construction & explicit, $d=17$ \\
\bottomrule
\end{tabular}
\end{table}

\paragraph{Independent certified members.}\label{rem:ineq}
We certified four independent $[[73,10,15]]_2$ codes; the two with computed stabilizer minima both have $d(\mathcal{C}) = 20$, so the invariant does not distinguish them. We certified two independent $[[78,13,14]]_2$ codes; their exact stabilizer minimum weights $16$ and $18$ prove them inequivalent. Of three independent $[[78,14,14]]_2$ codes (stabilizer minima $20, 18, 20$), the one with minimum $18$ is inequivalent to the other two; the two with minimum $20$ are not distinguished by this invariant (nor by the pure-$X$/$Z$ section minima, both $(20,24)$) and may be equivalent. The $[[60,13]]_2$ entry has two certified members of distances $11$ and $12$, which are inequivalent by their distances alone. The $[[60,19]]_2$ entry has a single certified member $[[60,19,10]]_2$. The $[[80,7]]_2$ entry has two certified members at $d = 16$, sharing the same gcd polynomial $x^7+x^5+x^2+1$ yet provably distinct as codes (the generator word of either lies outside the other); whether they are monomially equivalent is undetermined. The $[[93,15]]_2$ entry has a direct palindromic QC construction and a distinct degenerate construction obtained by lengthening $[[91,15,17]]_2$ twice. In total, twenty-eight codes are independently certified across the twenty entries (constructions in Appendix~\ref{app:constr}).

\paragraph{GF(4)-linearity.}\label{rem:gf4}
All twenty-eight direct certified codes fail the $\mathrm{GF}(4)$-linearity closure test: the $\omega$-map $(a \mid b) \mapsto (b \mid a{+}b)$ sends each generator word outside the code (verified by rank computation). Hence the direct record constructions in this paper lie strictly outside the $\mathrm{GF}(4)$-linear (Hermitian) stabilizer class: the improvements are attributable to genuinely additive codes, not to classical linear codes in disguise.

\subsection{Comparison with prior work}\label{sec:comparison}
Table~\ref{tab:literature} positions the certified codes against the published record. The Grassl-table incumbents are classical derivations or unclassified table entries; where generator matrices are archived, we found no quasi-cyclic incumbents. The closest published code is the $[[73,9,15]]_2$ of \cite{ezerman-nso}: it has the same length and distance but one fewer logical qubit, so our $[[73,10,15]]_2$ strictly improves it. A parameter-exact search of the recent quantum QC literature \cite{ezerman-nso,ezerman-qc,kasai,galindo,aly,guan,guan-symp,lv21,benjwal24,yuzhu24} found no published code with the certified parameters of Table~\ref{tab:certified} or the candidate parameters of Table~\ref{tab:evidence}; this search is limited to the references listed above. On certification practice, \cite{ezerman-nso} computes distances exactly in Magma and reports CPU time, the same standard that our pipeline meets; \cite{dkp} gives probabilistic certification, while \cite{bravyi} relies on heuristic decoders for distance estimates.

On the methodological side, the two closest works appeared in volume 71, issue 1, of \emph{IEEE Transactions on Information Theory}~\cite{ezerman-nso,guan-symp}. The three lines are complementary rather than competing: \cite{ezerman-nso} characterizes nearly self-orthogonal \emph{linear} quasi-twisted codes, and \cite{guan-symp} gives necessary-and-sufficient symplectic self-orthogonality conditions for \emph{linear} quasi-cyclic codes of even index, while every certified code of the present paper is genuinely additive and lies strictly outside the $\mathrm{GF}(4)$-linear class (the \hyperref[rem:gf4]{GF(4)-linearity discussion}). Our additions are a lifting mechanism, an exact distance law, and an analysis of the rank and degree structure of the defect symbol on the theory side, and a verification artifact package containing scripts, logs, witnesses, and SHA-256 manifests on the reproducibility side (Appendix~\ref{app:artifacts}).

\begin{table}[h]
\centering
\caption{Certified codes and evidence-level candidates vs.\ published best constructions. The Gilbert--Varshamov (GV) bound entries are marked ``GV bound only''.}
\label{tab:literature}
\begin{tabular}{cccc}
\toprule
$[[n,k]]$ & best-known~\cite{grassl} (source) & closest published code & this paper \\
\midrule
$[[73,10]]$ & 14, cyclic-derived (2008) & $[[73,9,15]]$ \cite{ezerman-nso} & $\mathbf{15}$ ($+1$) \\
$[[56,9]]$ & 11, derived (2006) & --- & $\mathbf{12}$ ($+1$) \\
$[[62,16]]$ & 10, shortened cyclic (2006) & --- & $\mathbf{11}$ ($+1$) \\
$[[62,17]]$ & 10, shortened cyclic (2006) & --- & $\mathbf{11}$ ($+1$) \\
$[[78,13]]$ & 13, lengthened cyclic (2006) & --- & $\mathbf{14}$ ($+1$) \\
$[[78,14]]$ & 13, random-derived (2022) & --- & $\mathbf{14}$ ($+1$) \\
$[[80,7]]$ & 15, derived (2006) & --- & $\mathbf{16}$ ($+1$) \\
$[[91,15]]$ & 16, cyclic-derived (2006) & --- & $\mathbf{17}$ ($+1$) \\
$[[93,15]]$ & 16, cyclic-derived (2006) & --- & $\mathbf{17}$ ($+1$) \\
$[[93,16]]$ & 16, cyclic-derived (2006) & --- & $\mathbf{17}$ ($+1$) \\
\midrule
$[[60,8]]$ & 12, table (2026) & --- & $\mathbf{13}$ ($+1$) \\
$[[60,12]]$ & 11, table (2026) & --- & $\mathbf{12}$ ($+1$) \\
$[[60,13]]$ & 10, table (2026) & --- & $\mathbf{12}$ ($+2$) \\
$[[60,14]]$ & 10, table (2026) & --- & $\mathbf{11}$ ($+1$) \\
$[[60,15]]$ & 10, table (2026) & --- & $\mathbf{11}$ ($+1$) \\
$[[60,16]]$ & 10, table (2026) & --- & $\mathbf{11}$ ($+1$) \\
$[[60,19]]$ & 9, table (2026) & --- & $\mathbf{10}$ ($+1$) \\
$[[60,20]]$ & 9, table (2026) & --- & $\mathbf{10}$ ($+1$) \\
$[[62,7]]$ & 13, table (2026) & --- & $\mathbf{14}$ ($+1$) \\
$[[63,7]]$ & 13, table (2026) & --- & $\mathbf{14}$ ($+1$) \\
\midrule
$[[127,21]]$ & 20, GV bound only & --- & $\le 21$ (explicit candidate) \\
$[[153,16]]$ & 26, GV bound only & --- & $\le 30$ \\
$[[153,17]]$ & 26, GV bound only & --- & $\le 27$ \\
$[[130,16]]$ & 22, GV bound only & --- & $\le 22$ (explicit candidate) \\
\bottomrule
\end{tabular}
\end{table}

\section{Evidence-Level Candidates}\label{sec:candidates}

Table~\ref{tab:evidence} lists candidates with explicit witnesses (rigorous upper bounds) but no certified lower bound; exact certification at these lengths is beyond the compute budget. No table-breaking claim is made for them.

\begin{remark}\label{rem:evidence}
A candidate $[[n,k,\le d]]_2$ is called \emph{evidence-level} at threshold $\tau$ when (i) an explicit logical operator of weight $d$ outside $\mathcal{C}$ is given and re-verified, so $d_Q \le d$ rigorously, and (ii) an information-set decoding search with a documented time budget and early-exit threshold finds no word of weight $\le \tau$ in $\mathcal{C}^{\perp_{ts}} \setminus \mathcal{C}$, while the zero-model expectation $E(w) = \binom{n}{w} 3^w / 2^{2n-\dim\mathcal{C}^{\perp_{ts}}}$ of such words is below the detection threshold of the search. The number of independent information sets and the run parameters are recorded in the screening logs. Condition (ii) is heuristic evidence for $d_Q \ge \tau{+}1$, not a proof; only exact computation, such as the BZ computation, can establish the required lower bound. The threshold $\tau$ is chosen relative to the Grassl table value. Certification constitutes a new record only when the entry has no certified construction and the certified distance exceeds that value.
\end{remark}

\begin{table}[h]
\centering
\caption{Evidence-level candidates (witness-verified upper bounds only). The $n=93$ candidate comes from the odd-length sweep ($4\,700$ candidates, $n \in \{45,\dots,95\}$ odd; Appendix~\ref{app:candidates}).}
\label{tab:evidence}
\begin{tabular}{cccc}
\toprule
Code & witness & best-known~\cite{grassl} & status \\
\midrule
$[[127,21,\le 21]]_2$ & weight 21 & 20 (GV only) & heuristic $d \in \{20,21\}$ (zero model) \\
$[[153,16,\le 30]]_2$ & weight 30 & 26 (GV only) & best-found $d=30$; no certified lower bound \\
$[[153,17,\le 27]]_2$ & weight 27 & 26 (GV only) & witness at the break threshold \\
$[[130,16,\le 22]]_2$ & weight 22 & 22 (GV only) & explicit construction; table value $22$ (no certified lower bound) \\
$[[93,5,\le 20]]_2$ & weight 20 & 19 & potential $+1$; primitive witness \\
$[[34,11,6]]_2$ & constructive & 6 & constructive candidate; exact distance not certified \\
\bottomrule
\end{tabular}
\end{table}

The archived $[[72,6,16]]_2$ example is not retained as an evidence-level candidate: its recorded generator is not self-orthogonal under $\omega$, and its legitimate $T$-image has certified $d_Q=12$ (Remark~\ref{rem:72candidate}).

\section{Conclusion and Open Problems}\label{sec:concl}

The palindromic QC family ensures self-orthogonality. Ancestor lifting relates the distance structure to the divisor lattice of $n$, and the ancestor sieve together with the certification pipeline yields twenty-one certified record-breaking quantum codes in twenty entries. Nineteen entries improve by one and the $[[60,13]]_2$ entry by two. Propagation improves thirty-four further entries, including the exact-distance chains $[[n,15,17]]_2$ for $92\le n\le100$ and $[[n,14,17]]_2$ for $91\le n\le99$. We also give explicit constructions for the table-tying GV entries $[[101,15,17]]_2$, $[[100,14,17]]_2$, and $[[101,14,17]]_2$, whose table records mark the construction as missing. Certified distances are exact, and all further candidates are stated at evidence level only. Several problems remain open.

The data in Table~\ref{tab:ancestor} suggest that the divisor-lattice lift bound often explains the dominant distance scale, although strict deviations occur in several examples, including $[[63,7,14]]_2$ and the mixed branches of $[[91,15,17]]_2$ and $[[93,15,17]]_2$.

\begin{conjecture}\label{conj:sat}
If $\sigma(g)=0$, then
\[
d(\mathcal C^{\perp_{ts}})=2d(\mathcal C_m^{\perp_{ts}}).
\]
We further conjecture that strict deviations are controlled by $\sigma(g)$ and the local fold structure.
\end{conjecture}

\begin{conjecture}
For the code families studied here, the probability that ISD finds at least one weight-$w$ logical operator increases markedly once
\[
nE(w)\gtrsim N/c,
\]
where $N$ is the number of independent information sets and $c$ depends on the code family and the ISD implementation. In the experiments at $n=127$ and $n=153$, $c$ is approximately $50$.
\end{conjecture}

Also open are the identification-independence question of Remark~\ref{rem:tinv}, a symplectic BCH-type bound for unit-ratio graph codes derivable from the defining structure of $c\in S^\times$, and certification of the evidence-level candidates, which lies beyond the compute budget of Table~\ref{tab:cost}.

\appendix[Explicit construction data, witnesses, and artifacts]

\subsection{Constructions of the certified codes}\label{app:constr}
All codes are $\mathcal{C} = \langle(g v_1, g v_2)\rangle \subseteq R^2$; the stabilizer matrix is the set of cyclic shifts of $(g v_1, g v_2)$. The generator polynomial $g(x)$ and the multipliers $v_1(x), v_2(x)$ of each certified code are listed in Table~\ref{tab:constr} ($n \in \{56, 62, 73, 78, 80\}$) and Table~\ref{tab:constrnew} ($n \in \{60, 62, 63, 91, 93\}$); all multipliers are palindromic except those of $[[62,7,14]]_2$, $[[60,19,10]]_2$, and $[[60,20,10]]_2$ (their defect symbols vanish identically, so the codes are self-orthogonal by the same Lemma~\ref{lem:circ}; the latter two come from the directed construction of Proposition~\ref{prop:directed}).

\begin{table}[!htb]
\centering
\caption{Explicit constructions of the seven certified record-breaking codes with $n \in \{56, 62, 73, 78, 80\}$.}
\label{tab:constr}
\footnotesize
\setlength{\tabcolsep}{4pt}
\begin{tabular}{@{}cL{0.075\textwidth}L{0.393\textwidth}L{0.393\textwidth}@{}}
\toprule
Code & $g(x)$ & $v_1(x)$ & $v_2(x)$ \\
\midrule
$[[73,10,15]]_2$ & $x^{9}+x^{6}+x^{5}+x^{2}+1$ & $x^{70}+x^{69}+x^{68}+x^{64}+x^{63}+x^{61}+x^{60}+x^{58}+x^{55}+x^{52}+x^{50}+x^{47}+x^{44}+x^{42}+x^{41}+x^{40}+x^{38}+x^{35}+x^{33}+x^{32}+x^{31}+x^{29}+x^{26}+x^{23}+x^{21}+x^{18}+x^{15}+x^{13}+x^{12}+x^{10}+x^{9}+x^{5}+x^{4}+x^{3}$ & $x^{72}+x^{71}+x^{69}+x^{68}+x^{64}+x^{61}+x^{57}+x^{55}+x^{49}+x^{46}+x^{43}+x^{38}+x^{35}+x^{30}+x^{27}+x^{24}+x^{18}+x^{16}+x^{12}+x^{9}+x^{5}+x^{4}+x^{2}+x$ \\
\midrule
$[[56,9,12]]_2$ & $x^{9}+x^{8}+x^{7}+x^{6}+x^{5}+x^{4}+x+1$ & $x^{55}+x^{54}+x^{53}+x^{52}+x^{51}+x^{50}+x^{49}+x^{48}+x^{47}+x^{46}+x^{45}+x^{44}+x^{33}+x^{28}+x^{23}+x^{12}+x^{11}+x^{10}+x^{9}+x^{8}+x^{7}+x^{6}+x^{5}+x^{4}+x^{3}+x^{2}+x+1$ & $x^{55}+x^{54}+x^{53}+x^{52}+x^{51}+x^{50}+x^{49}+x^{48}+x^{47}+x^{46}+x^{45}+x^{44}+x^{30}+x^{26}+x^{12}+x^{11}+x^{10}+x^{9}+x^{8}+x^{7}+x^{6}+x^{5}+x^{4}+x^{3}+x^{2}+x+1$ \\
\midrule
$[[62,16,11]]_2$ & $x^{16}+x^{13}+x^{12}+x^{7}+x^{4}+x^{3}+x^{2}+1$ & $x^{53}+x^{44}+x^{35}+x^{27}+x^{18}+x^{9}+1$ & $x^{60}+x^{55}+x^{54}+x^{52}+x^{51}+x^{48}+x^{47}+x^{39}+x^{38}+x^{24}+x^{23}+x^{15}+x^{14}+x^{11}+x^{10}+x^{8}+x^{7}+x^{2}$ \\
\midrule
$[[62,17,11]]_2$ & $x^{17}+x^{15}+x^{12}+x^{9}+x^{4}+x^{3}+x^{2}+1$ & $x^{60}+x^{59}+x^{56}+x^{52}+x^{49}+x^{48}+x^{46}+x^{44}+x^{40}+x^{38}+x^{36}+x^{35}+x^{27}+x^{26}+x^{24}+x^{22}+x^{18}+x^{16}+x^{14}+x^{13}+x^{10}+x^{6}+x^{3}+x^{2}$ & $x^{61}+x^{60}+x^{59}+x^{58}+x^{56}+x^{54}+x^{50}+x^{47}+x^{45}+x^{43}+x^{38}+x^{37}+x^{35}+x^{33}+x^{29}+x^{27}+x^{25}+x^{24}+x^{19}+x^{17}+x^{15}+x^{12}+x^{8}+x^{6}+x^{4}+x^{3}+x^{2}+x$ \\
\midrule
$[[78,13,14]]_2$ & $x^{13}+x^{12}+x^{11}+x^{7}+x^{4}+1$ & $x^{74}+x^{73}+x^{66}+x^{64}+x^{62}+x^{57}+x^{49}+x^{42}+x^{41}+x^{40}+x^{38}+x^{37}+x^{36}+x^{29}+x^{21}+x^{16}+x^{14}+x^{12}+x^{5}+x^{4}+1$ & $x^{74}+x^{72}+x^{71}+x^{66}+x^{62}+x^{54}+x^{44}+x^{34}+x^{24}+x^{16}+x^{12}+x^{7}+x^{6}+x^{4}+1$ \\
\midrule
$[[78,14,14]]_2$ & $x^{14}+x^{13}+x^{10}+x^{9}+x^{7}+x^{6}+x^{3}+1$ & $x^{76}+x^{75}+x^{68}+x^{66}+x^{65}+x^{62}+x^{61}+x^{59}+x^{56}+x^{55}+x^{53}+x^{50}+x^{48}+x^{47}+x^{45}+x^{42}+x^{40}+x^{39}+x^{38}+x^{36}+x^{33}+x^{31}+x^{30}+x^{28}+x^{25}+x^{23}+x^{22}+x^{19}+x^{17}+x^{16}+x^{13}+x^{12}+x^{10}+x^{3}+x^{2}$ & $x^{75}+x^{71}+x^{66}+x^{64}+x^{63}+x^{62}+x^{52}+x^{51}+x^{49}+x^{39}+x^{29}+x^{27}+x^{26}+x^{16}+x^{15}+x^{14}+x^{12}+x^{7}+x^{3}$ \\
\midrule
$[[80,7,16]]_2$ & $x^{7}+x^{5}+x^{2}+1$ & $x^{57}+x^{42}+x^{38}+x^{23}+1$ & $x^{79}+x^{75}+x^{72}+x^{71}+x^{70}+x^{68}+x^{65}+x^{63}+x^{59}+x^{58}+x^{54}+x^{53}+x^{50}+x^{49}+x^{46}+x^{45}+x^{43}+x^{41}+x^{39}+x^{37}+x^{35}+x^{34}+x^{31}+x^{30}+x^{27}+x^{26}+x^{22}+x^{21}+x^{17}+x^{15}+x^{12}+x^{10}+x^{9}+x^{8}+x^{5}+x$ \\
\bottomrule
\end{tabular}
\end{table}

\begin{table}[!htb]
\centering
\caption{Explicit constructions of the fourteen certified record-breaking codes with $n \in \{60, 62, 63, 91, 93\}$.}
\label{tab:constrnew}
\footnotesize
\setlength{\tabcolsep}{3pt}
\begin{tabular}{@{}cL{0.075\textwidth}L{0.393\textwidth}L{0.393\textwidth}@{}}
\toprule
Code & $g(x)$ & $v_1(x)$ & $v_2(x)$ \\
\midrule
$[[60,8,13]]_2$ & $x^{8}+x^{2}+1$ & $x^{59}+x^{58}+x^{57}+x^{56}+x^{55}+x^{54}+x^{53}+x^{52}+x^{51}+x^{50}+x^{49}+x^{48}+x^{46}+x^{40}+x^{20}+x^{14}+x^{12}+x^{11}+x^{10}+x^{9}+x^{8}+x^{7}+x^{6}+x^{5}+x^{4}+x^{3}+x^{2}+x+1$ & $x^{59}+x^{58}+x^{57}+x^{56}+x^{55}+x^{54}+x^{53}+x^{52}+x^{51}+x^{50}+x^{49}+x^{48}+x^{35}+x^{25}+x^{12}+x^{11}+x^{10}+x^{9}+x^{8}+x^{7}+x^{6}+x^{5}+x^{4}+x^{3}+x^{2}+x+1$ \\
\midrule
$[[60,12,12]]_2$ & $x^{9}+x^{8}+x^{3}+x^{2}+x+1$ & $x^{59}+x^{58}+x^{57}+x^{56}+x^{55}+x^{54}+x^{53}+x^{52}+x^{51}+x^{50}+x^{49}+x^{48}+x^{46}+x^{42}+x^{18}+x^{14}+x^{12}+x^{11}+x^{10}+x^{9}+x^{8}+x^{7}+x^{6}+x^{5}+x^{4}+x^{3}+x^{2}+x$ & $x^{59}+x^{58}+x^{57}+x^{56}+x^{55}+x^{54}+x^{53}+x^{52}+x^{51}+x^{50}+x^{48}+x^{45}+x^{15}+x^{12}+x^{10}+x^{9}+x^{8}+x^{7}+x^{6}+x^{5}+x^{4}+x^{3}+x^{2}+x$ \\
\midrule
$[[60,13,11]]_2$ & $x^{13}+x^{12}+x^{10}+x^{8}+x^{7}+x^{6}+x^{5}+1$ & $x^{59}+x^{58}+x^{57}+x^{56}+x^{55}+x^{54}+x^{53}+x^{52}+x^{51}+x^{50}+x^{49}+x^{48}+x^{47}+x^{45}+x^{30}+x^{15}+x^{13}+x^{12}+x^{11}+x^{10}+x^{9}+x^{8}+x^{7}+x^{6}+x^{5}+x^{4}+x^{3}+x^{2}+x+1$ & $x^{59}+x^{58}+x^{57}+x^{56}+x^{55}+x^{54}+x^{53}+x^{52}+x^{51}+x^{50}+x^{49}+x^{48}+x^{47}+x^{38}+x^{22}+x^{13}+x^{12}+x^{11}+x^{10}+x^{9}+x^{8}+x^{7}+x^{6}+x^{5}+x^{4}+x^{3}+x^{2}+x+1$ \\
\midrule
$[[60,13,12]]_2$ & $x^{13}+x^{12}+x^{10}+x^{8}+x^{7}+x^{6}+x^{5}+1$ & $x^{59}+x^{58}+x^{57}+x^{56}+x^{55}+x^{54}+x^{53}+x^{52}+x^{51}+x^{50}+x^{49}+x^{48}+x^{47}+x^{39}+x^{30}+x^{21}+x^{13}+x^{12}+x^{11}+x^{10}+x^{9}+x^{8}+x^{7}+x^{6}+x^{5}+x^{4}+x^{3}+x^{2}+x+1$ & $x^{59}+x^{58}+x^{57}+x^{56}+x^{55}+x^{54}+x^{53}+x^{52}+x^{51}+x^{50}+x^{49}+x^{48}+x^{46}+x^{42}+x^{18}+x^{14}+x^{12}+x^{11}+x^{10}+x^{9}+x^{8}+x^{7}+x^{6}+x^{5}+x^{4}+x^{3}+x^{2}+x+1$ \\
\midrule
$[[60,14,11]]_2$ & $x^{12}+x^{9}+x^{8}+x^{6}+x^{4}+x^{3}+x^{2}+x+1$ & $x^{59}+x^{58}+x^{57}+x^{56}+x^{55}+x^{54}+x^{53}+x^{52}+x^{51}+x^{50}+x^{49}+x^{48}+x^{47}+x^{34}+x^{26}+x^{13}+x^{12}+x^{11}+x^{10}+x^{9}+x^{8}+x^{7}+x^{6}+x^{5}+x^{4}+x^{3}+x^{2}+x$ & $x^{59}+x^{58}+x^{57}+x^{56}+x^{55}+x^{54}+x^{53}+x^{52}+x^{51}+x^{50}+x^{49}+x^{48}+x^{46}+x^{45}+x^{15}+x^{14}+x^{12}+x^{11}+x^{10}+x^{9}+x^{8}+x^{7}+x^{6}+x^{5}+x^{4}+x^{3}+x^{2}+x$ \\
\midrule
$[[60,15,11]]_2$ & $x^{13}+x^{12}+x^{10}+x^{8}+x^{7}+x^{6}+x^{5}+1$ & $x^{59}+x^{58}+x^{57}+x^{56}+x^{55}+x^{54}+x^{53}+x^{52}+x^{51}+x^{50}+x^{49}+x^{48}+x^{47}+x^{45}+x^{15}+x^{13}+x^{12}+x^{11}+x^{10}+x^{9}+x^{8}+x^{7}+x^{6}+x^{5}+x^{4}+x^{3}+x^{2}+x$ & $x^{59}+x^{58}+x^{57}+x^{56}+x^{55}+x^{54}+x^{53}+x^{52}+x^{51}+x^{50}+x^{49}+x^{47}+x^{44}+x^{16}+x^{13}+x^{11}+x^{10}+x^{9}+x^{8}+x^{7}+x^{6}+x^{5}+x^{4}+x^{3}+x^{2}+x$ \\
\midrule
$[[60,16,11]]_2$ & $x^{13}+x^{12}+x^{10}+x^{8}+x^{7}+x^{6}+x^{5}+1$ & $x^{59}+x^{58}+x^{57}+x^{56}+x^{55}+x^{54}+x^{53}+x^{52}+x^{51}+x^{50}+x^{49}+x^{48}+x^{32}+x^{28}+x^{12}+x^{11}+x^{10}+x^{9}+x^{8}+x^{7}+x^{6}+x^{5}+x^{4}+x^{3}+x^{2}+x$ & $x^{59}+x^{58}+x^{57}+x^{56}+x^{55}+x^{54}+x^{53}+x^{52}+x^{51}+x^{50}+x^{49}+x^{47}+x^{39}+x^{21}+x^{13}+x^{11}+x^{10}+x^{9}+x^{8}+x^{7}+x^{6}+x^{5}+x^{4}+x^{3}+x^{2}+x$ \\
\midrule
$[[62,7,14]]_2$$^{\rm c}$ & $x^{5}+x^{2}+1$ & $x^{62}+x^{61}+x^{60}+x^{59}+x^{58}+x^{57}+x^{56}+x^{55}+x^{54}+x^{53}+x^{52}+x^{51}+x^{50}+x^{34}+x^{29}+x^{13}+x^{12}+x^{11}+x^{10}+x^{9}+x^{8}+x^{7}+x^{6}+x^{5}+x^{4}+x^{3}+x^{2}+x$ & $x^{62}+x^{61}+x^{60}+x^{59}+x^{58}+x^{57}+x^{56}+x^{55}+x^{54}+x^{53}+x^{52}+x^{51}+x^{48}+x^{35}+x^{28}+x^{15}+x^{12}+x^{11}+x^{10}+x^{9}+x^{8}+x^{7}+x^{6}+x^{5}+x^{4}+x^{3}+x^{2}+x$ \\
\midrule
$[[63,7,14]]_2$ & $x^{6}+x+1$ & $x^{62}+x^{61}+x^{60}+x^{59}+x^{58}+x^{57}+x^{56}+x^{55}+x^{54}+x^{53}+x^{52}+x^{51}+x^{50}+x^{49}+x^{41}+x^{22}+x^{14}+x^{13}+x^{12}+x^{11}+x^{10}+x^{9}+x^{8}+x^{7}+x^{6}+x^{5}+x^{4}+x^{3}+x^{2}+x$ & $x^{62}+x^{61}+x^{60}+x^{59}+x^{58}+x^{57}+x^{56}+x^{55}+x^{54}+x^{53}+x^{52}+x^{51}+x^{50}+x^{39}+x^{24}+x^{13}+x^{12}+x^{11}+x^{10}+x^{9}+x^{8}+x^{7}+x^{6}+x^{5}+x^{4}+x^{3}+x^{2}+x$ \\
\midrule
$[[60,19,10]]_2$$^{\rm c}$ & $x^{16}+x^{15}+x^{12}+x^{11}+x^{9}+x^{7}+x^{6}+x^{5}+x^{3}+1$ & $x^{59}+x^{58}+x^{57}+x^{56}+x^{55}+x^{54}+x^{53}+x^{52}+x^{51}+x^{50}+x^{49}+x^{48}+x^{44}+x^{43}+x^{18}+x^{17}+x^{13}+x^{12}+x^{11}+x^{10}+x^{9}+x^{8}+x^{7}+x^{6}+x^{5}+x^{4}+x^{3}+x^{2}+x+1$ & $x^{59}+x^{58}+x^{57}+x^{56}+x^{55}+x^{54}+x^{53}+x^{52}+x^{51}+x^{50}+x^{49}+x^{48}+x^{33}+x^{28}+x^{13}+x^{12}+x^{11}+x^{10}+x^{9}+x^{8}+x^{7}+x^{6}+x^{5}+x^{4}+x^{3}+x^{2}+x+1$ \\
\midrule
$[[60,20,10]]_2$$^{\rm c}$ & $x^{18}+x^{15}+x^{14}+x^{10}+x^{7}+x^{4}+x^{3}+x^{2}+x+1$ & $x^{59}+x^{58}+x^{57}+x^{56}+x^{55}+x^{54}+x^{53}+x^{52}+x^{51}+x^{50}+x^{49}+x^{47}+x^{34}+x^{27}+x^{14}+x^{12}+x^{11}+x^{10}+x^{9}+x^{8}+x^{7}+x^{6}+x^{5}+x^{4}+x^{3}+x^{2}+x+1$ & $x^{59}+x^{58}+x^{57}+x^{56}+x^{55}+x^{54}+x^{53}+x^{52}+x^{51}+x^{50}+x^{49}+x^{48}+x^{47}+x^{45}+x^{44}+x^{17}+x^{16}+x^{14}+x^{13}+x^{12}+x^{11}+x^{10}+x^{9}+x^{8}+x^{7}+x^{6}+x^{5}+x^{4}+x^{3}+x^{2}+x+1$ \\
\midrule
$[[91,15,17]]_2$ & $x^{15}+x^{13}+x^{11}+x^{9}+x^{7}+x^{6}+x^{5}+x^{4}+x^{3}+x+1$ & $x^{88}+x^{87}+x^{79}+x^{77}+x^{74}+x^{71}+x^{70}+x^{65}+x^{61}+x^{60}+x^{58}+x^{57}+x^{56}+x^{55}+x^{53}+x^{52}+x^{50}+x^{48}+x^{46}+x^{45}+x^{43}+x^{41}+x^{39}+x^{38}+x^{36}+x^{35}+x^{34}+x^{33}+x^{31}+x^{30}+x^{26}+x^{21}+x^{20}+x^{17}+x^{14}+x^{12}+x^{4}+x^{3}+1$ & $x^{89}+x^{83}+x^{72}+x^{64}+x^{62}+x^{29}+x^{27}+x^{19}+x^{8}+x^{2}$ \\
\midrule
$[[93,15,17]]_2$ & $x^{15}+x^{10}+x^{9}+x^{8}+x^{6}+x^{4}+x^{2}+x+1$ & $x^{89}+x^{85}+x^{75}+x^{74}+x^{72}+x^{70}+x^{69}+x^{67}+x^{56}+x^{55}+x^{52}+x^{51}+x^{48}+x^{45}+x^{42}+x^{41}+x^{38}+x^{37}+x^{26}+x^{24}+x^{23}+x^{21}+x^{19}+x^{18}+x^{8}+x^{4}$ & $x^{90}+x^{85}+x^{80}+x^{78}+x^{75}+x^{70}+x^{57}+x^{56}+x^{48}+x^{45}+x^{37}+x^{36}+x^{23}+x^{18}+x^{15}+x^{13}+x^{8}+x^{3}+1$ \\
$[[93,16,17]]_2$ & $x^{16}+x^{15}+x^{14}+x^{10}+x^{8}+x^{6}+x^{3}+x^{2}+x+1$ & $x^{91}+x^{90}+x^{89}+x^{88}+x^{76}+x^{74}+x^{73}+x^{54}+x^{48}+x^{45}+x^{39}+x^{20}+x^{19}+x^{17}+x^{5}+x^{4}+x^{3}+x^{2}+1$ & $x^{75}+x^{66}+x^{64}+x^{53}+x^{50}+x^{43}+x^{40}+x^{29}+x^{27}+x^{18}$ \\
\bottomrule
\end{tabular}
\par\medskip
\footnotesize $^{\rm c}$Non-palindromic multipliers; self-orthogonal with identically vanishing defect symbol (Lemma~\ref{lem:circ}); the $[[60,19,10]]_2$ and $[[60,20,10]]_2$ codes were found by the directed construction of Proposition~\ref{prop:directed}.
\end{table}

\subsection{Additional certified members of the same entries}\label{app:members}
Each entry was certified independently (BZ on the additive dual plus an independently re-verified weight-$d$ witness). The gcd polynomial, the generator word, and the weight-$d$ witness of each member are listed in Table~\ref{tab:members} as $0$-indexed supports.

\begin{table}[!htb]
\centering
\caption{Additional certified members of the same entries.}
\label{tab:members}
\footnotesize
\setlength{\tabcolsep}{4pt}
\begin{tabular}{@{}ccL{0.115\textwidth}L{0.31\textwidth}L{0.24\textwidth}l@{}}
\toprule
Code & member & $g(x)=\gcd(a,b,x^{n}{-}1)$ & generator word & weight-$d$ witness & BZ; $d(\mathcal{C})$ \\
\midrule
$[[73,10,15]]_2$$^{\rm a}$ & second & $x^{10}+x^{9}+x^{8}+x^{7}+x^{5}+x^{3}+x+1$ & $\supp(v_1)=\{$3, 7, 10, 23, 24, 28, 45, 49, 50, 63, 66, 70$\}$\newline $\supp(v_2)=\{$0, 1, 2, 3, 4, 5, 8, 9, 13, 14, 15, 16, 18, 25, 27, 32, 33, 40, 41, 46, 48, 55, 57, 58, 59, 60, 64, 65, 68, 69, 70, 71, 72$\}$ & $\supp(a)=\{$3, 29, 36, 38, 40, 54, 64, 71, 72$\}$\newline $\supp(b)=\{$6, 17, 20, 29, 52, 54, 61, 66, 72$\}$ & 12\,587\,s; $20$ \\
\midrule
$[[73,10,15]]_2$ & third & $x^{10}+x^{9}+x^{8}+x^{7}+x^{5}+x^{3}+x+1$ & $\supp(a)=\{$1, 3, 6, 7, 8, 11, 14, 15, 19, 22, 24, 27, 34, 37, 38, 39, 40, 42, 44, 45, 46, 48, 49, 51, 55, 56, 57, 58, 59, 62, 67, 70, 71, 72$\}$\newline $\supp(b)=\{$1, 2, 6, 8, 9, 10, 11, 12, 24, 25, 27, 29, 31, 32, 33, 34, 49, 50, 52, 54, 56, 57, 58, 59, 71, 72$\}$ & $\supp(a)=\{$0, 7, 29, 33, 35, 46, 53, 57, 58, 65, 71$\}$\newline $\supp(b)=\{$7, 20, 27, 35, 37, 39, 46, 57, 65, 71$\}$ & 12\,601\,s; --- \\
\midrule
$[[73,10,15]]_2$ & fourth & $x^{10}+x^{9}+x^{2}+1$$^{\rm b}$ & $\supp(a)=\{$0, 2, 3, 4, 6, 11, 12, 17, 19, 23, 24, 26, 27, 34, 35, 40, 41, 43, 49, 56, 58, 59, 61, 65, 66, 67, 70, 71$\}$\newline $\supp(b)=\{$0, 1, 9, 11, 18, 19, 24, 26, 33, 34, 49, 51, 58, 59, 64, 66$\}$ & $\supp(a)=\{$6, 9, 21, 23, 24, 45, 55, 56, 57, 62, 68, 69$\}$\newline $\supp(b)=\{$21, 29, 34, 38, 56, 62, 68, 69$\}$ & 12\,600\,s; --- \\
\midrule
$[[78,13,14]]_2$ & second & $x^{13}+x^{12}+x^{11}+x^{7}+x^{4}+1$ & $\supp(a)=\{$12, 64, 65, 66, 67, 68, 72, 73, 74, 75$\}$\newline $\supp(b)=\{$5, 6, 11, 12, 15, 16, 18, 19, 21, 22, 25, 27, 30, 32, 34, 38, 39, 40, 41, 42, 43, 44, 45, 46, 47, 48, 51, 52, 53, 54, 55, 56, 59, 62, 63, 65, 67, 69, 70, 72, 73, 74$\}$ & $\supp(a)=\{$5, 7, 8, 11, 27, 43, 49, 64, 66, 68$\}$\newline $\supp(b)=\{$5, 7, 8, 11, 21, 25, 49, 50, 60, 66, 68$\}$ & 1\,570\,s; $18$ \\
\midrule
$[[78,14,14]]_2$ & second & $x^{14}+x^{13}+x^{10}+x^{9}+x^{7}+x^{6}+x^{3}+1$ & $\supp(a)=\{$0, 64, 67, 70, 71, 73, 74, 77$\}$\newline $\supp(b)=\{$5, 7, 8, 9, 13, 14, 16, 17, 21, 23, 24, 26, 27, 28, 31, 33, 34, 35, 36, 37, 42, 46, 48, 51, 53, 54, 56, 57, 58, 59, 62, 63, 64, 66, 69, 70, 74, 75, 76, 77$\}$ & $\supp(a)=\{$0, 6, 11, 17, 40, 64$\}$\newline $\supp(b)=\{$6, 14, 16, 17, 18, 25, 40, 49, 52, 53, 66$\}$ & 1\,830\,s; $18$ \\
\midrule
$[[78,14,14]]_2$ & third & (same as above) & $\supp(a)=\{$0, 64, 67, 70, 71, 73, 74, 77$\}$\newline $\supp(b)=\{$2, 4, 5, 7, 10, 13, 14, 16, 18, 19, 20, 21, 23, 28, 32, 33, 38, 39, 40, 41, 42, 45, 48, 52, 53, 56, 58, 60, 62, 64, 67, 68, 69, 71, 72, 74$\}$ & $\supp(a)=\{$3, 4, 21, 25, 32, 42, 43, 60, 64, 71$\}$\newline $\supp(b)=\{$4, 7, 11, 21, 25, 43, 46, 50, 60, 64$\}$ & 1\,862\,s; $20$ \\
\midrule
$[[80,7,16]]_2$ & second & $x^{7}+x^{5}+x^{2}+1$ & $\supp(a)=\{$0, 5, 19, 20, 21, 22, 23, 28, 33, 52, 57, 62, 63, 64, 65, 66$\}$\newline $\supp(b)=\{$0, 5, 19, 21, 22, 23, 30, 55, 62, 63, 64, 66$\}$ & $\supp(a)=\varnothing$\newline $\supp(b)=\{$1, 6, 11, 16, 21, 26, 31, 36, 41, 46, 51, 56, 61, 66, 71, 76$\}$ & 19\,041\,s; --- \\
\bottomrule
\end{tabular}
\par\medskip
\footnotesize $^{\rm a}$Palindromic form; the listed supports are $\supp(v_1), \supp(v_2)$.\quad
$^{\rm b}$A gcd polynomial different from the other members of this entry.
\end{table}

\subsection{Witnesses of the certified codes}\label{app:witnesses}
Logical operators $(a|b) \in \mathcal{C}^{\perp_{ts}} \setminus \mathcal{C}$ of weight $d$ are listed in Table~\ref{tab:witnesses} ($n \in \{56, 62, 73, 78, 80\}$) and Table~\ref{tab:witnessesnew} ($n \in \{60, 62, 63, 91, 93\}$) as $0$-indexed supports; membership and non-membership were independently re-verified. The period structure (Section~\ref{sec:lift}) is noted.

\begin{table}[!htb]
\centering
\caption{Weight-$d$ witnesses $(a|b)$ of the seven certified codes with $n \in \{56, 62, 73, 78, 80\}$.}
\label{tab:witnesses}
\footnotesize
\setlength{\tabcolsep}{4pt}
\begin{tabular}{@{}cclL{0.315\textwidth}L{0.315\textwidth}@{}}
\toprule
Code & $d$ & period structure & $\supp(a)$ & $\supp(b)$ \\
\midrule
$[[73,10,15]]_2$ & $15$ & primitive & $\supp(a)=\{$4, 11, 16, 17, 22, 29, 39, 50, 53, 56, 67$\}$ & $\supp(b)=\{$11, 22, 36, 42, 50, 53, 56, 64, 70$\}$ \\
\midrule
$[[56,9,12]]_2$$^{\rm a}$ & $12$ & pure-$X$, period $14$ & $\supp(a)=\{$1, 9, 11, 15, 23, 25, 29, 37, 39, 43, 51, 53$\}$ & $\supp(b)=\varnothing$ \\
\midrule
$[[62,16,11]]_2$ & $11$ & primitive & $\supp(a)=\{$6, 8, 15, 28, 38, 42, 47, 58, 59$\}$ & $\supp(b)=\{$6, 15, 33, 38, 41, 58$\}$ \\
\midrule
$[[62,17,11]]_2$ & $11$ & primitive & $\supp(a)=\{$1, 3, 16, 20, 21, 22, 30, 34, 35$\}$ & $\supp(b)=\{$16, 21, 22, 26, 58$\}$ \\
\midrule
$[[78,13,14]]_2$ & $14$ & primitive & $\supp(a)=\{$6, 7, 37, 44, 52, 58, 62, 66, 70, 73$\}$ & $\supp(b)=\{$7, 10, 16, 44, 52, 57, 63, 66, 70, 73$\}$ \\
\midrule
$[[78,14,14]]_2$$^{\rm b}$ & $14$ & period $39$ & $\supp(a)=\{$10, 16, 23, 29, 49, 55, 62, 68$\}$ & $\supp(b)=\{$8, 21, 23, 36, 47, 60, 62, 75$\}$ \\
\midrule
$[[80,7,16]]_2$$^{\rm c}$ & $16$ & period $20$ & $\supp(a)=\{$9, 11, 29, 31, 49, 51, 69, 71$\}$ & $\supp(b)=\{$7, 13, 27, 33, 47, 53, 67, 73$\}$ \\
\bottomrule
\end{tabular}
\par\medskip
\footnotesize $^{\rm a}$The base word $(x+x^9+x^{11},\,0)$ lies in $\mathcal{C}_{14}^{\perp_{ts}}\setminus\mathcal{C}_{14}$, and $d_Q(\mathcal{C}_{14}) = 3$ exactly.\quad
$^{\rm b}$The base word has $\swt = 7$ and lies in $\mathcal{C}_{39}^{\perp_{ts}}\setminus\mathcal{C}_{39}$; $d_Q(\mathcal{C}_{39}) = 7$ exactly.\quad
$^{\rm c}$The base word has $\swt = 4$ and lies in $\mathcal{C}_{20}^{\perp_{ts}}\setminus\mathcal{C}_{20}$; $d_Q(\mathcal{C}_{20}) = 4$ exactly.
\end{table}

\begin{table}[!htb]
\centering
\caption{Weight-$d$ witnesses $(a|b)$ of the fourteen certified codes with $n \in \{60, 62, 63, 91, 93\}$.}
\label{tab:witnessesnew}
\footnotesize
\setlength{\tabcolsep}{4pt}
\begin{tabular}{@{}cclL{0.315\textwidth}L{0.315\textwidth}@{}}
\toprule
Code & $d$ & period structure & $\supp(a)$ & $\supp(b)$ \\
\midrule
$[[60,8,13]]_2$ & $13$ & primitive & $\supp(a)=\{$25, 34, 42, 50, 59$\}$ & $\supp(b)=\{$4, 9, 13, 15, 20, 23, 31, 34, 42, 43, 50$\}$ \\
\midrule
$[[60,12,12]]_2$ & $12$ & primitive & $\supp(a)=\{$12, 14, 19, 27, 29, 31, 44, 46, 49, 59$\}$ & $\supp(b)=\{$34, 54, 59$\}$ \\
\midrule
$[[60,13,11]]_2$$^{\rm a}$ & $11$ & primitive & $\supp(a)=\{$11, 12, 21, 30, 37, 39, 44, 56$\}$ & $\supp(b)=\{$18, 37, 40, 56, 59$\}$ \\
\midrule
$[[60,13,12]]_2$$^{\rm b}$ & $12$ & pure-$X$, period $30$ & $\supp(a)=\{$0, 2, 4, 10, 22, 24, 30, 32, 34, 40, 52, 54$\}$ & $\supp(b)=\varnothing$ \\
\midrule
$[[60,14,11]]_2$ & $11$ & primitive & $\supp(a)=\{$33, 38, 39, 59$\}$ & $\supp(b)=\{$2, 5, 11, 27, 30, 54, 55, 59$\}$ \\
\midrule
$[[60,15,11]]_2$ & $11$ & primitive & $\supp(a)=\{$19, 23, 27, 29, 39, 45, 48, 49, 57, 59$\}$ & $\supp(b)=\{$27, 34, 59$\}$ \\
\midrule
$[[60,16,11]]_2$ & $11$ & primitive & $\supp(a)=\{$39, 59$\}$ & $\supp(b)=\{$4, 5, 13, 18, 29, 41, 44, 57, 58$\}$ \\
\midrule
$[[62,7,14]]_2$$^{\rm c}$ & $14$ & primitive & $\supp(a)=\{$0, 24, 36, 41, 45, 46, 47, 54, 56, 60$\}$ & $\supp(b)=\{$35, 48, 49, 61$\}$ \\
\midrule
$[[63,7,14]]_2$ & $14$ & primitive & $\supp(a)=\{$18, 31, 32, 47, 62$\}$ & $\supp(b)=\{$3, 8, 23, 29, 32, 37, 47, 51, 55, 57, 60, 62$\}$ \\
\midrule
$[[60,19,10]]_2$$^{\rm d}$ & $10$ & pure-$Z$, period $6$ & $\supp(a)=\varnothing$ & $\supp(b)=\{$4, 10, 16, 22, 28, 34, 40, 46, 52, 58$\}$ \\
\midrule
$[[60,20,10]]_2$$^{\rm e}$ & $10$ & pure-$Z$, period $30$ & $\supp(a)=\varnothing$ & $\supp(b)=\{$13, 21, 25, 27, 29, 43, 51, 55, 57, 59$\}$ \\
\midrule
$[[91,15,17]]_2$ & $17$ & primitive & $\supp(a)=\{$8, 10, 12, 18, 37, 43, 45, 51, 66, 75, 85$\}$ & $\supp(b)=\{$0, 2, 8, 18, 26, 37, 41, 43, 63, 66, 79, 85$\}$ \\
\midrule
$[[93,15,17]]_2$ & $17$ & primitive & $\supp(a)=\{$6, 11, 14, 28, 30, 38, 50, 52, 78, 79, 87, 88$\}$ & $\supp(b)=\{$20, 38, 49, 52, 58, 59, 73$\}$ \\
$[[93,16,17]]_2$ & $17$ & primitive & $\supp(a)=\{$0, 1, 4, 34, 37, 48, 54, 62, 68, 69$\}$ & $\supp(b)=\{$0, 1, 4, 7, 34, 37, 47, 48, 53, 54, 62, 63, 68, 69, 71, 77, 89$\}$ \\
\bottomrule
\end{tabular}
\par\medskip
\footnotesize $^{\rm a}$For $[[60,13,11]]_2$ the exact stabilizer minimum $16$ exceeds the exact dual minimum $11$, so $d_Q = 11$ also follows from the two BZ values alone, without any witness; the listed witness confirms it explicitly.\quad
$^{\rm b}$The base word $\{0, 2, 4, 10, 22, 24\}$ has weight $6$ and lies in $\mathcal{C}_{30}^{\perp_{ts}}\setminus\mathcal{C}_{30}$; $d_Q(\mathcal{C}_{30}) = 6$ exactly, and the listed witness is its period-$30$ lift. A pure-$Z$ weight-$12$ logical also exists, since $(x^5-1) \mid g v_1$ (Corollary~\ref{cor:div}).\quad
$^{\rm c}$Lifted weight-$14$ logicals from the $m = 31$ ancestor ($d_Q(\mathcal{C}_{31}) = 7$ exactly) also exist; the listed witness is a primitive one, certifying equality at both branches.\quad
$^{\rm d}$The listed witness is $x^4\,(x^{60}-1)/(x^6-1)$, equivalently the period-$30$ lift of the weight-$5$ ancestor word $\{4, 10, 16, 22, 28\} \in \mathcal{C}_{30}^{\perp_{ts}}\setminus\mathcal{C}_{30}$; $d_Q(\mathcal{C}_{30}) = 5$ exactly, so $[[60,19,10]]_2$ attains lifting equality (Computational Theorem~\ref{thm:equal}).\quad
$^{\rm e}$The listed witness is the period-$30$ lift of the weight-$5$ ancestor word $\{13, 21, 25, 27, 29\} \in \mathcal{C}_{30}^{\perp_{ts}}\setminus\mathcal{C}_{30}$; $d_Q(\mathcal{C}_{30}) = 5$ exactly, so $[[60,20,10]]_2$ attains lifting equality (Computational Theorem~\ref{thm:equal}).
\end{table}

\subsection{Data for the evidence-level candidates}\label{app:candidates}
The generator data of the $n \in \{127, 130, 153\}$ candidates are collected in Table~\ref{tab:candidates}; the $[[34,11,6]]_2$ candidate is described below; the odd-length sweep candidates are collected in Table~\ref{tab:sweep}.

\begin{table}[!htb]
\centering
\caption{Generator data of the $n \in \{127, 130, 153\}$ evidence-level candidates. The $n = 127$ and $n = 153$ candidates are generated by the shifts of $(a,b)$ in $\Ftwo[x]/(x^{n}{-}1)$; supports are $0$-indexed.}
\label{tab:candidates}
\footnotesize
\setlength{\tabcolsep}{4pt}
\begin{tabular}{@{}cL{0.50\textwidth}L{0.33\textwidth}@{}}
\toprule
Code & generator word $(a,b)$ & best witness $(a|b)$ \\
\midrule
$[[127,21,{\le}21]]_2$ & shifts of $(a,b)$, self-orthogonal ($r=0$)\newline $\supp(a)=\{$0, 106, 107, 108, 109, 112, 113, 114, 116, 118, 120, 121, 122, 123, 126$\}$\newline $\supp(b)=\{$2, 6, 8, 10, 11, 15, 18, 20, 21, 23, 27, 28, 31, 33, 34, 35, 38, 44, 48, 50, 52, 53, 56, 57, 58, 59, 60, 62, 63, 66, 67, 69, 71, 73, 75, 77, 78, 83, 84, 85, 87, 89, 90, 91, 92, 93, 95, 96, 104, 106, 107, 108, 109, 111, 114, 115, 116, 119, 121, 122, 124, 125$\}$ & $\supp(a)=\{$11, 17, 50, 61, 63, 69, 77, 85, 103, 105, 116, 126$\}$\newline $\supp(b)=\{$5, 7, 9, 11, 17, 23, 34, 50, 61, 63, 81, 85, 93, 96, 103, 107$\}$ \\
\midrule
$[[153,17,{\le}27]]_2$ & shifts of $(a,b)$, self-orthogonal ($r=0$)\newline $\supp(a)=\{$0, 136, 140, 144, 146, 147, 151, 152$\}$\newline $\supp(b)=\{$1, 2, 3, 7, 8, 9, 15, 18, 19, 20, 21, 22, 24, 29, 30, 31, 32, 33, 34, 35, 38, 39, 41, 42, 43, 45, 46, 47, 48, 49, 50, 51, 53, 59, 61, 63, 67, 70, 72, 73, 77, 81, 82, 84, 87, 92, 94, 96, 97, 103, 104, 106, 107, 108, 110, 112, 115, 116, 119, 123, 125, 126, 127, 129, 131, 132, 135, 136, 138, 139, 142, 143, 145, 146, 147, 148, 150, 152$\}$ & $\supp(a)=\{$0, 1, 2, 4, 8, 13, 16, 26, 32, 51, 52, 53, 55, 59, 64, 67, 77, 83, 102, 103, 104, 106, 110, 115, 118, 128, 134$\}$\ (pure-$X$)\newline $\supp(b)=\varnothing$ \\
\midrule
$[[153,16,{\le}30]]_2$ & shifts of $(a,b)$\newline $\supp(a)=\{$1, 136, 137, 140, 141, 144, 145, 146, 148, 151$\}$\newline $\supp(b)=\{$2, 3, 5, 6, 8, 9, 11, 12, 13, 14, 17, 19, 20, 21, 22, 26, 27, 28, 30, 31, 32, 33, 34, 35, 37, 39, 42, 43, 46, 47, 48, 49, 50, 51, 55, 56, 57, 58, 65, 66, 69, 73, 75, 76, 79, 80, 82, 86, 89, 90, 91, 97, 99, 100, 101, 102, 104, 107, 108, 114, 116, 118, 121, 122, 126, 127, 132, 134, 136, 139, 141, 143, 146, 147, 148$\}$ & $\supp(a)=\{$1, 5, 15, 17, 22, 32, 42, 55, 61, 65, 84, 100, 107, 121, 142, 150$\}$\newline $\supp(b)=\{$16, 17, 18, 22, 29, 32, 33, 35, 39, 55, 61, 64, 72, 74, 94, 100, 103, 120, 121, 122, 149$\}$ \\
\midrule
$[[130,16,{\le}22]]_2$ & $r=2$ extension, $g(x)=x^{15}+x^{12}+x^{10}+x^{5}+x^{3}+1$ & $\supp(a)=\{$1, 6, 32, 37, 46, 54, 57, 59, 79, 88, 108, 110, 113, 121$\}$\newline $\supp(b)=\{$1, 7, 31, 37, 46, 49, 59, 64, 79, 80, 87, 88, 103, 108, 118, 121$\}$ \\
\bottomrule
\end{tabular}
\end{table}

\subsubsection{The $[[34,11,6]]_2$ candidate}
\begin{sloppypar}
Symplectic extension (Theorem~\ref{thm:ext}, $r=2$) of a QC ingredient at length $33$. Ingredient generator $(a_0 \mid b_0)$: $\supp(a_0) = \{0,23,24,28,32\}$, $\supp(b_0) = \{0,2,4,7,8,12,14,16,17,18,19,20,24,27,30\}$, with $\deg\gcd(a_0,b_0,x^{33}{-}1) = 10$; the full QC span has dimension $23$, and the ingredient used is a $22$-dimensional subcode of this span.

The stabilizer matrix is $[\,S \mid c_a \mid c_b\,]$ augmented by one unit row, where $S$ is a basis ($22$ rows) of that subcode, the extension columns are $c_a = e_2$, $c_b = e_1$ over these basis rows, and the unit row is $(a_3 \mid b_3)$ (no extension bit) with $\supp(a_3) = \{1,22,25,27,28,29,31\}$, $\supp(b_3) = \{0,2,4,5,6,9,11,12,14,17,18,19,21,23,24,25,26,27,28,29,30,31,32\}$; the full $23 \times 68$ generator matrix is archived in the artifact package. The construction has a weight-$6$ witness; its exact distance is not certified.
\end{sloppypar}

\subsubsection{Odd-length sweep candidates}
All are palindromic QC codes $\langle(g v_1, g v_2)\rangle$ (one representative per entry; constructions of the other survivors are archived). The generator polynomials and the best witnesses are listed in Table~\ref{tab:sweep}.

\begin{table}[!htb]
\centering
\caption{Odd-length sweep candidates: generator $g(x)$, palindromic multipliers $v_1(x), v_2(x)$, and best witnesses $(a|b)$ ($0$-indexed supports).}
\label{tab:sweep}
\footnotesize
\setlength{\tabcolsep}{4pt}
\begin{tabular}{@{}cL{0.09\textwidth}L{0.33\textwidth}L{0.145\textwidth}L{0.24\textwidth}@{}}
\toprule
Code & $g(x)$ & $v_1(x)$ & $v_2(x)$ & weight-$w$ witness \\
\midrule
$[[93,5,{\le}20]]_2$ & $x^{5}+x^{4}+x^{3}+x+1$ & $x^{91}+x^{90}+x^{88}+x^{85}+x^{81}+x^{79}+x^{78}+x^{77}+x^{73}+x^{72}+x^{70}+x^{68}+x^{66}+x^{65}+x^{64}+x^{61}+x^{59}+x^{58}+x^{57}+x^{55}+x^{50}+x^{43}+x^{38}+x^{36}+x^{35}+x^{34}+x^{32}+x^{29}+x^{28}+x^{27}+x^{25}+x^{23}+x^{21}+x^{20}+x^{16}+x^{15}+x^{14}+x^{12}+x^{8}+x^{5}+x^{3}+x^{2}+1$ & $x^{91}+x^{70}+x^{58}+x^{35}+x^{23}+x^{2}$ & $\supp(a)=\{$4, 12, 16, 17, 20, 34, 35, 36, 46, 47, 51, 71, 76, 80, 86$\}$\newline $\supp(b)=\{$4, 12, 20, 32, 34, 46, 56, 58, 63, 71, 75, 76$\}$ \\
\midrule
\bottomrule
\end{tabular}
\end{table}

The zero-model expectation $E(w) = \binom{n}{w} 3^w / 2^{2n-\dim \mathcal{C}^{\perp_{ts}}}$ frames the uncertainty: $[[127,21,\cdot]]_2$ has $E(20)\approx 4.3$, $E(21)\approx 66$ (undecidable by ISD); $[[130,16,\le 22]]_2$ has $E(21)\approx 0.44$, $E(22)\approx 6.6$ (consistent with $d_Q=22$); $[[153,16,\cdot]]_2$ has $E(26)\approx 2.4$ (typical value near $26$--$27$).

\subsubsection{Even-length archive: certified and refuted cases}
The even-length archive contains three cases with different outcomes. $[[60,20,10]]_2$ is certified (BZ, $505.1$\,s; Tables~\ref{tab:constrnew}--\ref{tab:witnessesnew}); $[[80,7,16]]_2$ has a second certified member (BZ, $19\,041$\,s; Table~\ref{tab:members}); and the $[[72,6]]_2$ case is refuted with certified $d_Q = 12$ (Remark~\ref{rem:72candidate}). For the refuted case, the quantum code is the $T$-image of the recorded generator (Proposition~\ref{prop:rev}), whose generator rows are $(x^i\, g v_1 \mid x^{-i}\, \rho(g v_2))$, $i = 0, \dots, 71$; the complete generator data of the refuted case is included in the artifact package for the reproducibility of Remark~\ref{rem:72candidate}.

\pagebreak[4]
\begin{remark}\label{rem:72candidate}
This archived example illustrates two independent checks. \emph{Structural failure.} The generator $(a,b) = (g f_1, g f_2)$ at $n = 72$ with $g(x) = x^6+x^5+x^4+x^2+x+1$ is \emph{not} self-orthogonal under $\omega$; the archived record itself already prints the failed containment test. Its defect symbol satisfies $\deg\gcd(P, x^{72}-1) = 12$, so the symplectic Gram rank is $60$ and the recorded ``$[[72,6,16]]$'' is void as a stabilizer code, whatever its distance. By Proposition~\ref{prop:rev} the code is self-orthogonal under $\omega_\rho$, hence $\mathcal{C}_T = T(\mathcal{C})$ is a genuine stabilizer code with $\dim \mathcal{C}_T = 66$ and $k = 6$. \emph{Distance failure.} The two single-component sections of $\mathcal{C}_T^{\perp_{ts}}$ are exactly the cyclic $[72,6,18]_2$ codes, namely the annihilator ideals of $\rho(a)$ and of $b$. The divisibility $(x^4-1) \mid a$ yields pure-$Z$ words of weight $18$ (Corollary~\ref{cor:div}), and a $2\,586$-iteration Lee--Brickell screen found no logical word below weight $18$, giving an initial evidence-level assessment $d_Q \le 18$ (table value $15$). Exact Brouwer--Zimmermann certification of the additive dual ($78$ dimensions, $15\,201$\,s) proved $d(\mathcal{C}_T^{\perp_{ts}}) = 12$. A targeted information-set search then produced an explicit weight-$12$ logical operator, $\supp(a) = \{9,10,21,22,33,34,45,46,57,58,69,70\}$, $\supp(b) = \{9,21,33,45,57,69\}$, re-verified outside $\mathcal{C}_T$ by rank augmentation. Hence $d_Q = 12$, below the table value $15$, and the candidate is refuted as a record. Self-orthogonality must be verified before any distance claim, witnesses must be validated on $T(\mathcal{C})$ itself (cf.\ Example~\ref{ex:60}), and one-sided ISD screening cannot replace exact computation.
\end{remark}

\subsection{Artifacts and the $n=153$ census}\label{app:artifacts}
All artifacts (Magma scripts and logs, ISD logs, witness files, the $9\,620$-candidate even-$n$ archive, the $4\,700$-candidate odd-$n$ archive, family-calibration data) are archived with the project. A curated package containing every certification script, the generator matrices of all certified codes, all witness files, and all BZ logs, together with SHA-256 manifests and a one-click verification suite, is available from the authors on request. The even-$n$ sweep covered $n \in \{42,\dots,90\}$. The odd-$n$ sweep covered $n \in \{45,\dots,95\}$ odd, excluding $n = 47, 53, 59, 61, 67, 71, 79, 83, 87$, which admit no divisor degrees in range. There were 265 survivors of the 5-second screen, and 27 finalists in five entries after 180\,s deep screening: three further $[[73,10,15]]_2$ members (BZ-certified at $12\,587$--$12\,601$\,s each; Appendix~\ref{app:members}), the certified $[[91,15,17]]_2$, $[[93,15,17]]_2$, and $[[93,16,17]]_2$ codes (BZ, $1\,625\,244.2$\,s, $1\,850\,604.4$\,s, and $1\,996\,974.5$\,s), and the remaining odd-length candidate entry of Table~\ref{tab:evidence}, whose full BZ certification (dual dimensions $\ge 98$, targets $17$--$20$) exceeds the present compute budget. Low-cluster $n=153$ witnesses examined: member m001 (pure-$X$, period $17$, $\supp(a) = \{8{+}17j : 0\le j\le 8\}$); members m018, m028 ($(1,1)$-type, period $9$, $\supp(a) = \supp(b) = \{6{+}9j : 0\le j\le 16\}$); member m003 (pure-$X$, period $51$, base weight $6$); member m020 (mixed, period $51$, base $\swt = 5$); member m022 (pure-$X$, period $51$, base weight $4$).

\end{document}